
\documentclass[sn-mathphys]{sn-jnl}
\jyear{2022}

\usepackage{physics}
\usepackage{amsthm,thmtools}
\usepackage{amssymb}

\theoremstyle{thmstyleone}
\newtheorem{theorem}{Theorem}
\newtheorem{corol}{Corollary}
\newtheorem{lemma}{Lemma}
\newtheorem{definition}{Definition}
\newtheorem{remark}{Remark}

\renewcommand{\vu}[1]{\boldsymbol{#1}}

\raggedbottom


\newcommand{\upd}[1]{^\mathrm{#1}}
\newcommand{\ind}[1]{_\mathrm{#1}}


\newcommand{\banachspace}{B(\mathbb{R})}
\newcommand{\banachspaceT}{B\ind{T}(\mathbb{R})}
\newcommand{\banachspaceplus}{B_+(\mathbb{R})}

\title[Existence and Uniqueness for GHD equation]{Existence and Uniqueness of Solutions to the Generalized Hydrodynamics Equation}

\author*[1]{\fnm{Friedrich} \sur{H\"ubner}}\email{friedrich.huebner@kcl.ac.uk}

\author[1]{\fnm{Benjamin} \sur{Doyon}}\email{benjamin.doyon@kcl.ac.uk}

\affil*[1]{\orgdiv{Department of Mathematics}, \orgname{King’s College London}, \orgaddress{\street{Strand}, \city{London}, \postcode{WC2R 2LS}, \country{UK}}}

\abstract{
The generalized hydrodynamics (GHD) equation is the equivalent of the Euler equations of hydrodynamics for integrable models. Systems of hyperbolic equations such as the Euler equations usually develop shocks and are plagued by problems of uniqueness. We establish for the first time the existence and uniqueness of solutions to the full GHD equation and the absence of shocks, from a large class of initial conditions with bounded occupation function. We assume only absolute integrability of the two-body scattering shift. In applications to quantum models of fermionic type, this includes all commonly used physical initial states, such as locally thermal states and zero-entropy states. We show in particular that differentiable initial conditions give differentiable solutions at all times and that weak initial conditions such as the Riemann problem have unique weak solutions which preserve entropy. For this purpose, we write the GHD equation as a new fixed-point problem (announced in a companion paper). We show that the fixed point exists, is unique, and is approached, under an iterative solution procedure, in the Banach topology on functions of momenta.
}

\keywords{Generalized hydrodynamics, Integrable systems, Hyperbolic conservation laws, Existence and uniqueness of solutions}

\begin{document}

\maketitle

\tableofcontents

\section{Introduction}
Generalized hydrodynamics (GHD) was introduced as a hydrodynamic theory to describe the large-scale behaviours of integrable systems~\cite{PhysRevX.6.041065,bertini2016transport} (for overviews see~\cite{BenGHD,Bastianello_2022,ESSLER2023127572}). Integrable models are special fine-tuned systems which possess a large number of local conserved quantities, proportional to the system's size, and a mathematical structure which often allows one to find exact expressions for physical quantities of interest. Integrable models exist in all dimensions, but, up to now, the most powerful results are found in one-dimensional systems, on which we concentrate in this paper.

In typical non-integrable systems, it is believed that only a few local conserved quantities exist. At large scales, under the assumption of local thermodynamic relaxation, the dynamics of such systems can be described by the corresponding continuity equations: these are the conventional Euler-scale equations of hydrodynamics \cite{spohn1991large}.
For integrable models, one adapts these principles to all available conserved quantities, of which there are infinitely many in the large-scale limit: this is GHD \cite{PhysRevX.6.041065,bertini2016transport}. The resulting infinite system of conservation equations can be recast into an equation describing the evolution of quasi-particle densities in a certain phase space. Each quasi-particle propagates with an effective velocity, a modification of its bare velocity due to the interaction with other quasi-particles. Quasi-particles may be interpreted as ``stable excitations", and are in correspondence with the asymptotically stable objects of the scattering theory of the model. In fact, the GHD equations are an extension to quantum and classical many-body integrable systems of all types, of the kinetic equations for soliton gases \cite{el2003thermodynamic} (see e.g.~the reviews \cite{el2021soliton,suret2023soliton}), where the quasi-particles are the solitons of an integrable partial differential equations. GHD equations have been extended to include hydrodynamic diffusion~\cite{hübner2024diffusivehydrodynamicslongrangecorrelations,PhysRevLett.121.160603,10.21468/SciPostPhys.6.4.049} and dispersion \cite{DeNardisHigher_2023}, external potentials~\cite{10.21468/SciPostPhys.2.2.014,PhysRevLett.122.240606} and much more~\cite{PhysRevLett.123.130602,10.21468/SciPostPhys.15.4.136,PhysRevB.102.161110,biagetti2024generalisedbbgkyhierarchynearintegrable}. For simplicity we will drop the prefix ``quasi", as the specific physical interpretation does not concern us.

The GHD equation describes the evolution of the density $\rho_{\rm p}(t,x,p)$ of particles per unit momentum and per unit distance, at momentum $p$, position $x$ and time $t$:
\begin{align}
    \partial_t \rho_{\rm p}(t,x,p) + \partial_x (v\upd{eff}(t,x,p)\rho_{\rm p}(t,x,p)) = 0.\label{equ:intro_GHD_conservation}
\end{align}
The effective velocity solves the following linear integral equation:
\begin{align}
    v\upd{eff}(t,x,p) &= v(p) + \int\dd{q}\varphi(p,q) \rho_{\rm p}(t,x,q) (v\upd{eff}(t,x,q)-v\upd{eff}(t,x,p)).\label{equ:intro_GHD_conservation_velocity}
\end{align}
The bare velocity $v(p)$ and scattering shift $\varphi(p,q)$ are model specific. The later is the displacements particles incur at two-body scattering events, and fully encodes the interaction.
We choose the momentum variable $p$ to take values in $\mathbb R$; in general, $p$ is a spectral parameter and may take values in a more general manifold, see e.g.~\cite{doyon2018exact} and reviews in \cite{Bastianello_2022}. We will state our precise mathematical setup for \eqref{equ:intro_GHD_conservation}, \eqref{equ:intro_GHD_conservation_velocity} in Section \ref{sectsummary}.

An important question is as to the solution of the GHD equation for $\rho(t,x,p)$, seen as an Cauchy problem with initial condition $\rho(0,x,p)$. For conventional Euler-type equations it is a non-trivial problem to establish the existence and uniqueness of the solution even for finite times; typically shocks eventually develop, where the solution becomes weak and is not unique anymore, see for instance \cite{Bressan2013}. But the GHD equation is very special, as it satisfies a “linear degeneracy” condition \cite{FERAPONTOV1991112,el2011kinetic,pavlov2012generalized,BenGHD}. In finite systems of hyperbolic equations, this condition is well known to preclude the formation of true shocks \cite{lindeg1,lindeg2}. Restricting to initial conditions supported on a finite number of values of $p$, the GHD equation reduces to such a finite system that is of semi-Hamiltonian type, and hodographic and other methods can be used \cite{FERAPONTOV1991112,el2011kinetic,pavlov2012generalized,PhysRevLett.119.220604}. One may therefore expect similar statements to hold for the full GHD equation, but in this infinite-dimensional case we are not aware of a general theory (although the Hamiltonian structure has recently been developed \cite{vergallo2024hamiltonian,bonnemain2024hamiltonian}). Yet, it has been observed that shocks do not appear in the partitioning protocol \cite{PhysRevX.6.041065}, and similar no-shock effects are seen at zero entropy \cite{PhysRevLett.119.195301}.

The GHD equation was reformulated in various ways as systems of nonlinear integral equations, where time appears explicitly -- thus, the Cauchy problem is recast into the problem of determining the space of solutions to these nonlinear integral equations. One reformulation ~\cite{PhysRevLett.119.220604} (supplementay material -- Continuum Limit) gives $\rho_{\rm p}(t,x,\cdot)$ for any $(x,t)$ as the solution to a set of integral equations with $\rho_{\rm p}(0,\cdot,\cdot)$ as an input. However, it requires special initial conditions, those that are invertible as functions of $x$. Another \cite{DOYON2018570} gives $\rho_{\rm p}(t,\cdot,\cdot)$ on any time slice $t$ as a solution to a different set of integral equations, with $\rho_{\rm p}(0,\cdot,\cdot)$ as an input. It only requires that the state be asymptotically flat in space and is based on the GHD characteristics. It can be interpreted as the result of a state-dependent change of spatial coordinates that maps the GHD equation to the Liouville phase-space conservation equation. Numerical analysis shows that a recursive solution to the integral equations converges in many cases \cite{DOYON2018570,10.21468/SciPostPhysCore.3.2.016}. However, rigorous analyses of the integral equations of \cite{PhysRevLett.119.220604} and  \cite{DOYON2018570} have not been given.

The aim of this paper is to rigorously establish the existence and uniqueness of solutions to the Cauchy problem for the GHD equation. This is based on a novel reformulation of the GHD equation as a system of integral equations for $\rho_{\rm p}(t,x,\cdot)$ taking $\rho_{\rm p}(0,\cdot,\cdot)$ as an input, which is explained in detail in our companion paper~\cite{hübner2024newquadraturegeneralizedhydrodynamics}. It is a fixed-point problem on the Banach space of bounded functions of momentum, which is contracting under certain general conditions. Therefore, the solution for $\rho_{\rm p}(t,x,\cdot)$, given $\rho_{\rm p}(0,\cdot,\cdot)$ and $x,t$, exists and is unique. The derivation uses similar ideas as the GHD characteristics approach, but goes further by analyzing ``height fields" associated to the conserved densities. We find that
\begin{itemize}
    \item The solution to the GHD equation exists at all times and is unique for a large class of initial conditions.
    \item $r$ times differentiable initial conditions give rise to $r$ times differentiable solutions, thus {\em no discontinuity of any type appear for all times}.
    \item Discontinuous initial conditions have a {\em unique weak solution}. A weak solution necessitates a reformulation of the differential equation \eqref{equ:intro_GHD_conservation} in order to make sense of the derivatives. We use a simple re-reformulation in terms of height fields, tailored to systems of conservation laws.
    \item Entropy is conserved throughout time, including when discontinuities are present.
\end{itemize}
In particular, from the above, no shock may appear or be sustained. This holds even with a discontinuous initial condition. Discontinuities may be carried over time, but they are contact discontinuities instead of shocks (as first proposed in \cite{PhysRevX.6.041065}), as entropy is conserved. Further, the resulting weak solution is unique: there is no need to appeal to extra physical conditions to establish how they propagate \cite{Bressan2013}.

The class of GHD equation we consider includes that expected to describe the (repulsive) Lieb-Liniger gas \cite{lieb1963} that models one-dimensional cold atomic gases realised in experiments \cite{Bouchoule_2022}, but excludes those from most soliton gases \cite{el2021soliton,suret2023soliton}.

The paper is structured as follows. In Section \ref{sectsummary} we give the precise statement of our main theorems concerning the existence and uniqueness of the solution to the GHD equations, explicitly stating the conditions and framework. In Section \ref{sectoverview} we present a non-rigorous discussion of GHD and the new fixed-point problem. This serves to explain the main aspects of the statements and proofs. In Section \ref{sec:basics} we expand on the full mathematical setup and establish basic properties of the dressing operation. In Section \ref{sec:fixedpoint} we construct the fixed-point problem and proves its main properties. In Section \ref{sec:existenceuniqueness} we use these results to show the main theorems of existence and uniqueness. In Section \ref{sec:consequences} we establish some basic consequences of the results, including entropy conservation, and finally in Section \ref{sec:conclu} we give our concluding remarks.

\section{Setup and summary of the main results}\label{sectsummary}
We first give an overview over our setup and main results.

\subsection{Setup and assumptions}\label{ssectassumptions}

For simplicity of the discussion, momentum variables $p,q,\ldots$ take values in $\mathbb R$ on which we take the Lebesgue measure -- and we will omit the integration range when the integral is on $\mathbb R$. However there is no difficulty in adapting our results to any other measure space instead of $(\dd{p},\mathbb R)$, including cases with many particle species or with a Brillouin zone (a finite domain for the momentum variables) -- see discussions in \cite{doyon2018exact,10.21468/SciPostPhys.6.4.049}. We also consider the problem on spatial variables $x\in\mathbb R$; it is not difficult to adapt our discussion to other situations such as finite intervals with periodic conditions.

We assume given a function $\varphi:\mathbb R^2\to\mathbb R$ (scattering shift) and a function $v:\mathbb R\to\mathbb R$ (velocity), together characterising the model; as well as a function $\rho\ind{p}(0,\cdot,\cdot):\mathbb R^2\to\mathbb R$, the initial particle density. We also assume, without further stating it, that $\varphi$ is Lebesgue measurable, that $\rho\ind{p}(0,x,\cdot)$ is Lebesgue measurable for every $x\in\mathbb R$, and that $\varphi(p,q)$ and $\varphi(p,q)\rho_{\rm p}(0,x,q)$ are absolutely integrable (in $\mathbb R$) on $q\in\mathbb R$ for every $p,x\in\mathbb R$. We also require throughout that
\begin{equation}\label{equ:supvarphi}
    \sup_{p\in\mathbb R} \int\dd{q}\abs{\varphi(p,q)}<\infty.
\end{equation}

The non-trivial conditions on the initial data are most easily expressed in terms of the initial data, $t=0$, for the so-called occupation function $n(t,x,p)$. The occupation function, for given $\rho\ind{p}(t,x,p)$ such that $\varphi(p,q)\rho_{\rm p}(t,x,q)$ is absolutely integrable on $q\in\mathbb R$, is defined by: 
\begin{align}
    n(t,x,p) &= \frac{\rho\ind{p}(t,x,p)}{\rho\ind{s}(t,x,q)}
    \label{equ:summaryn}
\end{align}
where we define the ``state density"
\begin{equation}
    \rho\ind{s}(t,x,p) = \frac1{2\pi}\Big(
    1+\int_{\mathbb R}\dd{q} \varphi(p,q)\rho\ind{p}(t,x,q)\Big).
    \label{equ:summaryrhos}
\end{equation}
We will denote the initial data of the occupation function as $n_0(x,p) = n(0,x,p)$.

Our main assumptions are as follows; these should be seen as constraints on the initial particle density $\rho\ind{p}(0,x,p)$, which depend on the scattering shift $\varphi(p,q)$ and the velocity $v(p)$.
\medskip

\noindent{\bf Assumptions.} {\em The initial occupation function is bounded and non-negative, $n_0(x,p)\geq 0$ for all $x,p$. Further, either:
\begin{align}
    \varphi(p,q)&\geq 0\ \forall\;p,q\in\mathbb R \quad\mbox{or}\quad  \varphi(p,q)\leq 0\ \forall\;p,q\in\mathbb R,\nonumber\\
    &\mbox{and}\quad \sup_{p\in\mathbb R} \int\dd{q}\abs{\varphi(p,q)}\sup_{x \in\mathbb{R}}n_0(x,q)<2\pi;\label{equ:summary_bound_n}
\end{align}
or
\begin{align}
    \sup_{p\in\mathbb R} \int\dd{q}\abs{\varphi(p,q)}\sup_{x \in\mathbb{R}}n_0(x,q)<\pi.\label{equ:summary_bound_n_2}
\end{align}
Finally,
\begin{equation}\label{equ:summary_bound_v}
    \sup_{(x,p)\in\mathbb R^2} \abs{v(p)n_0(x,p)} <\infty.
\end{equation}
}

We will show that given any (Lebesgue measurable) non-negative $n_0(x,p)$ satisfying \eqref{equ:summary_bound_n} or \eqref{equ:summary_bound_n_2}, there are unique (finite) $\rho\ind{p}(0,x,p)$, $\rho\ind{s}(0,x,p)$, and that these satisfy $\rho\ind{p}(0,x,p)\geq0,\ \rho\ind{s}(0,x,p)>0$ for all $x,p$ (Lemma \ref{lem:initialdensities}). Thus, in particular, the full space of functions $n_0(x,p)$ as specified in the Assumptions is available. We note that the inequality $\rho\ind{s}(0,x,p)>0$ is essential for the mathematical proof, while the inequality $\rho\ind{p}(0,x,p)\geq 0$ (related to the condition $n_0(x,p)\geq 0$) is not. The latter is imposed for simplicity, as in all currently known physical applications of GHD, $\rho\ind{p}(0,x,p)$ is a density, hence non-negative. Without this condition we would have to replace $\varphi(p,q)$ by  $\varphi(p,q)n_0(x,q)$ in the first inequalities in \eqref{equ:summary_bound_n} (which would have to hold for all $x\in\mathbb R$), and use $\abs{n_0(x,p)}$ in the last inequality in \eqref{equ:summary_bound_n} and \eqref{equ:summary_bound_n_2}.

Because the occupation function is a Riemann invariant for the GHD equations (see e.g.~\cite{BenGHD}), then, if the characteristics do not close upon themselves, $n(t,x,p)$ satisfies the same condition for all times $t\in\mathbb R$. In fact, we will show that
\begin{align}
    & \rho\ind{p}(t,x,q)\geq 0,\quad \rho\ind{s}(t,x,q)>0\quad \forall \;t,x,p\in\mathbb R
    \label{equ:summary_bound_rho_t}
\end{align}
and either (case \eqref{equ:summary_bound_n})
\begin{align}
    \sup_{p\in\mathbb R} \int\dd{q}\abs{\varphi(p,q)}\sup_{(x,t) \in\mathbb{R}^2}n(t,x,q)<2\pi\label{equ:summary_bound_n_t}
\end{align}
or (case \eqref{equ:summary_bound_n_2})
\begin{align}
    \sup_{p\in\mathbb R} \int\dd{q}\abs{\varphi(p,q)}\sup_{(x,t) \in\mathbb{R}^2}n(t,x,q)<\pi\label{equ:summary_bound_n_t_2}.
\end{align}

Clearly, in general, because of the assumptions above, our results apply to states of low enough density. However, as it turns out, in some quantum models, including the experimentally relevant Lieb-Liniger model, $\varphi(p,q)>0$ and the states which satisfy the bound (\ref{equ:summary_bound_n}) comprise all locally thermal states, locally zero-entropy states, as well as generic local generalised Gibbs ensembles that have been considered up to now in the literature (as far as we are aware of). Therefore the following theorems will in particular apply to those models in quite some generality. 
\subsection{Main results}
Our first result establishes the existence of a weak solution to the GHD equation \eqref{equ:intro_GHD_conservation}, even in cases where the initial data is not differentiable (this follows from Theorem \ref{theo:potentialform} and Lemma \ref{lem:equivalenceweak}):
\begin{theorem}[Existence of solution to the weak GHD equation]\label{th1summary}
    Under the above assumptions, 
    if in addition $n_0(x,p)$ is continuous in $x$ (pointwise in $p$), then the weak form of the GHD equation \eqref{equ:intro_GHD_conservation} makes sense, and possesses a weak solution $\rho\ind{p}(t,x,p)$,
    \begin{align}
        \int\dd{t}\dd{x}\big(\partial_t\phi(t,x)\rho\ind{p}(t,x,p) + \partial_x\phi(t,x)v\upd{eff}(t,x,p)\rho\ind{p}(t,x,p)\big) &= 0,
    \end{align}
    for every $p\in \mathbb{R}$ and every Schwartz function $\phi \in \mathcal{S}(\mathbb{R}^2)$. In particular the solution $v\upd{eff}(t,x,p)$ to \eqref{equ:intro_GHD_conservation_velocity} exists and is unique for every $t,x,p$. The GHD solution $\rho\ind{p}(t,x,p)$ satisfies \eqref{equ:summary_bound_rho_t}, as well as the bounds \eqref{equ:summary_bound_n_t} (resp.~the bound \eqref{equ:summary_bound_n_t_2}) under \eqref{equ:summary_bound_n} (resp.~\eqref{equ:summary_bound_n_2}).
\end{theorem}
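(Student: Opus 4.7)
The plan is to bypass the differentiability issue in \eqref{equ:intro_GHD_conservation} by reformulating GHD in terms of a ``height field'' or cumulative potential whose equation does not require $\rho\ind{p}$ to be differentiable in $(t,x)$, yet which reduces to the standard strong form whenever $\rho\ind{p}$ is smooth. The natural candidate is to integrate the local conservation law against a cumulative spatial variable: for each test function $h(p)$ and each $(t,x)$, the quantity $\int_{-\infty}^x \rho\ind{p}(t,y,p)h(p)\dd{y}$ should obey a transport-type relation along characteristics of $v\upd{eff}$. Because $n$ is a Riemann invariant, a cleaner object is the occupation function itself, which is constant along characteristics when the equation is strong. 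This suggests recasting the Cauchy problem as: for each $(t,x)$, find $n(t,x,\cdot)\in B(\mathbb R)$ such that $n(t,x,p)=n_0(u(t,x,p),p)$ for every $p$, where $u(t,x,p)$ is the foot of the $p$-characteristic at time $0$, determined self-consistently by the dressing equation and the entire profile $n(t,x,\cdot)$.

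The second step is to realize this self-consistency as a fixed-point problem $n(t,x,\cdot)=T_{t,x}[n(t,x,\cdot)]$ on the Banach space of bounded functions of $p$, where $T_{t,x}$ encodes the full dressing of $v\upd{eff}$ together with the map $p\mapsto u(t,x,p)$ and the look-up $n_0(\cdot,p)$. Here I would use the results on the dressing operation that the paper promises to establish in Section~\ref{sec:basics}: boundedness, positivity of $\rho\ind{s}$, and Lipschitz dependence on the occupation function, under the inequalities \eqref{equ:summary_bound_n} or \eqref{equ:summary_bound_n_2}. The main obstacle is the contraction estimate: one must show that $T_{t,x}$ is a strict contraction uniformly in $(t,x)$ with contraction constant independent of time. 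The factors $2\pi$ and $\pi$ in the hypotheses are precisely the thresholds that make either the signed dressing resolvent (case $\varphi\geq 0$ or $\varphi\leq 0$) or the absolute-value resolvent (general sign) uniformly invertible on $B(\mathbb R)$, with operator norm bounds leading to contraction of $T_{t,x}$. Once contraction is established, Banach's fixed-point theorem (as packaged in Theorem~\ref{theo:potentialform}) produces a unique $n(t,x,\cdot)$, and hence unique $\rho\ind{p}(t,x,p)$, $\rho\ind{s}(t,x,p)>0$ and $v\upd{eff}(t,x,p)$ for every $(t,x,p)$.

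Next I would propagate the a priori bounds \eqref{equ:summary_bound_n}, \eqref{equ:summary_bound_n_2} to all times. Since the fixed point gives $n(t,x,p)=n_0(u(t,x,p),p)$, taking $\sup_{(x,t)\in\mathbb R^2}$ reduces to $\sup_{x\in\mathbb R}n_0(x,q)$ for each $q$, so the bounds \eqref{equ:summary_bound_n_t}, \eqref{equ:summary_bound_n_t_2} follow immediately, along with positivity of $\rho\ind{p}$ and $\rho\ind{s}$ as in \eqref{equ:summary_bound_rho_t}. The growth condition \eqref{equ:summary_bound_v} is used only to ensure that $v\upd{eff}$ stays locally bounded, so that characteristics are well-defined on any finite time interval.

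The final step is to pass from the fixed-point formulation to the weak form of \eqref{equ:intro_GHD_conservation} against Schwartz test functions $\phi\in\mathcal{S}(\mathbb R^2)$. This is exactly the content of Lemma~\ref{lem:equivalenceweak}, and the continuity-in-$x$ of $n_0$ enters here: it guarantees that $(t,x)\mapsto n(t,x,p)=n_0(u(t,x,p),p)$ is measurable and locally bounded enough for the integration by parts defining the weak form to be justified by Fubini and dominated convergence. Combining the fixed-point existence (Theorem~\ref{theo:potentialform}) with the equivalence (Lemma~\ref{lem:equivalenceweak}) then yields the weak solution claimed, together with the bounds. The single hardest step is the contraction estimate: verifying that the thresholds $2\pi$ and $\pi$ are truly sharp enough for $T_{t,x}$ to be a contraction requires a careful accounting of both the linear dressing resolvent and the nonlinear dependence of $u$ on $n$, and this is where the structural ``linear degeneracy'' of GHD is doing the real work behind the scenes.
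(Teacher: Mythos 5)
There is a genuine gap at the heart of your argument: the fixed-point map you propose is not well-defined on the space you want to work in. You ask for $n(t,x,\cdot)\in B(\mathbb R)$ satisfying $n(t,x,p)=n_0(u(t,x,p),p)$ with $u(t,x,p)$ ``determined self-consistently by the dressing equation and the entire profile $n(t,x,\cdot)$''. But the characteristic foot $u(t,x,p)$ is \emph{not} a functional of the single-time-slice, single-point data $n(t,x,\cdot)$: it is obtained by integrating $v\upd{eff}$ along the characteristic, which requires the solution at all intermediate times and positions. This is essentially the (unproven) characteristics formulation of \cite{DOYON2018570}, and the ``single hardest step'' you defer --- controlling the nonlinear dependence of $u$ on $n$ --- is precisely what that route cannot easily deliver. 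The paper's construction avoids this entirely: the fixed point is taken not on $n(t,x,\cdot)$ but on the free coordinate $\hat X(t,x,\cdot)$, via $\hat X(t,x,p)=x+\vu{T}\hat N_0(\hat X(t,x,p)-v(p)t,p)$ (Eq.~\eqref{fp2}), where the seed height function $\hat N_0$ is computed \emph{once} from the initial data. For each fixed $(t,x)$ this is a genuinely local problem on $B(\mathbb R)$, and the contraction estimate is a one-line consequence of the Lipschitz bound $0\le \partial_{\hat x}\hat N_0\le \sup_x n_0(x,\cdot)$ together with $\norm{\vu{T}(\sup_x n_0(x,\cdot))}\ind{op}<1$; no resolvent of the dressing operator enters the contraction at all (Theorem~\ref{theo:fixedpoint}).

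A second, related inaccuracy: you attribute both thresholds ($2\pi$ and $\pi$) to making ``the resolvent uniformly invertible\ldots leading to contraction''. In fact only the weaker bound \eqref{equ:summary_bound_n} (operator norm $<1$) is needed for the contraction and for invertibility of $1-\vu{T}n$; the stronger bound \eqref{equ:summary_bound_n_2} (norm $<1/2$), or alternatively the fixed-sign condition on $\varphi$, is needed for the strictly positive \emph{lower} bound on $1\upd{dr}=2\pi\rho\ind{s}$ (Lemma~\ref{lem:dressing_boundedness_of_1dr}), which is what makes $x\mapsto\hat X(t,x,p)$ a bi-Lipschitz homeomorphism and hence makes the change of coordinates, the definition of $\hat N_0$, and the propagation of the bounds legitimate. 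Your final step (potential form $\Rightarrow$ weak form via Lemma~\ref{lem:equivalenceweak}) and your propagation of the bounds from $n(t,x,p)=\hat n_0(\hat X(t,x,p)-v(p)t,p)$ are in line with the paper, but they rest on the well-posedness and monotonicity of $\hat X$, which your formulation does not supply.
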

Note that a sligthly stronger version of the weak GHD equation -- its potential form \eqref{def:potential} -- leads to uniqueness as well, see Theorem \ref{theo:potentialform}.

If the initial data is differentiable this solution is indeed a strong solution to the GHD equation and is also unique (this follows from Theorem \ref{theo:differentiableform}): 
\begin{theorem}[Existence of a unique continuously differentiable solution to the GHD equation]
    Under the assumptions of Theorem \ref{th1summary}, if in addition
    $n_0(x,p)$ is continuously differentiable in $x$ (pointwise in $p$) and $\sup_{x,p} \abs{v(p)^2\partial_x n_0(x,p)} < \infty$, then there exists a unique solution $\rho\ind{p}(t,x,p)$ to the GHD equation that is continuously differentiable in $t$ and $x$ (pointwise in $p$) for all $t,x,p\in\mathbb R$.
    \label{thm:overview_strong_solution}
\end{theorem}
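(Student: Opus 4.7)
The plan is to reduce the entire statement to the fixed-point problem for $n(t,x,\cdot)$ that already underlies Theorem \ref{th1summary}, and then to upgrade that fixed point from being only continuous in $(t,x)$ to being $C^1$ in $(t,x)$, pointwise in $p$. Uniqueness is not an additional issue: any $C^1$ classical solution of \eqref{equ:intro_GHD_conservation}--\eqref{equ:intro_GHD_conservation_velocity} satisfying the assumed bounds must coincide with the weak solution constructed in Theorem \ref{th1summary}, which is unique by the contractivity of the fixed-point map in $B(\mathbb{R})$. Thus the content of Theorem \ref{thm:overview_strong_solution} is regularity plus verification that one may differentiate \eqref{equ:intro_GHD_conservation} pointwise.

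First I would differentiate the fixed-point equation for $n(t,x,p)$ formally with respect to $x$, treating $(t,x)$ as parameters and $p\mapsto n(t,x,p)$ as the unknown in $B(\mathbb R)$. This produces an inhomogeneous linear integral equation for $\partial_x n(t,x,\cdot)$ whose linear part is the Fr\'echet derivative of the fixed-point map at $n(t,x,\cdot)$. Since the original map is a uniform contraction on the set of admissible $n$'s under the assumptions \eqref{equ:summary_bound_n} or \eqref{equ:summary_bound_n_2} (this is the machinery supporting Theorem \ref{th1summary}), its linearization has operator norm strictly less than $1$. A Banach-space implicit function theorem then produces $\partial_x n(t,x,\cdot)$ as a bounded function of $p$, depending continuously on $(t,x)$, provided the inhomogeneous source is bounded in the appropriate norm. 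From $\partial_x n$, one obtains $\partial_x \rho\ind{p}$, $\partial_x \rho\ind{s}$ and $\partial_x v\upd{eff}$ by differentiating \eqref{equ:summaryn}, \eqref{equ:summaryrhos} and the dressing equation \eqref{equ:intro_GHD_conservation_velocity}; these identities are smooth in their arguments once $\rho\ind{s}>0$, which is guaranteed by \eqref{equ:summary_bound_rho_t}.

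The source in the linearized equation is the initial datum $\partial_x n_0$ evaluated along the characteristic built from $v\upd{eff}$; tracing through the derivation of the fixed point (where spatial shifts along characteristics carry factors $v(p)t$, and $\partial_x v\upd{eff}$ inherits a further $v$-dependent weight from the dressed $\varphi\rho\ind{p}$ integral), the natural bound on this source is $\sup_{x,p}|v(p)^2 \partial_x n_0(x,p)|$, which is exactly the extra hypothesis of the theorem. With this bound the linear equation lies inside a weighted version of $B(\mathbb R)$ on which the contraction argument still works, yielding a bounded, continuous $\partial_x n(t,x,p)$. Time differentiability is then free: the conservation equation \eqref{equ:intro_GHD_conservation}, rewritten as $\partial_t \rho\ind{p}(t,x,p) = -\partial_x(v\upd{eff}(t,x,p)\rho\ind{p}(t,x,p))$, has a right-hand side which by the previous step is continuous in $(t,x)$ pointwise in $p$, so $\partial_t \rho\ind{p}$ exists, is continuous, and satisfies the equation classically.

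The main obstacle is the second step, namely identifying the correct weighted norm and verifying that differentiating the dressing equation does not generate extra $v$-weights beyond what the $v(p)^2$-hypothesis can absorb. Concretely, I would need to check that differentiating the integral operator $f\mapsto \int\varphi(p,q)\rho\ind{p}(t,x,q)(f(q)-f(p))\dd q$ along $x$ produces terms controllable by $\sup_p \int\abs{\varphi(p,q)}\dd q$ and the $v^2$-weighted bound on $\partial_x n_0$, using the integrability \eqref{equ:supvarphi} of $\varphi$. Once this bootstrap is in place the rest of the proof is a direct application of the fixed-point results already proved for Theorem \ref{th1summary}.
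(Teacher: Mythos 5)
Your overall strategy -- differentiate the fixed-point relation, use the contraction property of its linearisation to solve for the spatial derivative, and feed the $v^2$-weighted hypothesis into a weighted norm -- is genuinely close to what the paper does (Theorems \ref{theo:differentiability} and \ref{theo:differentiableform}, built on Lemma \ref{lem:dressing_differentiability} and the extended-derivative identities of Theorem \ref{theo:XhatNbasic}). But there are two concrete gaps. First, the Banach-space implicit function theorem is not applicable under the stated hypotheses. The fixed-point map $G_{t,x}[f](p)=x+\vu{T}\hat N_0(f(p)-v(p)t,p)$ acts on $\banachspace$, and for the operator $h\mapsto \vu{T}\big(\hat n_0(f(\cdot)-v(\cdot)t,\cdot)\,h\big)$ to be its Fr\'echet derivative you need $\hat N_0(y+s,q)-\hat N_0(y,q)-\hat n_0(y,q)s=o(s)$ \emph{uniformly in} $q$, i.e.\ $p$-uniform differentiability of the data. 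The theorem assumes only pointwise-in-$p$ differentiability of $n_0$, and the paper is explicit that the implicit-function-theorem route works only in the uniform case; this is precisely why it introduces the classes $C^r\ind{p}$ versus $C^r\ind{u}$ and proves the dressing differentiability lemma by hand (dominated convergence pointwise in $p$) for both. Your argument as written would prove a weaker theorem requiring $p$-uniform hypotheses, or must be redone pointwise in $p$, which is essentially the paper's proof.

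Second, the claim that ``time differentiability is then free'' from $\partial_t\rho\ind{p}=-\partial_x(v\upd{eff}\rho\ind{p})$ is circular at the point where you invoke it: at that stage the equation is only known in weak (potential) form, so you cannot read off the existence of $\partial_t\rho\ind{p}$ from the continuity of the right-hand side without first justifying an interchange of $\partial_x$ with a time integral of the current, i.e.\ without already controlling the joint regularity you are trying to establish. The paper avoids this by treating $t$ and $x$ symmetrically from the start: the $t$-dependence of the fixed point enters only through the shift $-v(p)t$ in the argument of $\hat N_0$, so the same chain-rule computation that gives $\check\partial_x\hat X=1\upd{dr}$ gives $\check\partial_t\hat X=-(v\upd{dr}-v)$ directly from the fixed-point equation (Eqs.~\eqref{equ:xderivatives}--\eqref{equ:tderivatives}); the bootstrap then upgrades both derivatives simultaneously, and the PDE \eqref{equ:ghdproof} is assembled afterwards via the Poincar\'e lemma. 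You also leave implicit the regularity of the seed data itself: the source in your linearised equation involves $\hat n_0(\hat x,p)=n_0(X_0(\hat x,p),p)$, so you must first establish that the inverse coordinate map $X_0$ is $C^{r+1}$ with uniformly bounded derivatives (Lemma \ref{lem:k0_uniformly}) before any chain rule is legitimate.
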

Note that in particular the solution will stay continuous for all times. This rules out the possibility of shock formation for initial conditions satisfying (\ref{equ:summary_bound_n}). Again, for quantum models like the Lieb-Liniger model, bound (\ref{equ:summary_bound_n}) is automatically satisfied and thus we show that shock formation is absent in those models for natural initial conditions.

If we additionally assume that $n_0(x,p)$ is smooth  (uniformly in $p$) and decays fast enough the solution will also stay smooth for all times (this follows from Theorem \ref{theo:differentiableform}):
\begin{theorem}[Smooth solution to the GHD equation, uniformly in $p$]
    Under the assumptions of Theorem \ref{th1summary}, if in addition $n_0(x,p)$ is smooth in $x$ (uniformly in $p$) and $\sup_{x,p} \abs{v(p)^a\partial_x^b n_0(x,p)} < \infty$ for all $a,b \geq 0, a,b\in \mathbb{Z}$, then $\rho\ind{p}(t,x,p)$ is smooth in $t$ and $x$ (uniformly in $p$) for all $t,x,p\in\mathbb R$.
\end{theorem}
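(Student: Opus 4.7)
The strategy is to iterate the $C^1$ construction of Theorem \ref{thm:overview_strong_solution} (the first-order case of Theorem \ref{theo:differentiableform}) to obtain $C^r$ regularity for every finite $r$, thereby producing $C^\infty$ regularity, and to upgrade the pointwise-in-$p$ control to uniform-in-$p$ control using the joint-sup hypothesis. Writing $n(t,x,\cdot)$ through the Banach fixed-point map of the companion paper, I would prove by induction on $r\geq 0$ that $\partial_x^r n(t,x,\cdot)$ exists as a bounded function of momentum, is bounded uniformly in $(t,x,p)$ when multiplied by an appropriate power of $v(p)$, and is the unique solution of a linear fixed-point equation built from $\partial_x^j n$ with $j<r$ and from $\partial_x^j n_0$ with $j\leq r$.

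The induction step proceeds by differentiating the fixed-point equation once more in $x$. The crucial structural fact is that the contraction constant of the resulting linear map is the same as that of the nonlinear map at order zero, governed only by the bound \eqref{equ:summary_bound_n_t} or \eqref{equ:summary_bound_n_t_2} on $n$, which holds uniformly in time. Consequently, once the source term is shown to be bounded in the appropriately weighted Banach norm, contraction yields the new derivative with a uniform-in-$p$ estimate. The source is a polynomial in $\partial_x^j n$ ($j\leq r$) and $\partial_x^j n_0$ ($j\leq r+1$); each differentiation in $x$ brings in polynomial-in-$v(p)$ factors through $v\upd{eff}$ and its derivatives, and the hypothesis's weights $v(p)^a$, available for every $a\in\mathbb{Z}_{\geq 0}$, are precisely what is needed to absorb these factors.

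With spatial smoothness and uniform-in-$p$ weighted bounds at every order in hand, temporal smoothness follows algebraically from the strong GHD equation $\partial_t \rho\ind{p}+\partial_x(v\upd{eff}\rho\ind{p})=0$ provided by Theorem \ref{thm:overview_strong_solution}: each $\partial_t$ is traded for a $\partial_x$ acting on smooth, uniformly-in-$p$ weighted quantities, and iteration produces all mixed partial derivatives. Uniformity in $p$ is preserved at each step because the dressing operations defining $v\upd{eff}$ and $\rho\ind{s}$ are bounded on the weighted spaces by the same estimates used in Theorem \ref{th1summary}.

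The main obstacle I foresee is the weighted book-keeping: one must verify that the successive $x$-derivatives of $v\upd{eff}(t,x,p)$ grow no faster than a fixed polynomial in $v(p)$, so that the family of weights $v(p)^a$ supplied by the hypothesis suffices at every order. This is the only substantive departure from the $C^1$ argument, and should reduce to a combinatorial accounting of velocity powers generated by the dressing integrals, controlled under the integrability hypothesis \eqref{equ:supvarphi} and the bounds on $n$.
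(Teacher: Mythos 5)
Your proposal is correct and follows essentially the same route as the paper: Theorem \ref{theo:differentiability} performs exactly this induction on derivative order, with the linearized problem at each order solved by the dressing operator $(1-\vu{T}n)^{-1}$ (Lemma \ref{lem:dressing_differentiability}), whose norm is controlled by the same contraction constant as the order-zero map, and with the weights $\nu(p)=\abs{v(p)}+1$ absorbing the velocity powers, the case $r=\infty$ giving smoothness. The only cosmetic difference is that the paper runs the induction on the regularity of the free coordinate $\hat X$ via the formulas $\check\partial_x\hat X=1\upd{dr}$ and $\check\partial_t\hat X=-(v\upd{dr}-v)$ from Theorem \ref{theo:XhatNbasic}, so that time and space derivatives are treated symmetrically, whereas you recover the time derivatives afterwards from the strong equation; both routes work.
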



\section{Derivation of the fixed-point problem}\label{sectoverview}

Before we dive into the proof we are going to outline the derivation of the fixed-point problem, which is at the heart of our proofs. Here we only focus only on the relevant concepts for this paper. We invite interested readers to have a look at our more physical companion paper~\cite{hübner2024newquadraturegeneralizedhydrodynamics}, where we discuss the results in more generality, providing more physical insights and connecting to previous works.

\subsection{The GHD equation}

The equation \eqref{equ:intro_GHD_conservation} (with \eqref{equ:intro_GHD_conservation_velocity}) is the conservation form of the GHD equation. Mathematically, this equation poses the problem of the existence and uniqueness of the solution to the integral equation \eqref{equ:intro_GHD_conservation_velocity}, before even being able to address the existence and uniqueness of the Cauchy problem itself.

It is known that the equation can be reformulated as a transport equation: One can show~\cite{PhysRevX.6.041065} (with technical assumptions that we will clarify below) that \eqref{equ:intro_GHD_conservation} with \eqref{equ:intro_GHD_conservation_velocity} imply
\begin{align}\label{ghdn}
    \partial_t n(t,x,p) + v\upd{eff}(t,x,p)\partial_x n(t,x,p) = 0.
\end{align}
This is the GHD equation in transport form. It indicates that $n(t,x,p)$ defined via \eqref{equ:summaryn} is a family (parametrised by $p$) of Riemann invariants for the hydrodynamic equation \eqref{equ:intro_GHD_conservation}. In this form we can directly see that $n(t,x,p)$ is transported along the trajectories of quasi-particles whose velocities are given by $v\upd{eff}(t,x,p)$. 

Using the function $n(t,x,p)$, one can obtain an expression for $v\upd{eff}(t,x,p)$ that is easier to analyse. One defines the dressing of a function $f(p)$ as the solution to the linear integral equation
\begin{align}
    f\upd{dr}(t,x,p) = f(p) + \int\tfrac{\dd{q}}{2\pi} \varphi(p,q) n(t,x,q) f\upd{dr}(t,x,q).\label{equ:GHD_dressing}
\end{align}
It is clear from this definition and from Eqs.~\eqref{equ:summaryn}, \eqref{equ:summaryrhos} that
\begin{align}\label{rhos1dr}
	2\pi\rho_{\rm s}(t,x,p) = 1\upd{dr}(t,x,p),\quad
    \rho\ind{p}(t,x,p) = n(t,x,p)\rho\ind{s}(t,x,p)
\end{align}
where $1(p)=1$ is the constant unit function. Most importantly, by a simple re-organisation of the terms, one can recast \eqref{equ:intro_GHD_conservation_velocity} into a dressing form,
\begin{align}\label{veffvdr}
	v\upd{eff}(t,x,p) = \frac{{v}\upd{dr}(t,x,p)}{1\upd{dr}(t,x,p)}.
\end{align}
With
\begin{equation}
    T(p,q) = \frac{\varphi(p,q)}{2\pi}
\end{equation}
seen as the integral kernel of the intergal operator $\vu{T}$, i.e.~$(\vu{T}f)(p) = \int\dd{q} T(p,q)f(q)$, the dressing takes the following form:
\begin{align}
    f\upd{dr}(t,x,p) = (1-\vu T n(t,x))^{-1} f(p)
\end{align}
where (by a slight abuse of notation) $n(t,x)$ is seen as the diagonal integral operator $(n(t,x) f)(p) = n(t,x,p)f(p)$. Thus, the existence and uniqueness of the effective velocity simply requires the invertibility of the linear operator $1-\vu T n(t,x)$.

From the thermodynamic Bethe ansatz~\cite{jsc_integrability,BenGHD}, the state density $\rho_{\rm s}(t,x,p)$ describes the density of available quantum states, while the occupation function $n(t,x,p)$ describes the fraction of the available quantum states that are occupied. In many quantum models, particles adhere to the fermionic statistics, and thus there is the restriction that each state is occupied at most once, implying $0 \leq n(t,x,p) \leq 1$~\cite{jsc_integrability}. In particular, $n(t,x,p) = 1$ means that each quantum state is filled once, while $n(t,x,p) = 0$ means that no state is filled. 

\subsection{Change of metric: trivialising the GHD equation}\label{ssectmetric}

One important result about the GHD equation is that it can be mapped, by an appropriate ``state-dependent change of metric", to the equation for the evolution of the phase-space density of free particles with velocity $v(p)$. This was first noted in~\cite{DOYON2018570}. Here, we assume that $\varphi(p,q)\geq 0$ and $\rho_{\rm p}(t,x,p)>0$.

One way to make this construction is as follows. Assume for simplicity that $\lim_{x\to-\infty}\rho_{\rm p}(t,x,p)= 0$ uniformly in $p$ and $t$, and quickly enough. Consider the new spatial variable
\begin{align}\label{xhatx}
	\hat x = x + \int_{-\infty}^x \dd{y} \int \dd{q} \varphi(p,q) \rho_{\rm p}(t,y,q).
\end{align}
This defines a function $\hat x = \hat X(t,x,p)$, which, by non-negativity of the integrand, is strictly monotonic in $x$ (for all $p$ and $t$); it can also be written as
\begin{equation}\label{Xhatrhos}
    \hat X(t,x,p) = x + \int_{-\infty}^x
    \dd{y} \Big(
    2\pi \rho_{\rm s}(t,y,p) -
    1\Big).
\end{equation}
Note in particular that, for $p,t$ fixed,
\begin{align}\label{metric}
	\dd \hat x = 2\pi\rho_{\rm s}(t,x,p) \dd x\qquad \mbox{($p,t$ fixed).}
\end{align}
Thus, this can be interpreted as a state-dependent change of metric. Further, its time dependence is fully determined as follows: for $p$ fixed in the asymptotic region $x\to-\infty$, 
\begin{equation}\label{equ:freecoordtime}
	\dd\hat x = \dd x + 0 \dd t \qquad 
	\mbox{($x\to -\infty$, $p$ fixed).}
\end{equation}
Thus this change of coordinate is trivial and time-independent in the asymptotic region where the density vanishes. We will refer to $x$ as the ``real" coordinate, and $\hat x$ as the ``free" coordinate, for reasons that we now explain.

By monotonicity, one also has its inverse $x = X(t,\hat x,p)$. One then defines the ``free density" $\hat \rho_{\rm p} (t,\hat x,p)$ as
\begin{align}\label{rhohatrho}
	\hat \rho_{\rm p} (t,\hat x,p) \dd \hat x = \rho_{\rm p} (t, x,p) \dd  x.
\end{align}
Explicitly $\hat \rho_{\rm p} (t,\hat x,p) = \rho_{\rm p} (t, X(t,\hat x,p),p) \,\partial X(t,\hat x,p)/\partial \hat x$, where the derivative is nonzero by strict monotonicity. Note that in terms of the occupation function \eqref{equ:summaryn}, this is\footnote{One can naturally define $\hat \rho_s(t,x,p) = 1/(2\pi)$ as in the new metric the space density is constant, and thus $\hat n(t,\hat x,p) = n(t,X(t,\hat x,p),p)$.} $\hat \rho_{\rm p}(t,\hat x,p) = n(t,X(t,\hat x,p),p)/(2\pi)$. From this, one has a time-independent nonlinear map: given $\rho_p(t,\cdot,\cdot)$, one evaluates $\hat X(t,\cdot,\cdot)$ from \eqref{xhatx}, inverts it at fixed $t,p$, for all $p$, to get $X(t,\cdot,\cdot)$, and evaluates $\hat \rho_{\rm p} (t,\hat x,p)$,
\begin{align}\label{map}
	\rho_{\rm p}(t,\cdot,\cdot) \mapsto \hat \rho_{\rm p}(t,\cdot,\cdot).
\end{align}
It is a simple matter to show that {\em if $\rho_{\rm p}(t,x,p)$ satisfies the GHD equation, then $\hat \rho_{\rm p}(t,\hat x,p)$ satisfies the free-particle phase-space evolution equation}
\begin{align}\label{freeevol}
	\partial_t \hat\rho_{\rm p}(t,\hat x,p) + v(p)\partial_{\hat x} \hat\rho_{\rm p}(t,\hat x,p) = 0.
\end{align}
Thus, the GHD equation is trivialised under the map \eqref{map}. The solution in the trivialised form is immediate,
\begin{align}\label{freesol}
	\hat\rho_{\rm p}(t,\hat x,p) = \hat\rho_{\rm p}(0,\hat x-v(p)t,p).
\end{align}
This trivialisation is the rational for referring to $\hat x$ and $\hat \rho\ind{p}$ as ``free".

The physical interpretation of $\hat\rho_{\rm p}(t,\hat x,p)$ (see \cite{BenGHD}) is that this is the density of particles in the asymptotic coordinates of the microscopic model, in the sense of scattering theory; these, by construction, evolve trivially. The map \eqref{map} is the map to asymptotic coordinates, which is implemented by a change of metric \eqref{metric} representing the accumulation of two-body scattering shifts incurred in going to the asymptotic space.

\subsection{Solution via height fields}\label{heightfields}
The important new insight of our work is to consider these equations using the height fields:
\begin{align}\label{eq:defKheight}
    N(t,x,p) &= 2\pi \int_{-\infty}^x\dd{y} \rho\ind{p}(t,x,p) & \hat{N}(t,\hat{x},p) &= 2\pi \int_{-\infty}^{\hat{x}}\dd{\hat{y}} \hat\rho_{\rm p}(t,\hat y,p).
\end{align}
Note that by definition they satisfy
\begin{align}\label{equ:N_equivalence}
    N(t,x,p) = \hat{N}(t,\hat{X}(t,x,p),p).
\end{align}
and furthermore from \eqref{freesol} we can infer:
\begin{align}\label{equ:N_freesol}
    \hat{N}(t,\hat{x},p) = \hat{N}_0(\hat{x}-v(p)t,p),
\end{align}
where the initial $\hat{N}_0(\hat{x},p) = \hat{N}(0,\hat{x},p)$ can be computed from the initial condition.

Combining \eqref{equ:N_equivalence} and \eqref{equ:N_freesol} with \eqref{Xhatrhos} we have:
\begin{equation}\label{fp2}
    \hat X(t,x,p) = x + \vu{T} \hat N_0(\hat X(t,x,p) - v(p)t,p) =: G_{t,x}[\hat X(t,x,\cdot)](p)
\end{equation}
or alternatively 
\begin{equation}\label{fp1}
    N(t,x,p) = \hat N_0(x - v(p) t + \vu{T}N(t,x,p),p) =: H_{t,x}[N(t,x,\cdot)](p).
\end{equation}
By a slight abuse of notation, here and below $\vu{T}$, when acting on a function of many variables, implicitly acts on the momentum variable; for instance $\vu{T}N(t,x,p) = \int \dd{q} T(p,q)N(t,x,q)$ and $\vu{T} \hat N_0(K(t,x,p),p) = \int \dd{q} T(p,q)\hat N_0(K(t,x,q),q)$.

\subsection{Non-decaying initial conditions}
So far we assumed that $\rho(t,x,p) \to 0$ as $x\to - \infty$ sufficiently quickly. However, our proof also applies to cases, like the partitioning protocol
\begin{align}
    \rho_0(x,p) = \begin{cases}
        \rho\ind{L}(p) & x<0\\
        \rho\ind{R}(p) & x>0,
    \end{cases}
\end{align}
where the density does not decay. In these cases the definitions \eqref{xhatx} and \eqref{eq:defKheight} are not well-defined. It is convenient to fix the ambiguities not by conditions at $x = -\infty$, but rather at $x=0$ (or any other point). Here we only give the final result (which can easily be checked to indeed solve the GHD equation). Define
\begin{equation}
    \hat X_0(x,p) = 2\pi \int_0^x \dd x'
    \rho_{\rm s}(0,x',p),\quad
    N(0,x,p) = 2\pi \int_0^x \dd x'
    \rho_{\rm p}(0,x',p).
    \label{equ:K0N0}
\end{equation}
Furthermore define
\begin{equation}\label{N0rho}
    \hat N_0(\hat x,p) =
    2\pi \int_0^{X_0(\hat x,p)} \dd{x} \rho_{\rm p}(0,x,p).
\end{equation}
and, changing variable,
\begin{align}
    \hat N_0(\hat x,p) &= \int_0^{\hat x}\dd{\hat y} n_0(X_0(\hat y,p),p).\label{equ:derivation_N0_def}
\end{align}
Here $X_0(\hat x,p)$ is the inverse function to $\hat X_0(x,p)$ in $x$. With these definitions one can solve either \eqref{fp2} or \eqref{fp1} to compute $N(t,x,p) = \hat{N}_0(\hat{X}(t,x,p)-v(p)t,p)$. From this one obtains the solution for the particle density $\rho_{\rm p}(t,x,p)$ by differentiating $\partial_x N(t,x,p) = \rho_{\rm p}(t,x,p)$, or for the occupation function $n(t,x,p)$  from the fact that it satisfies the transport equation,
\begin{equation}\label{solutionnK}
    n(t,x,p)
    = \hat n_0 (\hat X_0(t,x,p)-v(p)t,p),\quad
    \hat n_0(\hat x,p) = n_0(X_0(\hat x,p),p).
\end{equation}

\subsection{Contraction property}

The main argument of this paper, in order to show existence and uniqueness of the GHD equation, will be to show existence and uniqueness of the fixed-point problems. It will be sufficient to choose one of them -- a similar analysis holds for all.

Let us analyse the fixed-point problem \eqref{fp2}. It turns out that the functional
\begin{align}
    G_{t,x}[\hat{X}](p) &= x + \vb{T}\hat{N}_0(\hat{X}(p) - v(p)t,p)
\end{align}
is contracting under the supremum norm $\norm{f}_{\infty} = \sup_{p\in{\mathbb{R}}} \abs{f(p)}$ if assumptions \eqref{equ:summary_bound_n} or \eqref{equ:summary_bound_n_2} hold:
\begin{align}
    & \norm{G_{t,x}[\hat{X}] - G_{t,x}[\hat{X}']}_\infty
    = \sup_{p\in\mathbb R}\abs{\vu{T}\int_{\hat{X}'(p)}^{
\hat{X}(p)} \dd{\hat y}
    n_0(X_0(\hat y,p),p)}
    \nonumber \\
    &\quad \leq \sup_{p\in\mathbb R} \int\dd{q}\abs{T(p,q)}\Big(\sup_{ x\in\mathbb R} \abs{n_0( x,q)}\,\abs{\hat{X}(q) - \hat{X}'(q)}
    \Big)\nonumber  \\
    &\quad \leq \tfrac{1}{2\pi}\qty(\sup_{p\in\mathbb R} \int\dd{q}\abs{\varphi(p,q)}\sup_{ x\in\mathbb R} \abs{n_0( x,q)})\norm{\hat{X} - \hat{X}'}_\infty <\norm{\hat{X} - \hat{X}'}_\infty.
\end{align}
Thus, by the Banach fixed-point theorem, the solution $\hat{X}(t,x,\cdot)\in B(\mathbb R)$ to \eqref{fp2}, a bounded measurable function on $p \in \mathbb R$,  exists and is unique. Once the solution $\hat{X}(t,x,p)$ is found, one can re-construct the solution to the GHD equation, either in terms of the particle density $\rho_{\rm p}(t,x,p)$ or the occupation function $n(t,x,p)$.

\subsection{Discussion} \label{ssectheightfielddiscussion}

A close look at the informal derivations above show that three conditions have appeared. (1) The existence of the dressing operation $(1-\vu{T}n(t,x))^{-1}$ for all $t,x$. This is guaranteed in particular if $\norm{\vu{T}n(t,x)}\ind{op}<1$, where we recall that $n(t,x)$ is the diagonal operator acting as multiplication by the occupation function $n(t,x) f(p) = n(t,x,p)f(p)$ on $f\in B(\mathbb R)$. This requires for each $t,x$
\begin{equation}\label{condntxp}
    \sup_p \int\dd{q} \abs{T(p,q)} n(t,x,q) < 1.
\end{equation}
In principle there is no need to impose this for all times, since as we will see, it follows from requiring at $t=0$ only, which in turn follows from \eqref{equ:summary_bound_n} or \eqref{equ:summary_bound_n_2}. (2) Invertibility (in $x$) of $\hat{X}(t,x,p)$. This was argued by assuming $\varphi(p,q)\geq 0$ and $n(t,x,p)\geq 0$. But a weaker requirement is $\inf_{(t,x,p)\in\mathbb R^3} \rho_{\rm s}(t,x,p) > 0$. By \eqref{rhos1dr}, this is immediately guaranteed if the stronger bound \eqref{equ:summary_bound_n_2} holds, as indeed, then, $(1-\vu Tn(t,x))^{-1} 1(p) = 1 + \vu Tn(t,x)(1-\vu Tn(t,x))^{-1} 1(p)> 1 - \norm{\vu Tn(t,x)}\ind{op}(1-\norm{\vu Tn(t,x)}\ind{op})^{-1}>0$ uniformly on $t,x$. (3) The contraction property of the fixed-point map $G_{t,x}$, for which we impose \eqref{equ:summary_bound_n} or its stronger version \eqref{equ:summary_bound_n_2}

Naturally, we have not discussed differentiability properties, neither have we derived the GHD equations from the fixed-point problem. Our formal proofs below establish that the conditions on initial data are sufficient and fill these other gaps, putting the above discussion on a rigorous basis.

We emphasise that the fixed-point problem provides meaningful solutions to the GHD dynamics even in the case where the initial data $n(0,x,p)$ is not differentiable. In that case, the GHD equation is satisfied in the weak form, as expressed in Section \ref{sectsummary}.

As mentioned in Section \ref{sectsummary}, it is not difficult to extend the GHD equation to more general measure space on momenta, instead of $\dd p,\mathbb R$, see for instance the discussion in \cite{doyon2018exact,10.21468/SciPostPhys.6.4.049}; one may still use the Banach space of bounded measurable functions with the supremum norm. In some cases, depending on the applications of the GHD equation, it may be that a different norm must be used. For the Lieb-Liniger, sinh-Gordon and other quantum mechanical models, the bound $\norm{n(t,x,p)}_\infty < 1$ is natural from the physics and thus $\banachspace$ is a natural candidate. Indeed, for the repulsive Lieb-Liniger model we have $T(p,q) = \tfrac{1}{2\pi}\tfrac{2c}{c^2+p^2}$, where $c > 0$ is the coupling constant, and for the sinh-Gordon model $T(p,q) = \tfrac{1}{2\pi}\tfrac{2}{\cosh(p-q)}$ (where here $p,q$ are to be interpreted as relativisic rapidities). In both cases, $T(p,q)>0$ and satisfies $\norm{\vu{T}}\ind{op} = 1$.

A simple example for a model where $\norm{n(t,x,p)}_\infty$ is not necessarily bounded is the hard rods model, with $T(p,q) = -d$ where $d > 0$ \cite{Boldrighini1983}. There any $p$-integrable, positive $n(t,x,p)$ is physically meaningful. Note that in this case \eqref{fp2} implies that $\hat{X}(t,x,p)$ is independent of $p$ and therefore $G_{t,x}$ becomes a continuous and strictly decreasing function $\mathbb{R} \to \mathbb{R}$, thus always having a unique fixed point.

\section{Mathematical setup and dressing operation}
\label{sec:basics}

In this section we specify the mathematical setup, define some basic space of functions and their properties, and express the main assumptions within this language. We then establish important properties of the dressing operation and show how the initial conditions, that were expressed in Section \ref{sectsummary} in terms of the particle density $\rho\ind{p}(0,x,\theta)$, can be expressed solely in terms of the occupation function $n(0,x,\theta)=n_0(x,\theta)$.

\subsection{Notations and spaces of functions}\label{ssect:notations}

For any subset $S$ of $\mathbb R^D$ for some $D\geq 1$, we denote the corresponding Banach space of Lebesgue measurable, bounded real functions on $S$ as $B(S) = B(S\to\mathbb R)$. The non-negative subset of $B(S)$ is denoted $B_+(S) = \{f\in B(S):f(s)\geq 0 \;\forall\;s\}$. For any $\mu\in B_+(S)$, we denote $B(S,\mu) = B(S\to \mathbb R,\mu)$ as the space of real functions on $S$ with finite seminorm $\sup_{s\in S}\mu(s)\abs{f(s)}<\infty$. Note that if the support of $\mu$ is $S$, this is a norm and induces a Banach space; but we admit the general case. Hence the (semi)norms are
\begin{equation}
    \norm{f}_\infty^\mu := \sup_{s\in S}\mu(s)\abs{f(s)},\qquad
    \norm{f}_\infty := \sup_{s\in S}\abs{f(s)}.
\end{equation}
We note that if $\mu=n$ is bounded from above, then $B(S)\subset B(S,n)$, while if $\mu=\nu$ has a strictly positive lower bound, $B(S,\nu)\subset B(S)$. Also, in general, $f\in B(S,\abs{g}) \Leftrightarrow g\in B(S,\abs{f}) \Leftrightarrow fg\in B(S)$.

We consider the class of integral operators $\vu{A}:B(S)\to (S\to\mathbb R)$, with kernel
$A(s,s')$ such that $A(s,\cdot)\in L^1(S)$  for all $s\in S$, acting as $\vu{A}f(s) = \int \dd^D s' A(s,s')f(s')$. We define the operator norm as
\begin{equation}\label{equ:opnorm}
    \norm{\vu{A}}\ind{op}:=
    \sup_{s\in S}\int \dd^D{s'} 
    \abs{A(s,s')}.
\end{equation}
If $\norm{\vu{A}}\ind{op}<\infty$, then $\vu{A}:B(S)\to B(S)$.

Another type of operator is the multiplication operator by a bounded function: given $g\in B(S)$, this is $g: f\mapsto gf$ (we denote the operator using the same symbol as there should be no confusion). If $\mu\in B_+(S)$, then when explicitly stated, this will be seen as acting on the larger space $B(S,\mu)$, that is $\mu:B(S,\mu)\to B(S),\;f\mapsto \mu f$; then, $\mu$ preserves the (semi)norms.

Note that a bounded function $f(x,p)$, for $x\in S,\,p\in\mathbb R$, can be seen equivalently as a function $f\in B(S\to \banachspace)$, or $f\in B(S\times\mathbb R)$, and similarly with more independent variables, etc. We will use the various viewpoints interchangeably. Also, our convention is that when the last argument $p$ is omitted, the object is seen as an element of a space of functions of $p$ (similarly when the last two $s,p$ are omitted, as an element of a space of functions of $s,p$, etc.).

It will be important to discuss differentiability of functions. For $s\in \mathbb R^D$, we use the notation $\partial_s^a = \prod_{i=1}^D \partial_{s_i}^{a_i}$, and $\abs{a} = \sum_i a_i$, for $a = (a_i)_i$ with $a_i\in\mathbb N = \{0,1,2,\ldots\}$. We recall that for $S\subset \mathbb R^D$, a function $S\ni s\mapsto f(s)\in\mathbb R$ is $r$ times continuously differentiable on $S$, i.e.~$f\in C^r(S)$, if and only if all its derivatives $\partial_s^a f(s)$ exist and are continuous, for all $a$ such that $\abs{a}=0,1,\ldots,r$ and $s\in S$.

We will concentrate on two families of differentiability classes of functions of two variables $s\in S$ and $p\in \mathbb R$, with differentiability in $s$ being the focus. Again, this will in general be with respect to $\nu\in B_+(S)$, and in all our application, it will be sufficient to restrict to functions that are constant along $S$, and that have a strictly positive lower bound,
\begin{equation}\label{equ:muboundedbelow}
    \nu(s,p) = \nu(p),\quad
    \inf_{p\in \mathbb R} \nu(p)>0.
\end{equation}
Given $r\in\mathbb N$, and $S\subset \mathbb R^D$, we consider functions $f(s,p)$ that are $p$-pointwise $r$ times continuously differentiable in $s\in S$, with the additional property that their derivatives of orders $a$ up to the order $\abs{a}=r$ are uniformly bounded on $S\times \mathbb R$ with respect to the $\abs{a}$th power $\nu(p)^{\abs{a}}$ (below we use the notation $\nu^{\abs{a}}:s,p\mapsto \nu(p)^{\abs{a}}$):
\begin{equation}\label{equ:defCp}
    C^r\ind{p}(S,\nu) = \left\{\begin{aligned} f:&\ S\times \mathbb R \to \mathbb R,\\ &\ 
    f(\cdot,p)\in C^r(S) \;\forall\;p\in\mathbb R,\\ &\ 
    \partial_s^a f \in B(S\times \mathbb R,\nu^{\abs{a}})\;\forall\;a:\,1\leq \abs{a}\leq r
    \end{aligned}
    \right\}.
\end{equation}
We also consider functions $f(s,p)$ that are $p$-uniformly $r$ times continuously differentiable in $s\in S$, with again the condition on boundedness of their derivatives with respect to powers of $\nu$. This can equivalently be expressed as being $r$ times continuously differentiable as a function mapping to the Banach space $\banachspace$, with that condition on the derivatives:
\begin{equation}\label{equ:defCu}
    C^r\ind{u}(S,\nu) = 
    \left\{\begin{aligned} f: &\  S\times \mathbb R \to \mathbb R,\\ &\ 
    f\in C^r( S \to \banachspace),\\ &\ 
    \partial_s^a f \in B(S\times \mathbb R,\nu^{\abs{a}})\;\forall\;a:\,1\leq \abs{a}\leq r
    \end{aligned}
    \right\}.
\end{equation}
Note that in both cases, because of \eqref{equ:muboundedbelow}, one has in fact $\sup_{s\in S,p\in\mathbb R} (\nu(p)^b\, \abs{\partial_s^af(s,p)}) <\infty$, that is $\partial_s^a f \in B(S\times \mathbb R,\nu^{b})$, for all $0\leq b\leq \abs{a}$, $1\leq \abs{a}\leq r$.

When $\nu=1$, the last argument is again dropped:
\begin{equation}
    C^r\ind{u}(S) = C^r\ind{u}(S,1),\quad
    C^r\ind{p}(S) = C^r\ind{p}(S,1).
\end{equation}
These spaces restrict only the derivatives (of order at least 1) to be bounded; below we will combine with various conditions on boundedness of the function itself.

We note the (strict) inclusions $C^{r+1}\ind{p}(S,\nu)\subset C^{r}\ind{u}(S,\nu)\subset C^r\ind{p}(S,\nu)$ for all $r\geq 0$, and the equality $C^\infty\ind{p}(S,\nu) = C^\infty\ind{u}(S,\nu)$; the first inclusion is thanks to the bounds on the derivatives, as using integrations to express lower derivatives we see that they exist uniformly. Also, $C^0\ind{u}(S) = C^0(S\to \banachspace)$.

We also make the following observations. If $f\in C^r\ind{u}(S,\nu)$ for some $r\geq 0$ then $f\in B(U\times \mathbb R)$ for all compact subsets $U\subset S$. If $f\in C^r\ind{p}(S,\nu)$, then $\partial_s^a f(s,p)$ is $p$-uniformly continuous on $s\in S$ for all $\abs{a}< r$; this can be seen by integration. Therefore, $f\in C^r\ind{u}(S,\nu)$ if and only if $f\in C^r\ind{p}(S,\nu)$ and $\partial_s^af(s,p),\,\abs{a}=r$ are $p$-uniformly continuous for all $s\in S$. The pointwise and uniform families, in our definitions, are thus rather close to each other; this is because of the conditions on boundedness of the derivatives. 
Finally, and importantly, the spaces behave as follows under multiplication of the $s$ argument by $\nu(p)$:
\begin{equation}\label{equ:Crcomposition}
    f\in C^r\ind{p}(S,\nu) \quad \Rightarrow\quad \{(s,p)\mapsto f(\nu(p) s,p)\} \in C^r\ind{p}(S)
\end{equation}
and there are clearly more general relations of this type.

The spaces $B(S,n)$, $B_+(S,n)$ (for upper bounded non-negative $n$), and $C^r\ind{p}(S,\nu)$ and $C^r\ind{u}(S,\nu)$ (for non-negative $\nu$ with strictly positive lower bound), are the spaces we will use in order to express our conditions and results.

\subsection{Review of setup and assumptions}\label{ssect:reviewsetup}

Our setup and the Assumptions are stated in Subsection \ref{ssectassumptions}. We here re-state them in a way that is better adapted to our proof methods (without re-stating the conditions on Lebesgue measurability).

Two basic objects define the GHD equation: the scattering shift and the velocity function. These are  throughout assumed to be given.

Let us introduce the notation
\begin{equation}
    T(p,q) = \frac{\varphi(p,q)}{2\pi}.
\end{equation}
We assume given the scattering shift $T:\mathbb R^2\to\mathbb R$, with $T(p,\cdot)\in L^1(\mathbb R)$. Seeing this as an integral kernel, we have an integral operator $\vu{T}$ as defined around \eqref{equ:opnorm}. We further assume
\begin{equation}\label{equ:condTop}
    \norm{\vu{T}}\ind{op}
    < \infty.
\end{equation}
Therefore $\vu T:\banachspace\to\banachspace$; this is the bound \eqref{equ:supvarphi}. For any function $f\in B(\mathbb{R})$, we denote $\vu{T}f$ the integral operator with kernel $T(p,q)f(q)$.

Further, we assume given a real function
\begin{equation}
	\mathbb R\ni p\mapsto v(p)\in\mathbb R,
\end{equation}
the velocity function.

Another important function is the occupation function $n_0(x,p)$, which will be related to the initial condition. For it, we extract labeled assumptions, which are not assumed throughout and will be stated when needed.

First, we must consider the possibility \eqref{equ:summary_bound_n} of our Assumptions; that is, this is the condition that the scattering shift have a fixed sign, which we write as \eqref{assumptionS}:
\begin{equation}\label{assumptionS}
    \mbox{either}\quad T(p,q) \geq 0\ \forall\  p,q\in\mathbb R\quad \mbox{or}\quad T(p,q) \leq 0\ \forall\  p,q\in\mathbb R. \tag{S}
\end{equation}
Second, we define certain subsets of the set of non-negative bounded functions $B_+(S\times \mathbb R)$, determined by $\vu{T}$:
\begin{align}
    B\ind{T}(S\times \mathbb R) := \left\{n\in B_+(S\times \mathbb R): \norm{\vu{T} (\sup_{s\in S} n(s,\cdot))}\ind{op} < 1\right\}
\end{align}
and
\begin{align}
    B\ind{T}'(S\times \mathbb R) := \left\{n\in B_+(S\times \mathbb R): \norm{\vu{T} (\sup_{s\in S} n(s,\cdot))}\ind{op} < \frac12\right\}
	\subset B\ind{T}(S\times \mathbb R).
\end{align}
By convention, if the space $S=\emptyset$ is empty, then $\sup_{s\in S} n(s,\cdot)$ is replaced by $n$. Note in general the following inequalities, which we will use throughout:
\begin{equation}
    \norm{\vu{T}n(s,\cdot)}\ind{op}\leq \norm{\vu{T}n}\ind{op} \leq \norm{\vu{T}(\sup_{s'\in S} n(s',\cdot))}\ind{op}\leq \norm{\vu{T}}\ind{op}\norm{n}_\infty\quad \forall\;s\in S
\end{equation}
where $\vu{T}$ only acts on the momentum argument of $n$; the operator norm $\norm{\cdot}\ind{op}$ in the first, third and fourth expressions are for operators acting on $B(\mathbb R)$, while that in the second is for the operator $\vu{T}n$ acting on $B(S\times \mathbb R)$, i.e.~$\norm{\vu{T} n}\ind{op} = \sup_{s\in S,\,p\in \mathbb R}\int \dd{q} \abs{T(p,q)}n(s,q)$. Thus in particular,
\begin{equation}\label{equ:boundTop}
    n\in B\ind{T}(S\times \mathbb R) \Rightarrow
    \norm{\vu{T} n}\ind{op}<1,\quad
    n\in B\ind{T}'(S\times \mathbb R) \Rightarrow
    \norm{\vu{T} n}\ind{op}<\frac12.
\end{equation}

We set $n_0$ as follows. If Assumption \eqref{assumptionS} holds (resp.~it doesn't), then $n_0\in B\ind{T}(\mathbb R^2)$ (resp.~$n_0\in B\ind{T}'(\mathbb R^2)$); this is \eqref{equ:summary_bound_n} (resp.~\eqref{equ:summary_bound_n_2}). Explicitly, $n_0(x,p)\geq 0$ for all $x,p$ and
\begin{equation}
    \norm{\vu{T}(\sup_{x\in\mathbb R}n_0(x,\cdot))}\ind{op} < 1\qquad \Big(\mbox{resp.}\ 
    \norm{\vu{T}(\sup_{x\in\mathbb R}n_0(x,\cdot))}\ind{op} < \frac1{2}
    \Big).
\end{equation}
It will be convenient to introduce the positive subset of the open ball of radius that depends if the fixed-sign condition \eqref{assumptionS} holds or not:
\begin{equation}\label{equ:defBS}
    B\ind{S}(S\times \mathbb R) = \left\{\begin{array}{ll}
         B\ind{T}(S\times \mathbb R) &\mbox{(the fixed-sign condition \eqref{assumptionS} holds)}  \\
         B\ind{T}'(S\times \mathbb R) &\mbox{(otherwise).}
    \end{array}\right.
\end{equation}
Then we write as \eqref{assumptionA}:
\begin{equation}
    \label{assumptionA}
    \tag{A}
    n_0 \in B\ind{S}(\mathbb R^2),\quad \mbox{that is:}\quad
    \mbox{either \eqref{assumptionS} holds and }
    n_0\in B\ind{T}(\mathbb R^2),\ 
    \mbox{or }
    n_0\in B\ind{T}'(\mathbb R^2).
\end{equation}
By \eqref{equ:boundTop}, this implies (but is stronger than)
\begin{equation}
    \mbox{either \eqref{assumptionS} holds and }\norm{\vu{T}n_0}\ind{op} < 1, \quad \mbox{or }
    \norm{\vu{T}n_0}\ind{op} < \frac12,
\end{equation}
and also it implies $n_0(x)\in B\ind{S}(\mathbb R)$ for all $x\in\mathbb R$.

Second, another condition on $n_0$ is that with respect to the velocity function. In order to state it, it is convenient to see $v$ as a function $\mathbb R^2\to\mathbb R:(x,p)\to v(p)$ that is independent of the first variable $x$. Then the condition is
\begin{equation}
    \label{assumptionV}
    \tag{V}
    n_0 \in B(\mathbb R,\abs{v}) \quad\mbox{that is}\quad
    \sup_{(x,p)\in\mathbb R^2}\abs{v(p)}\,n_0(x,p)<\infty.
\end{equation}
This can be written in different equivalent ways, such as $v\in B(\mathbb R^2,n_0)$ and $v\in B(\mathbb R,\sup_{x\in\mathbb R}n_0(x,p))$. This implies (but is stronger then) $v\in B(\mathbb R,n_0(x))$ for all $x\in\mathbb R$.

Below we will use the following strict inequalities:
\begin{equation}
    0<R(\norm{\vu{T}n_0}\ind{op})<\infty,\quad 0<\norm{n_0}_\infty <\infty,\quad 0<\frac1{1-\norm{\vu{T}n_0}\ind{op}}<\infty
\end{equation}
where we define the following function, which depends on if the condition \eqref{assumptionS} holds or not:
\begin{equation}
    \label{equ:functionR}
    R(z) = 
    \left\{\begin{array}{ll}
    1-z & \mbox{(the condition (\ref{assumptionS}) holds)}\\[0.3cm] \displaystyle
    \frac{1-2z}{1-z} & \mbox{(otherwise).}
    \end{array}\right.
\end{equation}

The Assumptions in Section \ref{ssectassumptions} boil down to the existence of $\rho\ind{p}(0,x,t)$, related to $n_0(x,p)$ via Eqs.~\eqref{equ:summaryn} and \eqref{equ:summaryrhos}, such that \eqref{assumptionA}, \eqref{assumptionV} hold. We now develop the theory for the dressing operation. We will show that given any $n_0$ such that \eqref{assumptionA} holds, then $\rho\ind{p}(0,x,p)$, $\rho\ind{s}(0,x,p)$ are unique and satisfy $\rho\ind{p}(0,x,p)\geq 0$, $\rho\ind{s}(0,x,p)>0$. Thus assuming \eqref{assumptionA}, \eqref{assumptionV} is sufficient: the full space of such functions $n_0$ is available, and there is no need to additionally assume the existence of $\rho\ind{p}(0,x,t)$.

In Section \ref{ssectassumptions} we considered $n_0(x,t)$ as the initial occupation function. Our proof methods is based on the fixed-point problem \eqref{fp2}, \eqref{fp1}, which is determined by the function $\hat N_0(x,p)$, from which we will define an occupation function $n(t,x,p)$. We will construct $\hat N_0(x,p)$ in terms of $n_0(x,t)$, and only later show that indeed this fixes the initial condition $n(0,x,p)$. We therefore keep the notation separate, and refer to $n_0$ as the {\em seed occupation function} and to $\hat N_0$ as the {\em seed height function}.

\subsection{Properties of the dressing operation}

We recall that as per our convention, for $n\in \banachspaceplus$, $\vu{T}n: B(\mathbb R) \to \banachspace$ is defined as
\begin{align}
    \vu{T}n
    f(p) = \int\dd{q}T(p,q)n(q)f(q).
\end{align}
Clearly, for every $n\in\banachspaceT$, the operator $\vu{T}n$ is a bounded linear operator with $\norm{\vu{T}n}\ind{op}<1$, and thus $(1-\vu{T}n)^{-1}$ also is a bounded linear operator, with
\begin{equation}\label{equ:bounddressingnormal}
    \norm{(1-\vu{T}n)^{-1}}\ind{op}\leq
    \frac{1}{1-\norm{\vu{T}n}\ind{op}}.
\end{equation}
As per our comments above, we may also see $\vu{T}n: B(\mathbb R,n) \to \banachspace$ (note that in this viewpoint, $\vu{T}n$ has norm $\norm{\vu{T}}\ind{op}$). Thus, by interpreting as such the right-most factor in each term $(\vu{T}n)^a,\,a=1,2,\ldots$ in the Taylor series expansion of $(1-\vu{T}n)^{-1}$, the Taylor series still converges and we have extended $(1-\vu{T}n)^{-1}$ to an operator $B(\mathbb R,n)\to B(\mathbb R,n)$ (as $\banachspace\subset B(\mathbb R,n)$). This is what we refer to as the ``dressing operation" (this is a precise formulation of the well-known concept); here we provide a definition, along with simple statements that follow from it:
\begin{definition}[Dressing operation]
    The dressing operation with respect to an occupation function $n\in B_+(\mathbb R)$ such that $\norm{\vu{T}n}\ind{op}<1$ is
\begin{align}
    {\cdot}\upd{dr}\  :\   B(\mathbb R,n) &\to B(\mathbb R,n)  \nonumber\\
    f &\mapsto f\upd{dr}
    = \sum_{a=0}^\infty (\vu{T}n)^a f
    = f + (1-\vu{T}n)^{-1}\vu{T}n f
    \label{equ:def_dressing_extended}
\end{align}
with $f\upd{dr}-f\in\banachspace$. For any $f\in B(\mathbb R,n)$, this is the unique $f\upd{dr}\in B(\mathbb R,n)$ solving $f\upd{dr} = f +\vu{T}nf\upd{dr}$. In particular, the dressing maps $\banachspace\to\banachspace$ on which it acts as $(1-\vu{T}n)^{-1}$.
\label{def:dressing_general}
\end{definition}
The bound with respect to the norm on $B(\mathbb R,n)$ is immediate again from the Taylor series expansion,
\begin{equation}\label{equ:bounddressingn}
    \norm{f\upd{dr}}_\infty^n \leq \frac{\norm{f}_\infty^n}{1-\norm{\vu{T}n}\ind{op}},\qquad f\in B(\mathbb R,n).
\end{equation}
By \eqref{equ:bounddressingnormal}, the same form of inequality holds for $f,f\upd{dr}\in\banachspace$  with $\norm{\cdot}_\infty$ (which is a stronger inequality, valid for $f$ in that subspace). Further, we similarly have
\begin{equation}
\label{equ:bounddressingnff}
    \norm{f\upd{dr}-f}_\infty \leq \frac{\norm{\vu{T}}\ind{op}\norm{f}_\infty^n}{1-\norm{\vu{T}n}\ind{op}},\qquad f\in B(\mathbb R,n).
\end{equation}

It will be important for the dressing operation to behave well under variations of parameters on which $n$ depend: time and space. In particular, we will be looking for continuity and differentiability. We will consider these properties in two ways: pointwise in $p$ ($p$-pointwise), and uniformly in $p$ ($p$-uniformly).

\begin{lemma}[Differentiability class of the dressing] Given $S\subset \mathbb R^D$, let $n\in B_+(S\times \mathbb R)$ with $\norm{\vu{T}n}\ind{op}<1$ and $f \in B(S\times \mathbb R,n)$. Let $r\in\mathbb N$. (Case 1) If $n,\,nf\in C^r\ind{p}(S)$ then $f\upd{dr}-f\in C^r\ind{p}(S)$. (Case 2) If $n,\,nf\in C^r\ind{u}(S)$ then $f\upd{dr}-f\in C^r\ind{u}(S)$. Further, for $r\geq 1$, the derivatives are obtained by the usual rules, for instance with $D=1$:
\begin{align}\label{equ:derfdr}
        \partial_x f\upd{dr}
        -
        (\partial_x f)\upd{dr} 
        = \qty[\vu{T}\partial_x nf\upd{dr}]\upd{dr} = (1-\vu{T}n)^{-1}\vu{T}\partial_x nf\upd{dr}.
    \end{align}
\label{lem:dressing_differentiability}
\end{lemma}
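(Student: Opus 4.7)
My plan is to argue by induction on $r$, handling Cases 1 and 2 in parallel. The starting observation is that $g := f\upd{dr} - f$ satisfies the linear integral equation $(1-\vu{T}n)g = \vu{T}n f$, or equivalently $g = \vu{T}n f + \vu{T}n\,g$, so $g$ admits the Neumann expansion $g = \sum_{a \geq 1}(\vu{T}n)^a f$. Since the iteration operator $\vu{T}n$ and the input depend on $s$ only through the $C^r$ functions $n$ and $nf$, we expect $g$ to inherit this regularity. The key analytic input throughout is $\norm{\vu{T}n}\ind{op}<1$, which gives geometrically fast absolute summability of the series and of its formal $s$-derivatives, dominating any polynomial factor in $a$ produced by the Leibniz rule.

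For the base case $r=0$ in Case 2, I would observe that $s\mapsto \vu{T}n(s)\in B(\banachspace)$ is continuous since $\norm{\vu{T}n(s)-\vu{T}n(s')}\ind{op}\leq \norm{\vu{T}}\ind{op}\norm{n(s,\cdot)-n(s',\cdot)}_\infty$; continuity of operator inversion on $\{I-A:\norm{A}\ind{op}<1\}$ then yields continuity of $s\mapsto(1-\vu{T}n(s))^{-1}$, and composing with the continuous $\banachspace$-valued map $s\mapsto\vu{T}n(s)f(s,\cdot)$ gives $g\in C^0\ind{u}(S)$. For Case 1, I would argue term by term: each finite iterated integral $(\vu{T}n)^a f$ is continuous in $s$ pointwise in $p$ by dominated convergence on the product kernel, and uniform convergence of the series (at rate $\norm{\vu{T}n}\ind{op}^a$) transfers this to the limit.

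For the inductive step, assume $n, nf\in C^{r+1}$. Then $g\in C^r$ by the inductive hypothesis (using $C^{r+1}\subset C^r$), and differentiating $(1-\vu{T}n)g=\vu{T}n f$ formally in $x$ gives
\begin{equation}
\partial_x g \;=\; (1-\vu{T}n)^{-1}\bigl[\vu{T}\partial_x(nf) + \vu{T}(\partial_x n)\,g\bigr] \;=\; h\upd{dr},
\end{equation}
where $h := \vu{T}\partial_x(nf) + \vu{T}(\partial_x n)\,g \in \banachspace$. I would justify this formal step by term-by-term differentiation of the Neumann series, where Leibniz produces $O(a)$ contributions in term $a$, each bounded by a constant times $\norm{\vu{T}n}\ind{op}^{a-1}$, so the differentiated series is absolutely summable. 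Applying the inductive hypothesis to the pair $(n,h)$ requires $n\in C^r$ (clear) and $nh\in C^r$; the latter holds because $\partial_x n$ and $\partial_x(nf)$ lie in $C^r$ by assumption, because $\vu{T}$ preserves the $C^r$ class by interchanging the momentum integral with $s$-derivatives (using the uniform bounds in \eqref{equ:defCp}, \eqref{equ:defCu}), because $g\in C^r$, and because products of functions with bounded derivatives up to order $r$ are again $C^r$. The inductive hypothesis then yields $h\upd{dr}-h\in C^r$, so $\partial_x g = h + (h\upd{dr}-h)\in C^r$, giving $g\in C^{r+1}$. Multi-index derivatives for $D>1$ follow by iterating the argument on single components.

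Formula \eqref{equ:derfdr} is then recovered, when $\partial_x f$ itself makes sense, by substituting $\vu{T}\partial_x(nf) = \vu{T}(\partial_x n)\,f + \vu{T}n\,\partial_x f$ into the expression for $\partial_x g$; using $f+g=f\upd{dr}$ together with $(1-\vu{T}n)^{-1}\vu{T}n = (1-\vu{T}n)^{-1}-1$ rearranges this into $\partial_x f\upd{dr} - (\partial_x f)\upd{dr} = [\vu{T}(\partial_x n)\,f\upd{dr}]\upd{dr}$. The main obstacle I anticipate is the careful uniform-in-$s$ justification of differentiating under the momentum integral inside the infinite Neumann series, particularly for Case 2 where the target regularity is $C^r\ind{u}$ and one must control the derivative as a $\banachspace$-valued function of $s$; the geometric factor $\norm{\vu{T}n}\ind{op}^a$ against the polynomial-in-$a$ Leibniz combinatorics is what makes the bookkeeping converge and is the crux of the estimates.
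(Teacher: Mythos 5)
Your argument is correct and rests on the same analytic core as the paper's proof --- the Neumann expansion $f\upd{dr}-f=\sum_{a\geq1}(\vu{T}n)^af$, the Leibniz integral rule to push $s$-derivatives through $\vu{T}$, and the fact that $\norm{\vu{T}n}\ind{op}<1$ makes the differentiated series absolutely summable despite the polynomial-in-$a$ factors produced by the product rule --- but you organise the higher-order bookkeeping differently. The paper estimates all derivatives $\partial_s^a\big((\vu{T}n)^c\vu{T}(nf)\big)$ up to order $r$ at once, with an explicit $(c+1)^{\abs{a}}\norm{\vu{T}n}\ind{op}^{c-\abs{a}}$ bound, and exchanges sum and derivative a single time; you instead induct on $r$, differentiating the resolvent identity once to get $\partial_x(f\upd{dr}-f)=h\upd{dr}$ with the new source $h=\vu{T}\partial_x(nf)+\vu{T}(\partial_x n)(f\upd{dr}-f)$, and then invoking the lemma at order $r$ for the pair $(n,h)$, so that only first-order term-by-term differentiation of the series ever needs direct justification. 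Your route yields \eqref{equ:derfdr} more transparently (it drops out of the identity for $\partial_x g$ via $\partial_x(nf)=(\partial_x n)f+n\,\partial_x f$, whenever $\partial_x f$ exists), and in Case 2 your base-case appeal to continuity of operator inversion on $\{1-A:\norm{A}\ind{op}<1\}$ is essentially the implicit-function-theorem alternative the paper itself mentions after the lemma. The price is that at each level of the recursion you must re-verify the lemma's hypotheses for the new source --- $h\in B(S\times\mathbb R,n)$ (immediate since $h\in\banachspace$) and $nh\in C^r\ind{p}(S)$ resp.\ $C^r\ind{u}(S)$ with the required uniform bounds on derivatives --- which you do address, so I see no gap.
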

\begin{proof}
We use \eqref{equ:def_dressing_extended} and it is sufficient to concentrate on the second term, which is thus $f\upd{dr}-f$. Assume pointwise (case 1), resp.~uniform (case 2), differentiability.

Let $g=\vu{T}nf$. Then we show that $g(s,p)$ is  (case 1) $p$-pointwise $r$ times continuously differentiable in $s$ with $\partial_s^a g\in B(S\times \mathbb R)$, $\abs{a}=0,1,\ldots,r$; resp.~(case 2) $p$-uniformly $r$ times continuously differentiable in $s$. As a consequence $g\in B(S\times \mathbb R) \cap C^r\ind{p}(S)$, resp.~$g\in B(S\times \mathbb R) \cap C^r\ind{u}(S)$, thus the same holds for $ng$, and therefore $g$ has the same properties as $f$. By induction, the same holds for $(\vu{T}n)^c f$ for all integers $c\geq 1$.

That $g\in B(S\times \mathbb R)$ follows from $\norm{\vu{T}nf}_\infty\leq \norm{\vu{T}}\ind{op}\norm{f}_\infty^n<\infty$. For pointwise differentiability (case 1): this follows from the Leibniz integral rule for exchanging derivatives and integrals (which follows from the dominated convergence theorem and mean value theorem), using finiteness of the measure on $p$, $\norm{\vu{T}}\ind{op}<\infty$, and the uniform bounds for the derivatives of the integrands $\abs{\partial_s^a (n(s,p)f(s,p))}\leq \norm{\partial_s^a(nf)}_\infty
\;\forall\;s\in S,p\in\mathbb R$. These bounds guarantee that $\partial_s^a g \in B(S\times\mathbb R)$ for all $\abs{a}=1,2,\ldots,r$. For uniform differentiability (case 2): this follows likewise from $\norm{\vu{T}}\ind{op}<\infty$ and $p$-uniform differentiability of  $n(s,p)f(s,p)$.

Now take $g=nf$ and denote $m = \max_{a:\abs{a}=0,1,\ldots,r}(\norm{\partial_s^a n}_\infty,\norm{\partial_s^a g}_\infty)<\infty$. By bounding each term that appears when applying the product rule for derivatives in $\partial_s^a ((\vu{T}n)^c\vu{T}g)$, for any $a:\abs{a}\leq r$, we find that for all $c>r$ we have $\norm{\partial_s^{a} ((\vu{T}n)^c\vu{T}g)}_\infty \leq (c+1)^{\abs{a}} m^{\abs{a}}\norm{\vu{T}n}\ind{op}^{c-\abs{a}}(\norm{\vu{T}}\ind{op}^{1+\abs{a}}\norm{g}_\infty + \norm{\vu{T}}\ind{op}^{\abs{a}}\norm{\vu{T}n}\ind{op})$.
Because $\norm{\vu{T}n}\ind{op} < 1$, this is summable on $c$, hence we use this bound and the Leibniz rule for exchanging sum and derivative, (case 1) pointwise in $p$, or (case 2) uniformly in $p$. Boundedness of the result $\partial_s^a (f\upd{dr}(s,p)-f(s,p)),\;\abs{a}=1,\ldots,r$ follows, and as the series obtained by applying the product rule for derivatives in $\partial_s^a ((\vu{T}n)^c\vu{T}g)$ is absolutely convergent by the above discussion, the terms can be re-organised, to give for instance \eqref{equ:derfdr}.
\end{proof}

For functions $f:S\to\banachspace$, the result could also have been obtained by using the implicit function theorem~\cite{Abraham1988}: one constructs $F:\mathbb{R} \times \banachspace \times \banachspace \to \banachspace$:
\begin{align}
    F_x(f,g) &= f - (1-\vu{T}n(x))g
\end{align}
and uses $n\in C^r(S \to \banachspace)$ and the fact that $F_x(f,g_x(f)) = 0$ implies $g_x(f)(p) = f\upd{dr}(x,p)$. However, as we will the more general case of the above lemma, and as the proof method above works in essentially the same way for the pointwise and uniform cases, both were included.

\begin{lemma}[Differentiability of the dressing in momentum]
    Let $n\in B_+(\mathbb R)$ with $\norm{\vu{T}n}\ind{op}<1$, and $f\in B(\mathbb R,n)$. Let $I\subset \mathbb R$ be a finite interval. If $f(p)$, and ($q$-pointwise) $T(p,q)$, are continuous and $r$ times continuously differentiable in $p\in I$, and if $\norm{\sup_{p\in I} \abs{\partial^{b}_p T(p,\cdot)}}_1 < \infty$ for all $b = 0,1,2,\ldots,r$, then $f\upd{dr}(p)$ is continuous and $r$ times continuously differentiable in $p\in I$.
\end{lemma}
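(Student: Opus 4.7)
My plan is to exploit the integral representation $f\upd{dr}(p) = f(p) + \int \dd{q}\,T(p,q)\,n(q)\,f\upd{dr}(q)$ that comes from Definition~\ref{def:dressing_general} (i.e., $f\upd{dr}=f+\vu{T}nf\upd{dr}$). The crucial observation is that the only $p$-dependence on the right-hand side sits in the explicit $f(p)$ and in the kernel $T(p,q)$; the factor $n(q)f\upd{dr}(q)$ under the integral is independent of $p$. Moreover, since $f\in B(\mathbb R,n)$ implies $f\upd{dr}\in B(\mathbb R,n)$ by Definition~\ref{def:dressing_general}, we have $nf\upd{dr}\in \banachspace$, i.e.~$\norm{nf\upd{dr}}_\infty<\infty$. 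Thus smoothness of $f\upd{dr}$ in $p$ is entirely determined by smoothness of $f$ and of $T(\cdot,q)$, transferred by the Leibniz integral rule.

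First, I would establish continuity (the case $r=0$). Using the dominated convergence theorem with the $L^1$ bound $\abs{T(p,q)n(q)f\upd{dr}(q)}\leq \sup_{p'\in I}\abs{T(p',q)}\cdot\norm{nf\upd{dr}}_\infty$ (whose $q$-integral is finite by hypothesis), continuity of $T(\cdot,q)$ for each $q$ implies continuity of $p\mapsto \int\dd{q}\,T(p,q)n(q)f\upd{dr}(q)$, and adding the continuous $f(p)$ gives continuity of $f\upd{dr}$ on $I$.

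Next I would proceed by induction on $b=1,\ldots,r$, proving at each step that
\begin{equation*}
    \partial_p^b f\upd{dr}(p) = \partial_p^b f(p) + \int \dd{q}\,\partial_p^b T(p,q)\,n(q)\,f\upd{dr}(q),\qquad p\in I.
\end{equation*}
The inductive step is a standard differentiation-under-the-integral argument: by the mean value theorem, the difference quotient $h^{-1}\bigl[\partial_p^{b-1}T(p+h,q)-\partial_p^{b-1}T(p,q)\bigr]$ is bounded in absolute value by $\sup_{p'\in I}\abs{\partial_p^b T(p',q)}$, which after multiplication by $n(q)\abs{f\upd{dr}(q)}\leq \norm{nf\upd{dr}}_\infty$ yields an integrable dominating function on $\mathbb R$, thanks to the hypothesis $\norm{\sup_{p\in I}\abs{\partial_p^b T(p,\cdot)}}_1<\infty$. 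Dominated convergence then justifies passing the derivative through the integral. Continuity of $\partial_p^b f\upd{dr}$ on $I$ follows by the same dominated convergence argument applied to the representation just derived, using continuity of $\partial_p^b T(\cdot,q)$ for each $q$.

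The only delicate point, and effectively the \emph{main obstacle}, is ensuring that the bound on $nf\upd{dr}$ is in place; this is not merely a boundedness statement on $f$ but on its dressed version. It is handed to us free of charge by Definition~\ref{def:dressing_general} together with \eqref{equ:bounddressingn}, which gives $\norm{nf\upd{dr}}_\infty\leq \norm{f}_\infty^n/(1-\norm{\vu{T}n}\ind{op})<\infty$. After this, no iteration or Neumann-series expansion in powers of $\vu{T}n$ is required: the single fixed-point identity does all the work, precisely because $p$ is the free variable outside the integral, not the integration variable. Hence the same strategy yields continuity and differentiability of all orders up to $r$ simultaneously, and no compatibility with the operator structure of $(1-\vu{T}n)^{-1}$ needs to be checked.
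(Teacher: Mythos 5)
Your proposal is correct and follows essentially the same route as the paper: both rest on the fixed-point identity $f\upd{dr}(p)=f(p)+\int\dd{q}\,T(p,q)n(q)f\upd{dr}(q)$, the bound $\norm{nf\upd{dr}}_\infty<\infty$ from Definition~\ref{def:dressing_general}, and the dominating function $\sup_{p\in I}\abs{\partial_p^b T(p,\cdot)}\,\norm{nf\upd{dr}}_\infty$ to justify dominated convergence and differentiation under the integral. Your explicit induction on $b$ merely spells out the Leibniz integral rule step that the paper invokes directly.
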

\begin{proof}
We use the bounded convergence theorem (for continuity) and the Leibniz integral rule (for continuous differentiability). Define $m^b(q) = \sup_{p\in I} \abs{\partial^{b}_p T(p,q)}$, which is integrable by the condition of the lemma. We bound
\begin{equation}
	\abs{\partial_q^bT(p,q)n(q)f\upd{dr}(q)} \leq m^b(q) \norm{nf\upd{dr}}_\infty \quad\forall\;p\in I,\,q\in\mathbb R.
\end{equation}
As $\norm{nf\upd{dr}}_\infty <\infty$ (by Definition \ref{def:dressing_general}), the function $m^b(q) \norm{nf\upd{dr}}_\infty$ is integrable on $q\in\mathbb R$. The lemma follows by writing
\begin{align}
    f\upd{dr}(p) &= f(p)+ \int\dd{q} T(p,q) n(q) f\upd{dr}(q).
\end{align}
\end{proof}


Finally, as is clear from our discussion in Section \ref{sectoverview}, one very important quantity in GHD is $1\upd{dr} = (1-\vu{T}n)^{-1} 1$: the dressing of the constant function $1 : p\mapsto 1$ (with a slight abuse of notation) with respect to $n$.  We will need to have precise bounds for this quantity. In particular, the lower bound is shown to be strictly positive either under the stronger condition \eqref{assumptionS} for the interaction kernel, but with the weaker condition $\norm{\vu{T}n}\ind{op}<1$ for the occupation function $n\in B(\mathbb R)$, or under no additional condition on the interaction kernel, but with the stronger condition $\norm{\vu{T}n}\ind{op}<1/2$. Note that these are consequences, on $n= n_0(x,\cdot)$ (for any fixed $x\in\mathbb R$), of the two cases assumed in Assumption \eqref{assumptionA} for $n_0$.
\begin{lemma}[Bounds on $1\upd{dr}$]
    Let $n\in B_+(\mathbb R)$, and assume that: either \eqref{assumptionS} holds and $\norm{\vu{T}n}\ind{op} < 1$, or $\norm{\vu{T}n}\ind{op} < 1/2$. Then, for all $p\in\mathbb R$:
    \begin{align}
        0 < R(\norm{\vu{T}n}\ind{op}) \leq 1\upd{dr}(p) \leq \tfrac{1}{1-\norm{\vu{T}n}\ind{op}} < \infty
    \end{align}
    where we make use of the function \eqref{equ:functionR}.
\label{lem:dressing_boundedness_of_1dr}
\end{lemma}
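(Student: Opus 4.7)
The plan is to handle the upper bound uniformly and then split the lower bound according to the two alternatives in the hypothesis.

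For the upper bound, I would simply apply the general bound \eqref{equ:bounddressingnormal} to the constant function $1\in\banachspace$, noting $\|1\|_\infty=1$, which gives $|1\upd{dr}(p)|\leq \|1\upd{dr}\|_\infty\leq 1/(1-\|\vu{T}n\|\ind{op})$ pointwise in $p$. This bound holds in both cases and in particular shows $1\upd{dr}\in \banachspace$, so the expression $1\upd{dr} = 1+\vu{T}n\,1\upd{dr}$ is pointwise meaningful and will drive the lower bound arguments.

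For the lower bound in case \eqref{assumptionS}, I would work with the Neumann series $1\upd{dr}=\sum_{a\geq 0}(\vu{T}n)^a 1$ (which converges in $\banachspace$ since $\|\vu{T}n\|\ind{op}<1$). If $T\geq 0$ then every term is pointwise non-negative and the $a=0$ term is $1$, giving $1\upd{dr}(p)\geq 1\geq 1-\|\vu{T}n\|\ind{op}=R(\|\vu{T}n\|\ind{op})$. If $T\leq 0$, terms have alternating sign, so I would group them in pairs:
\begin{equation*}
1\upd{dr} = \sum_{k\geq 0} (\vu{T}n)^{2k}\bigl(1+\vu{T}n\, 1\bigr).
\end{equation*}
A direct computation shows that $(\vu{T}n)^2$ preserves pointwise non-negativity (because the product $T(p,q_1)T(q_1,q_2)\geq 0$ when $T\leq 0$, and $n\geq 0$); since the inner function satisfies $(1+\vu{T}n\,1)(p)\geq 1-\|\vu{T}n\|\ind{op}\geq 0$ pointwise, every $k\geq 1$ term is non-negative. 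The $k=0$ term alone then already gives $1\upd{dr}(p)\geq 1-\|\vu{T}n\|\ind{op}=R(\|\vu{T}n\|\ind{op})$.

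For the lower bound in the non-sign case, where only $\|\vu{T}n\|\ind{op}<1/2$ is assumed, the pairing trick is unavailable, so I would instead use the fixed-point identity $1\upd{dr}(p)=1+\int\dd{q}T(p,q)n(q)1\upd{dr}(q)$ together with the upper bound just established. Taking absolute values inside the integral,
\begin{equation*}
1\upd{dr}(p) \geq 1 - \|\vu{T}n\|\ind{op}\,\|1\upd{dr}\|_\infty \geq 1 - \frac{\|\vu{T}n\|\ind{op}}{1-\|\vu{T}n\|\ind{op}} = \frac{1-2\|\vu{T}n\|\ind{op}}{1-\|\vu{T}n\|\ind{op}} = R(\|\vu{T}n\|\ind{op}),
\end{equation*}
which is strictly positive under the hypothesis $\|\vu{T}n\|\ind{op}<1/2$. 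The only subtlety I expect is the sign-alternation bookkeeping in the $T\leq 0$ subcase of \eqref{assumptionS} — verifying that $(\vu{T}n)^{2k}$ genuinely preserves non-negativity and that the pairing of the Neumann series term-by-term is justified by the absolute convergence in $\banachspace$ — but this is essentially a one-line Fubini argument once the preservation-of-sign observation is made.
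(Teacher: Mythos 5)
Your proof is correct and takes essentially the same approach as the paper: the same Neumann-series pairing $\sum_{k\geq 0}(\vu{T}n)^{2k}(1+\vu{T}n\,1)$ for the fixed-sign case, and the same estimate $1\upd{dr}(p)\geq 1-\norm{\vu{T}n}\ind{op}/(1-\norm{\vu{T}n}\ind{op})$ from the fixed-point identity for the $\norm{\vu{T}n}\ind{op}<1/2$ case. The only cosmetic difference is that the paper applies the pairing uniformly to both signs of $T$, whereas you treat $T\geq 0$ separately (where every Neumann term is already non-negative).
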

\begin{proof}
The upper bound follows from the operator norm \eqref{equ:bounddressingnormal}. If $n\in\banachspaceT$, the lower bound follows from rewriting the series as (here and below we keep the independent variable $p$ implicit):
\begin{align}
    1\upd{dr} &= \sum_{a=0}^\infty (\vu{T}n)^a 1 = \sum_{a=0}^\infty (\vu{T}n)^{2a} (1+\vu{T}n1) \geq 1+\vu{T}n1 \geq 1-\norm{\vu{T}n}\ind{op}
\end{align}
where we used that $\vu{T}n$ either has a non-positive or non-negative kernel and thus $(\vu{T}n)^{2a}$ maps non-negative functions to non-negative functions. If $n\in B'\ind{T}(\mathbb R)$, it follows from writing
\begin{equation}
	\abs{(1-\vu{T}n)^{-1} 1}= \abs{1 + \vu{T}n(1-\vu{T}n)^{-1}1}
	\geq 1 - \abs{\vu{T}n(1-\vu{T}n)^{-1}1}
\end{equation}
and
\begin{equation}
	\abs{\vu{T}n(1-\vu{T}n)^{-1}1} \leq
	\frac{\norm{\vu{T}n}\ind{op}}{
	1-\norm{\vu{T}n}\ind{op}
	}< 1
\end{equation}
where the last inequality follows from $\norm{\vu{T}n}\ind{op}<1/2$.
\end{proof}

\subsection{Initial conditions for particle and state densities}\label{ssect:initialcondition}

We now show that the existence of particle and state densities related to $n_0(x,p)$ as per Eqs.~\eqref{equ:summaryn}, \eqref{equ:summaryrhos} (where we take $t=0$ and replace $n(0,x,p)$ by the seed occupation function $n_0(x,p)$), and their non-negativity / positivity, follow from \eqref{equ:summary_bound_n}, \eqref{equ:summary_bound_n_2}. As mentioned, this implies that the Assumptions can be stated simply as \eqref{assumptionA} and \eqref{assumptionV}
without reference to the particle density.

\begin{lemma}\label{lem:initialdensities}
    Given \eqref{assumptionA}, there are unique $\rho\ind{p}(x,p)$ and $\rho\ind{s}(x,p)$ satisfying
    \begin{align}\label{equ:rhopproof}
    n_0(x,p) &= \frac{\rho\ind{p}(x,p)}{\rho\ind{s}(x,q)}\\ \label{equ:rhosproof}
    \rho\ind{s}(x,p) &= \frac1{2\pi}
    +\int_{\mathbb R}\dd{q} T(p,q)\rho\ind{p}(x,q).
    \end{align}
    These satisfy $\rho\ind{p}(x,p)\geq 0$ and $\rho\ind{s}(x,p)>0$ for all $(x,p)\in\mathbb R^2$.
\end{lemma}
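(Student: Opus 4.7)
The plan is to reduce the coupled system \eqref{equ:rhopproof}--\eqref{equ:rhosproof} to a single integral equation for $\rho\ind{s}$ that is exactly solved by the dressing of the constant function $1$, and then invoke Lemma \ref{lem:dressing_boundedness_of_1dr}.

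First, I would substitute \eqref{equ:rhopproof} (in the form $\rho\ind{p}(x,p) = n_0(x,p)\rho\ind{s}(x,p)$) into \eqref{equ:rhosproof} to obtain, for each fixed $x\in\mathbb{R}$,
\begin{equation}
    \rho\ind{s}(x,p) = \tfrac{1}{2\pi} + \vu{T} n_0(x)\,\rho\ind{s}(x,\cdot)(p),
\end{equation}
i.e.\ $(1-\vu{T}n_0(x))\,\rho\ind{s}(x,\cdot) = \tfrac{1}{2\pi}\,1$. Conversely any $\rho\ind{s}$ solving this equation, together with $\rho\ind{p} := n_0 \rho\ind{s}$, gives a solution of \eqref{equ:rhopproof}--\eqref{equ:rhosproof}, so it suffices to establish existence, uniqueness, and strict positivity of the bounded solution $\rho\ind{s}(x,\cdot)\in\banachspace$ of this fixed-point equation.

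Next, I would invoke \eqref{assumptionA}: for every $x\in\mathbb{R}$, $n_0(x,\cdot)\in B\ind{S}(\mathbb{R})$, so in particular $\norm{\vu{T}n_0(x,\cdot)}\ind{op}<1$, and moreover the hypothesis of Lemma \ref{lem:dressing_boundedness_of_1dr} is satisfied (the fixed-sign case \eqref{assumptionS} with $\norm{\vu{T}n_0(x,\cdot)}\ind{op}<1$, or the general case with $\norm{\vu{T}n_0(x,\cdot)}\ind{op}<1/2$). Hence Definition \ref{def:dressing_general} applies with $n=n_0(x,\cdot)$: the dressing $1\upd{dr}(x,p) := (1-\vu{T}n_0(x))^{-1}1(p)$ exists, lies in $\banachspace$ (as a function of $p$), and is the unique element of $\banachspace$ satisfying $1\upd{dr} = 1 + \vu{T}n_0(x)\,1\upd{dr}$. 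I therefore define
\begin{equation}
    \rho\ind{s}(x,p) := \tfrac{1}{2\pi}\,1\upd{dr}(x,p), \qquad \rho\ind{p}(x,p) := n_0(x,p)\,\rho\ind{s}(x,p).
\end{equation}

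For positivity, Lemma \ref{lem:dressing_boundedness_of_1dr} directly yields $1\upd{dr}(x,p)\geq R(\norm{\vu{T}n_0(x,\cdot)}\ind{op})>0$, so $\rho\ind{s}(x,p)>0$ for every $(x,p)$; then $\rho\ind{p}(x,p)\geq 0$ follows from $n_0\geq 0$. For uniqueness, suppose $(\tilde\rho\ind{p},\tilde\rho\ind{s})$ is another solution of \eqref{equ:rhopproof}--\eqref{equ:rhosproof} (with the implicit assumption that the integral in \eqref{equ:rhosproof} converges absolutely, and that $\tilde\rho\ind{s}$ is finite so that \eqref{equ:rhopproof} makes sense). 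Substituting $\tilde\rho\ind{p} = n_0\tilde\rho\ind{s}$ into \eqref{equ:rhosproof} gives the same fixed-point equation for $\tilde\rho\ind{s}$; then $\tilde\rho\ind{s}(x,\cdot) = \tfrac{1}{2\pi} + \vu{T}(n_0(x)\tilde\rho\ind{s}(x,\cdot))$ forces $\tilde\rho\ind{s}(x,\cdot)\in\banachspace$ as soon as $n_0(x)\tilde\rho\ind{s}(x,\cdot)$ is bounded, which the uniqueness clause of Definition \ref{def:dressing_general} then upgrades to $\tilde\rho\ind{s}=\rho\ind{s}$, whence $\tilde\rho\ind{p}=\rho\ind{p}$.

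The only mildly delicate point is specifying the function class in which uniqueness holds, since the lemma statement does not spell it out; the natural reading is $\rho\ind{s}(x,\cdot)\in\banachspace$ (equivalently $n_0(x,\cdot)\rho\ind{s}(x,\cdot)\in\banachspace$, which is automatically implied by the fixed-point equation), and under this reading uniqueness is immediate from Definition \ref{def:dressing_general}. No harder estimates are needed beyond those already packaged in Lemma \ref{lem:dressing_boundedness_of_1dr}.
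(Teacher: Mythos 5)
Your proof is correct and follows essentially the same route as the paper: reduce the system to $2\pi\rho\ind{s}(x,\cdot) = (1-\vu{T}n_0(x))^{-1}1$, then get existence, uniqueness and positivity from Definition \ref{def:dressing_general} and Lemma \ref{lem:dressing_boundedness_of_1dr}. The paper's version is terser; your explicit remark about the function class in which uniqueness holds is a reasonable elaboration of a point the paper leaves implicit.
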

\begin{proof}
The combination of Eqs.~\eqref{equ:rhopproof} and \eqref{equ:rhosproof} imply that 
\begin{equation}\label{equ:rhos1dr}
    2\pi\rho\ind{s}(x,p) = 1\upd{dr}_0(x,p),
\end{equation}
where $1\upd{dr}_0(x,p) = (1-\vu{T}n_0(x))^{-1}1(p)$ is the dressing, Definition \ref{def:dressing_general}, of the constant function $1: p\mapsto 1$ with respect to $n_0(x): p\mapsto n_0(x,p)$. The dressing operation is well defined thanks to the condition on $n_0$. Thus $\rho\ind{s}(x,p)$ is unique, and by Lemma \ref{lem:dressing_boundedness_of_1dr}, it is positive $\rho\ind{s}(x,p)>0$. By \eqref{equ:rhopproof} $\rho\ind{p}(x,p)$ is then also unique and non-negative.
\end{proof}

\section{The fixed-point problem} \label{sec:fixedpoint}

We now study the properties of the fixed-point problem \eqref{fp2} discussed in Section \ref{heightfields}. Two aspects are to be investigated: (1) the seed height function, that is the function $\hat N_0(x,p)$ that determines the fixed-point problem, which we will construct in terms of the seed occupation function $n_0$; and (2) the fixed point itself, from which the time-dependent occupation function $n(t,x,p)$ will be defined. We will in particular show that $n(0,x,p) = n_0(x,p)$, thus guaranteeing that the full set of initial conditions satisfying the properties of $n_0$ is available.

{\bf Throughout this section, we assume  \eqref{assumptionA} and \eqref{assumptionV} (see Section \ref{sec:basics}), without further stating it.}

\subsection{The seed height function}\label{ssectseed}

In this subsection, we construct $\hat N_0(x,p)$ and the related $\hat X_0(x,p)$ in terms of $n_0(x,p)$ (given in Assumption \eqref{assumptionA}). In particular, $\hat X_0$ will be shown, in subsection \ref{ssectinitcond}, to correspond to the initial condition at $t=0$ for the fixed-point problem.

With \eqref{equ:rhos1dr} and the change of metric \eqref{metric} to the ``free" coordinate $\hat x$, we define $\hat X_0(x,p)$ as:
\begin{align}\label{equ:K0definition}
    \hat X_0(x,p) = \int_0^x\dd{y} 1\upd{dr}_0(y,p)
\end{align}
where we recall that $1\upd{dr}_0(x,p) = (1-\vu{T}n_0(x))^{-1}1(p)$ is the dressing of the constant function $1$ with respect to $n_0(x): p\mapsto n_0(x,p)$. In accordance with our informal discussion, we will denote the inverse, the ``real" coordinate, as
\begin{equation}
	X_0(\cdot,p) = \hat X_0^{-1}(\cdot,p).
\end{equation}
The following Lemma establishes basic properties of these functions.
\begin{lemma}[Basic properties of $\hat X_0,\,X_0$]\label{lem:K0basic}
For every $x>y$ and $p\in\mathbb R$,
    \begin{equation}\label{equ:K0lip}
        R(\norm{\vu{T}n}\ind{op}) \leq \frac{\hat X_0(x,p)-\hat X_0(y,p)}{x-y}\leq
        \frac{1}{1-\norm{\vu{T}n}\ind{op}}
    \end{equation}
    and for every $\hat x>\hat y$,
    \begin{equation}\label{equ:K0invlip}
        1-\norm{\vu{T}n}\ind{op}  \leq \frac{X_0(\hat x,p)-X_0(\hat y,p)}{\hat x-\hat y}\leq
        \frac1{R(\norm{\vu{T}n}\ind{op})}
    \end{equation}
    where the function $R(z)$ is defined in \eqref{equ:functionR}.
Hence $\hat X_0(\cdot,p)$ is strictly increasing, invertible and a bijection of $\mathbb R$ for every $p$, $\mathbb R\ni x\mapsto \hat X_0(x)\in\banachspace$ is bi-Lipschitz, and the inverse $X_0(\cdot,p)$ has the same properties. 
\end{lemma}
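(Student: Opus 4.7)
The plan is to reduce the first pair of inequalities \eqref{equ:K0lip} to Lemma~\ref{lem:dressing_boundedness_of_1dr} applied uniformly in $y$, and then obtain \eqref{equ:K0invlip} and the bi-Lipschitz property by inversion and by lifting the $p$-pointwise estimates to the supremum norm on $\banachspace$.

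First I would fix $y\in\mathbb R$ and observe that $n_0(y,\cdot)\in B_+(\mathbb R)$ satisfies $\norm{\vu{T}n_0(y,\cdot)}\ind{op}\leq \norm{\vu{T}n_0}\ind{op}$ (using the chain of inequalities following the definition of $\banachspaceT$). Under \eqref{assumptionA} this is strictly less than $1$ in the sign case \eqref{assumptionS}, and strictly less than $1/2$ otherwise, so Lemma~\ref{lem:dressing_boundedness_of_1dr} applies to $n_0(y,\cdot)$ and yields, for every $p\in\mathbb R$,
\[
    R\bigl(\norm{\vu{T}n_0(y,\cdot)}\ind{op}\bigr) \;\leq\; 1\upd{dr}_0(y,p) \;\leq\; \frac{1}{1-\norm{\vu{T}n_0(y,\cdot)}\ind{op}}.
\]
Since $R$ is non-increasing on its domain (immediate from \eqref{equ:functionR}) and $z\mapsto (1-z)^{-1}$ is non-decreasing on $[0,1)$, these pointwise bounds may be weakened to $(y,p)$-uniform bounds in terms of $\norm{\vu{T}n_0}\ind{op}$. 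Integrating the resulting uniform bounds over $[y,x]$ in the definition \eqref{equ:K0definition} gives \eqref{equ:K0lip} directly.

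The strictness $R(\norm{\vu{T}n_0}\ind{op})>0$, which is guaranteed precisely by \eqref{assumptionA}, is the crucial point that makes $\hat X_0(\cdot,p)$ strictly monotone. Combined with continuity of $\hat X_0(\cdot,p)$ (immediate from boundedness of the integrand) and the forced divergence $\hat X_0(x,p)\to \pm\infty$ as $x\to\pm\infty$ (from the positive lower Lipschitz constant), the intermediate value theorem delivers that $\hat X_0(\cdot,p):\mathbb R\to\mathbb R$ is a continuous bijection, so the inverse $X_0(\cdot,p)$ exists and is itself strictly increasing and continuous. To obtain \eqref{equ:K0invlip}, I would substitute $\hat x=\hat X_0(x,p)$ and $\hat y=\hat X_0(y,p)$ into \eqref{equ:K0lip} and take reciprocals of the difference quotient, which exchanges the roles of the upper and lower bounds.

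Finally, because both pairs of bounds are uniform in $p$, they lift at once to the supremum norm on $\banachspace$: integrating the uniform upper bound from $0$ shows $\hat X_0(x)\in\banachspace$ for every $x$, and the same bound gives the Lipschitz constant of $\mathbb R\ni x\mapsto \hat X_0(x)\in\banachspace$, while the uniform lower bound provides the bi-Lipschitz lower constant; the identical argument using \eqref{equ:K0invlip} handles $X_0$. The only step requiring any care is the bookkeeping of the monotonicity of $R$ and of $z\mapsto(1-z)^{-1}$ so that the $y$-dependent bounds on $1\upd{dr}_0(y,\cdot)$ are replaced by $y$-uniform ones before integration; everything else is routine.
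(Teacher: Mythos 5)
Your proposal is correct and follows essentially the same route as the paper's (very terse) proof: apply Lemma~\ref{lem:dressing_boundedness_of_1dr} to $n_0(y,\cdot)$, uniformize in $y$ using the monotonicity of $R$ and of $z\mapsto(1-z)^{-1}$, integrate \eqref{equ:K0definition}, and obtain \eqref{equ:K0invlip} and the bi-Lipschitz/bijection claims by inversion. Your write-up simply makes explicit the bookkeeping that the paper leaves to the reader.
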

\begin{proof}
Inequalities \eqref{equ:K0lip} follow immediately from \eqref{equ:K0definition}, Lemma \ref{lem:dressing_boundedness_of_1dr} and Assumption \eqref{assumptionA}, and \eqref{equ:K0invlip} then follows from \eqref{equ:K0lip}.
\end{proof}
Thanks to uniform boundedness of $1_0\upd{dr}(x,p)$, Lemma \ref{lem:dressing_boundedness_of_1dr}, $\hat X_0(x,p)$ is, for every $p$, a.e.~differentiable in $x$. Further, by the uniform lower bound in Lemma \ref{lem:dressing_boundedness_of_1dr}, the derivative is uniformly strictly positive. Thus, by the inverse function theorem and bijectivity of $X_0(\cdot,p)$, for every $p\in \mathbb R$ and a.e.~$\hat x\in\mathbb R$, the derivative of $X_0(\cdot,p)$ exists and is $\partial_{\hat x}X_0(\hat x,p) = 1/1_0\upd{dr}(X_0(\hat x,p),p)$. By continuity, we therefore have
\begin{equation}\label{equ:X0def}
	X_0(\hat x,p) = \int_0^{\hat x}\dd \hat y
	\frac1{1_0\upd{dr}(X_0(\hat y,p),p)}.
\end{equation}

The language of measures clarify the above, and puts on firm ground the change-of-metric picture that we discussed informally. We note that $\dd{\hat X_0(x)}$ is a $\banachspace$-valued measure that is absolutely continuous with respect to the Lebesgue measure. Thus we may evaluate its Radon-Nikodym derivative; likewise for $\dd{X_0(\hat x)}$. As these are uniquely defined a.e., and it is convenient to define them {\em everywhere} as
\begin{equation}\label{equ:measureXhat0}
	\frac{\dd{\hat X_0(x)}}{\dd{x}}
	= 1_0\upd{dr}(x),\qquad
	\frac{\dd{X_0(\hat x)}}{\dd{\hat x}}
	= \frac1{1_0\upd{dr}(X_0(\hat x,\cdot),\cdot)}.
\end{equation}

We can now establish differentiability properties for these change of coordinates. Recall that $C^r\ind{p}(\mathbb R)$ is the space of $p$-pointwise $r$ times differentiable functions of $x\in\mathbb R$ with boundedness conditions on the derivatives, and similarly $C^r\ind{u}(\mathbb R)$ for uniform differentiability (see \eqref{equ:defCp}, \eqref{equ:defCu}).
\begin{lemma}[Differentiability of $\hat X_0,\,X_0$]\label{lem:k0_uniformly}
    Let $r\geq 0$ be an integer. If $n_0\in C^r\ind{p}(\mathbb R)$, then $\hat X_0,\,X_0\in C^{r+1}\ind{p}(\mathbb R)$. If $n_0\in C^r\ind{u}(\mathbb R)$, then $\hat X_0,\,X_0\in C^{r+1}\ind{u}(\mathbb R)$.
\end{lemma}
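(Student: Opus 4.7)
The plan is to reduce both claims to the differentiability of the dressed constant $1_0\upd{dr}(x,p) = (1-\vu T n_0(x))^{-1}1(p)$. First, I would apply Lemma~\ref{lem:dressing_differentiability} to $n=n_0$ and the constant function $f\equiv 1$, which lies in $B(\mathbb R^2,n_0)$ because $\norm{n_0}_\infty<\infty$, and which satisfies $nf = n_0$. Under the hypothesis $n_0 \in C^r\ind{p}(\mathbb R)$ the lemma gives $1_0\upd{dr} - 1 \in C^r\ind{p}(\mathbb R)$, hence $1_0\upd{dr} \in C^r\ind{p}(\mathbb R)$; the identical argument in the uniform case yields $1_0\upd{dr} \in C^r\ind{u}(\mathbb R)$.

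From this, $\hat X_0 \in C^{r+1}\ind{p}(\mathbb R)$ (resp.\ $C^{r+1}\ind{u}(\mathbb R)$) follows directly from \eqref{equ:K0definition}. Indeed, for $p$ fixed (or as a $\banachspace$-valued Bochner integral, in the uniform case) $\partial_x \hat X_0 = 1_0\upd{dr}$, and therefore $\partial_x^a \hat X_0 = \partial_x^{a-1} 1_0\upd{dr}$ for $1 \le a \le r+1$. Uniform boundedness of these derivatives is provided by Lemma~\ref{lem:dressing_boundedness_of_1dr} for $a=1$ and by $1_0\upd{dr} \in C^r\ind{p}(\mathbb R)$ (resp.\ $C^r\ind{u}$) for $a \ge 2$.

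For $X_0$, my approach is to invoke the pointwise 1D inverse function theorem: since Lemma~\ref{lem:dressing_boundedness_of_1dr} yields the uniform lower bound $\partial_x\hat X_0(x,p) = 1_0\upd{dr}(x,p) \ge R(\norm{\vu T n_0}\ind{op}) > 0$, for each $p$ the inverse $X_0(\cdot,p)$ belongs to $C^{r+1}(\mathbb R)$, with
\begin{equation*}
    \partial_{\hat x} X_0(\hat x, p) = \frac{1}{1_0\upd{dr}(X_0(\hat x,p),p)}.
\end{equation*}
Higher derivatives $\partial_{\hat x}^a X_0$ can be expanded via Fa\`a di Bruno as a finite sum of products of factors $\partial_x^b 1_0\upd{dr}(X_0(\hat x,p),p)$ with $1\le b\le a-1$, divided by a positive power of $1_0\upd{dr}(X_0(\hat x,p),p)$. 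Each numerator factor is in $B(\mathbb R^2)$ (from $1_0\upd{dr}\in C^r\ind{p}(\mathbb R)$), and each denominator factor is bounded by $1/R(\norm{\vu T n_0}\ind{op})$, so $\partial_{\hat x}^a X_0 \in B(\mathbb R^2)$ for $1 \le a \le r+1$, giving $X_0 \in C^{r+1}\ind{p}(\mathbb R)$.

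The uniform case for $X_0$ will require the additional observation that composition with $X_0$ preserves $p$-uniform continuity: by Lemma~\ref{lem:K0basic}, $X_0$ is Lipschitz in $\hat x$ uniformly in $p$, so whenever $\partial_x^b 1_0\upd{dr}(\cdot,p)$ is $p$-uniformly continuous on $\mathbb R$, so is the composition $\partial_x^b 1_0\upd{dr}(X_0(\cdot),\cdot)$. Together with $1_0\upd{dr} \in C^r\ind{u}(\mathbb R)$, this propagates uniform continuity through every summand of the Fa\`a di Bruno expansion. I expect the main obstacle — really a bookkeeping issue rather than a conceptual one — to be stating and checking this uniform-continuity-under-composition step cleanly across all orders $a \le r+1$.
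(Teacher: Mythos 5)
Your proposal is correct and follows essentially the same route as the paper: Lemma~\ref{lem:dressing_differentiability} applied to $f\equiv 1$ gives $1_0\upd{dr}\in C^r\ind{p}(\mathbb R)$ (resp.~$C^r\ind{u}$), integration gives $\hat X_0$, and the inverse function theorem with the uniform lower bound from Lemma~\ref{lem:dressing_boundedness_of_1dr} gives $X_0$. The only cosmetic difference is that you make the higher derivatives of $X_0$ explicit via Fa\`a di Bruno where the paper runs an induction on the order of differentiability "omitting the details of the exact expressions", and the uniform-continuity-under-composition point you flag is handled the same way (it is indeed just bookkeeping, resolvable by localizing to compacts where the $\banachspace$-valued derivatives are uniformly continuous).
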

\begin{proof}
From Lemma \ref{lem:dressing_differentiability} with $f(x,p)=1$ and $D=1$ we find that $1_0\upd{dr}\in C\ind{p}^{r}(\mathbb R)$, resp.~$1_0\upd{dr}\in C\ind{u}^{r}(\mathbb R)$ in the pointwise, uniform case respectively. Then by the definition \eqref{equ:K0definition}, $\hat X_0$ is $p$-pointwise, resp.~$p$-uniformly, $r+1$ times continuously differentiable in $x$. As $1_0\upd{dr}\in C\ind{p}^{r}(\mathbb R)$, and considering \eqref{equ:defCp}, \eqref{equ:defCu}, all derivatives $\partial_x^a \hat X_0(x,p)$ are uniformly bounded on $(x,p)\in \mathbb R^2$ for all $2\leq a\leq r+1$. Further, by the upper bound in Lemma \ref{lem:dressing_boundedness_of_1dr}, so is $\partial_x \hat X_0(x,p)$. Hence, $\hat X_0\in C\ind{p}^{r+1}(\mathbb R)$, resp.~$\hat X_0\in C\ind{u}^{r+1}(\mathbb R)$, in the pointwise, uniform case respectively.

As $\hat X_0(x,p)$ is $p$-pointwise $r+1$ times continuously differentiable in $x$, and as its derivative $\partial_x \hat X_0(x,p) = 1_0\upd{dr}(x,p)$ is uniformly strictly positive on $x\in \mathbb R$ for every $p$ by the lower bound in Lemma \ref{lem:dressing_boundedness_of_1dr}, then, by the inverse function theorem, $ X_0(\hat x,p)$ is $p$-pointwise $r+1$ times continuously differentiable in $\hat x\in \hat X_0(\mathbb R,p)=\mathbb R$. Here the equality $\hat X_0(\mathbb R,p)=\mathbb R$ follows from Lemma \ref{lem:K0basic}.

In order to establish the stronger statements that $X_0\in C\ind{p}^{r+1}(\mathbb R)$, resp.~$X_0\in C\ind{u}^{r+1}(\mathbb R)$, we need to analyse the derivatives themselves. These can be obtained by using \eqref{equ:X0def} and their derivatives w.r.t.~$\hat x$. We may proceed by induction, omitting the details of the exact expressions of such derivatives. Assume that $X_0\in C\ind{p}^{a}(\mathbb R)$, resp.~$X_0\in C\ind{u}^{a}(\mathbb R)$, for some $a\geq 0$. By Lemma \ref{lem:K0basic} and the result of the previous paragraph, this is true for $a=0$. Then, making use of uniform strict positivity of $1_0\upd{dr}(x,p)$, and of $1_0\upd{dr}\in C\ind{p}^{r}(\mathbb R)$, resp.~$1_0\upd{dr}\in C\ind{u}^{r}(\mathbb R)$, and evaluating the derivatives starting from the right-hand side of \eqref{equ:measureXhat0} (at the step $a=0$), we find that if $a\leq r$, then $X_0(\cdot,p)$ is $p$-pointwise, resp.~$p$-uniformly, $a+1$ times continuously differentiable, and its derivative $\partial_{\hat x}^{a+1} X_0(\hat x,p)$ is uniformly bounded on $x,p\in\mathbb R$. Thus, by induction, $X_0\in C\ind{p}^{r+1}(\mathbb R)$, resp.~$X_0\in C\ind{u}^{r+1}(\mathbb R)$.
\end{proof}

Finally, we define the seed height function $\hat N_0(\hat x,p)$ and establish its basic properties. It was defined (informally) in \eqref{equ:derivation_N0_def}, which we repeat here:
\begin{align}
    \hat N_0(\hat x,p) &= \int_0^{\hat x}\dd{\hat y} n_0(X_0(\hat y,p),p).\label{equ:N0_def}
\end{align}
\begin{lemma}[Properties of $
\hat N_0$]\label{lem:N0_uniformly}
For every $\hat x>\hat y$ and $p\in\mathbb R$,
    \begin{equation}\label{equ:N0variation}
        0\leq \frac{\hat N_0(\hat x,p)-\hat N_0(\hat y,p)}{\hat x-\hat y}
        \leq
        \norm{n_0}_\infty.
    \end{equation}
Therefore $\mathbb R\ni \hat x\mapsto \hat N_0(\hat x)\in\banachspace$ is Lipschitz and non-decreasing. Further, if $n_0\in C^r\ind{p}(\mathbb R)$ (resp.~$C^{r}\ind{u}(\mathbb R)$) for some $r\geq 0$, then $\hat N_0\in C^{r+1}\ind{p}(\mathbb R)$ (resp.~$C^{r+1}\ind{u}(\mathbb R)$).
\end{lemma}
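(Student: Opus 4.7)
The plan is to unpack each of the three assertions directly from the definition \eqref{equ:N0_def} and the properties of $X_0$ established in Lemma \ref{lem:k0_uniformly}.

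For the bound \eqref{equ:N0variation}, I would write
\begin{equation*}
    \frac{\hat N_0(\hat x,p)-\hat N_0(\hat y,p)}{\hat x-\hat y} = \frac{1}{\hat x-\hat y}\int_{\hat y}^{\hat x}\dd{\hat z}\, n_0(X_0(\hat z,p),p)
\end{equation*}
and invoke the pointwise bounds $0 \leq n_0(X_0(\hat z,p),p) \leq \norm{n_0}_\infty$; the lower bound comes from Assumption \eqref{assumptionA} ($n_0 \in B_+(\mathbb{R}^2)$) and the upper bound is immediate. Taking $\sup_p$ on the left yields $\norm{\hat N_0(\hat x)-\hat N_0(\hat y)}_\infty \leq \norm{n_0}_\infty\,\abs{\hat x-\hat y}$, which is the Lipschitz property of $\hat x \mapsto \hat N_0(\hat x) \in \banachspace$, and the non-decreasing property follows from non-negativity of the integrand.

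For the differentiability statement, the strategy is to apply the chain rule (Faà di Bruno) to the composition $g(\hat y,p) := n_0(X_0(\hat y,p),p)$, and then gain one derivative by the fundamental theorem of calculus on \eqref{equ:N0_def}. By Lemma \ref{lem:k0_uniformly}, if $n_0\in C^r\ind{p}(\mathbb R)$ (resp.~$C^r\ind{u}(\mathbb R)$) then $X_0 \in C^{r+1}\ind{p}(\mathbb R)$ (resp.~$C^{r+1}\ind{u}(\mathbb R)$). For any $0\leq a\leq r$, $\partial_{\hat y}^a g(\hat y,p)$ can be expanded as a finite sum of terms of the schematic form $(\partial_x^j n_0)(X_0(\hat y,p),p)\cdot\prod_i \partial_{\hat y}^{a_i} X_0(\hat y,p)$ with $j\leq a$ and $\sum_i a_i = a \leq r$. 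Each such factor is uniformly bounded on $\mathbb R^2$: the first by $n_0\in C^r\ind{p}(\mathbb R)$, and the remaining factors by Lemma \ref{lem:k0_uniformly} (since derivatives of order $\geq 1$ of $X_0$ are uniformly bounded, and $\partial_{\hat y} X_0$ is bounded by Lemma \ref{lem:K0basic}). Thus $g \in C^r\ind{p}(\mathbb R)$, and in the uniform case one additionally uses $p$-uniform continuity of each factor to conclude $g \in C^r\ind{u}(\mathbb R)$. Integrating, \eqref{equ:N0_def} gives $\partial_{\hat x}\hat N_0(\hat x,p) = g(\hat x,p)$ with $\norm{\partial_{\hat x}\hat N_0}_\infty \leq \norm{n_0}_\infty$, so $\hat N_0$ gains one order of differentiability and all derivatives up to order $r+1$ are uniformly bounded, yielding $\hat N_0 \in C^{r+1}\ind{p}(\mathbb R)$ (resp.~$C^{r+1}\ind{u}(\mathbb R)$).

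The main obstacle is checking that the uniform-in-$p$ boundedness really does survive the chain-rule expansion; concretely one needs to ensure that each factor in the Faà di Bruno expansion belongs to the right $B(\mathbb R^2)$ space and that sums and products of such factors remain bounded in $(\hat y, p)$. This is essentially bookkeeping since every factor has already been controlled either by Lemma \ref{lem:dressing_boundedness_of_1dr}, Lemma \ref{lem:K0basic}, or Lemma \ref{lem:k0_uniformly}, combined with the standing hypothesis on $n_0$; no new analytic input is needed.
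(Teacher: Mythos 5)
Your proposal is correct and follows essentially the same route as the paper: the bound \eqref{equ:N0variation} and the Lipschitz/monotonicity statements are read off directly from the integral representation \eqref{equ:N0_def} with $0\leq n_0\leq \norm{n_0}_\infty$, and the differentiability claim is obtained by showing the composition $(\hat y,p)\mapsto n_0(X_0(\hat y,p),p)$ lies in $C^r\ind{p}(\mathbb R)$ (resp.\ $C^r\ind{u}(\mathbb R)$) via Lemma \ref{lem:k0_uniformly} and then gaining one derivative by integration, with the first derivative bounded by $\norm{n_0}_\infty$. The only difference is that you spell out the Fa\`a di Bruno bookkeeping that the paper leaves implicit.
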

\begin{proof}
Inequalities \eqref{equ:N0variation} follow from \eqref{equ:N0_def}, and the Lipschitz and non-decreasing properties follow from these inequalities.

For the last statement, its condition along with Lemma \ref{lem:k0_uniformly} we find that $\{(\hat y,p)\mapsto n_0(X_0(\hat y,p),p)\}$ is  an element of $C\ind{p}^{r}(\mathbb R)$, resp.~$C\ind{u}^{r}(\mathbb R)$. Then by the definition \eqref{equ:N0_def}, $\hat N_0$ is $p$-pointwise, resp.~$p$-uniformly, $r+1$ times continuously differentiable in $x$, and all derivatives $\partial_x^a \hat N_0(\hat x,p)$ are uniformly bounded on $(x,p)\in \mathbb R^2$ for all $2\leq a\leq r+1$. Uniform boundedness of $n_0(\cdot,\cdot)$ (Assumption \ref{assumptionA}) then guarantees that $\hat N_0\in C\ind{p}^{r}(\mathbb R)$, resp.~$\hat N_0\in C\ind{u}^{r}(\mathbb R)$, in the pointwise, uniform case respectively.
\end{proof}
The above Lemmas imply that for every finite interval $I\subset\mathbb R$, we have $\hat X_0,X_0,\hat N_0\in B(I\times \mathbb R)$, and in particular
    \begin{equation}\label{equ:N0bounded}
\norm{\hat N_0(\hat x)}_\infty
        \leq \abs{\hat x}\,\norm{n_0}_\infty.
    \end{equation}

\subsection{Fixed-point problem and occupation function in space-time}\label{ssectfixedpoint}

In this subsection, $\hat X_0,\, X_0$ and $\hat N_0$ are defined as in Subsection \ref{ssectseed}.

We are looking to solve the fixed-point problem \eqref{fp2}, which we recall here
\begin{equation}\label{fp2proof}
    \hat X(t,x,p) = x + \vu{T} \hat N_0(\hat X(t,x,p) - v(p)t,p),
\end{equation}
and study the properties of its solution. In the next subsection we will also study the properties of the height function defined as
\begin{equation}\label{equ:defN}
    N(t,x,p) = \hat N_0(\hat X(t,x,p)-v(p)t,p),
\end{equation}
from which $\hat X$ can be recovered as
\begin{equation}\label{XhatN}
    \hat X(t,x,p) = x + \vu{T}N(t,x,p).
\end{equation}

Putting \eqref{XhatN} into \eqref{equ:defN} one obtains a fixed-point problem for $N$, Eq.~\eqref{fp1}. Here we will not be studying this fixed-point problem; although a similar, but slightly more involved, analysis could be done.

Note that the fixed-point problem \eqref{fp2proof} involves the velocity function $v(p)$, which, as mentioned, satisfies \eqref{assumptionV}. Recall that $\vu{T}$ acts on the momentum variable when it is explicit, for instance $\vu{T}\hat N_0(\hat X(p)-v(p)t,p) = \int \dd{q}T(p,q)\hat N_0(\hat X(q)-v(q)t,q)$.

We recall the physical interpretation of the fixed-point problem: \eqref{fp2proof} involves a linear-in-time shift of $\hat X$ in the argument of $\hat N_0$, representing the free, linear evolution of the free-coordinate height field, $\hat N(t,\hat x,p) = \hat N_0(\hat x-v(p) t,p)$; and simultaneously identifies $\hat X$ with a modification of the real coordinate $x$, representing how the free coordinate is the real coordinate plus a term that represents the interaction with all particles, from \eqref{metric} in our informal discussion.

%
%

We will show in this subsection that \eqref{fp2proof} defines $\hat X$ uniquely. We will also show that the unique solution to the fixed-point equation \eqref{fp2proof} at $t=0$ is $\hat X(0,x,p) = \hat X_0(x,p)$, and that the picture of a change of metric $x\mapsto \hat x$ is rigorously expressed as a change of measure, with $\dd{\hat x}$ absolutely continuous with respect to $\dd{x}$ at all times $t$. Then, in the next section, we will show how this gives rise to the GHD equation.

\begin{theorem}[Existence and uniqueness of solutions] For every $x,t\in\mathbb R$, Eq.~\eqref{fp2proof} has a unique solution $\hat X(t,x)\in\banachspace$.
\label{theo:fixedpoint}
\end{theorem}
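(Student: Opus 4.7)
The plan is to apply the Banach fixed-point theorem to the map
$$G_{t,x}[\hat X](p) := x + \vu{T}\hat N_0(\hat X(\cdot) - v(\cdot)t, \cdot)(p)$$
acting on $\banachspace$. Fixed points of $G_{t,x}$ are exactly the solutions of \eqref{fp2proof} living in $\banachspace$, so it suffices to verify that $G_{t,x}$ maps $\banachspace$ into itself and is a strict contraction, uniformly in $(t,x)$.

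For well-definedness, I would use that Lemma \ref{lem:N0_uniformly} (together with \eqref{equ:N0_def} and bijectivity of $X_0(\cdot,q)$ from Lemma \ref{lem:K0basic}) gives the Lipschitz bound on $\hat N_0(\cdot, q)$ with rate $\sup_x n_0(x,q)$ in its first argument, and that $\hat N_0(0,q)=0$. Split the argument $\hat X(q) - v(q)t$ into the bounded piece $\hat X(q)$ and the potentially unbounded piece $-v(q)t$, and bound the two contributions to $\vu{T}\hat N_0(\hat X(\cdot) - v(\cdot)t,\cdot)(p)$ separately. The first piece is controlled by $\norm{\hat X}_\infty\norm{n_0}_\infty\norm{\vu{T}}\ind{op}$. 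The second, after integration against $\abs{T(p,q)}$, is at most $t\norm{\vu{T}}\ind{op}\norm{v\cdot\sup_x n_0(x,\cdot)}_\infty$, which is finite precisely by assumption \eqref{assumptionV} together with \eqref{equ:condTop}.

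For the contraction property, a parallel estimate using the same Lipschitz bound on $\hat N_0$, applied now to the difference $\hat X(q) - \hat X'(q)$, gives
$$\norm{G_{t,x}[\hat X] - G_{t,x}[\hat X']}_\infty \leq \norm{\vu{T}(\sup_x n_0(x,\cdot))}\ind{op}\norm{\hat X - \hat X'}_\infty.$$
By assumption \eqref{assumptionA}, i.e.\ $n_0 \in B\ind{S}(\mathbb R^2)$, the operator norm on the right is strictly less than $1$ (and less than $1/2$ when \eqref{assumptionS} fails). Crucially, the contraction constant is independent of $(t,x)$, which I expect to be useful for later continuity and differentiability arguments in those variables. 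Banach's theorem then immediately produces a unique fixed point $\hat X(t,x) \in \banachspace$ for each $(t,x)\in\mathbb R^2$.

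The main obstacle is the well-definedness step: a naive Lipschitz bound using the global constant $\norm{n_0}_\infty$ cannot absorb the unbounded factor $\abs{v(q)}$ sitting inside the argument of $\hat N_0$. The resolution is to use the sharper pointwise-in-$q$ Lipschitz rate $\sup_x n_0(x,q)$ and pair it with assumption \eqref{assumptionV}, which is exactly the reason \eqref{assumptionV} was imposed alongside \eqref{assumptionA}. Once this is in place, the contraction step is essentially a direct formalization of the informal computation already sketched in Section \ref{ssectheightfielddiscussion}.
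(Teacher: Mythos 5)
Your proposal is correct and follows essentially the same route as the paper: well-definedness of $G_{t,x}$ on $\banachspace$ by separating the bounded part of the argument of $\hat N_0$ from the $-v(p)t$ part (the paper does this by comparing $G_{t,x}$ to $G_{0,0}$, which is the same decomposition), with \eqref{assumptionV} absorbing the unbounded velocity via the pointwise Lipschitz rate $\sup_x n_0(x,q)$, followed by the uniform contraction estimate with constant $\norm{\vu{T}(\sup_x n_0(x,\cdot))}\ind{op}<1$ from \eqref{assumptionA} and an application of the Banach fixed-point theorem. Your observation that the contraction constant is independent of $(t,x)$ is indeed what the paper later exploits for the Lipschitz continuity of $\hat X$ in Theorem \ref{theo:XhatNbasic}.
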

\begin{proof}
For every $t,x\in\mathbb R$, let us consider the nonlinear map $G_{t,x}$ acting on $\banachspace$ defined as
\begin{equation}\label{equ:mapG}
	G_{t,x}[f](p) = x + \vu{T} \hat N_0(f(p) - v(p)t,p).
\end{equation}
We first show that its image lies within $\banachspace$, that is $G_{t,x}:\banachspace\to\banachspace$. For this purpose, we note that $G_{0,0}:\banachspace\to\banachspace$ because $\vu{T}$ is bounded and by \eqref{equ:N0bounded}. More precisely,
\begin{eqnarray}
    \norm{G_{0,0}[f]}_\infty &=&
    \sup_{p\in\mathbb R}\abs{\vu{T}\int_0^{f(p)} \dd{\hat x} n_0(X_0(\hat x,p),p)}\nonumber\\
    &\leq&\sup_{p\in\mathbb R} \int\dd{q}
    \abs{T(p,q)} \sup_{\hat x\in\mathbb R} n_0(X_0(\hat x,q),q)
    \abs{f(q)}\nonumber\\
    &\leq&\sup_{p\in\mathbb R} \int\dd{q}
    \abs{T(p,q)} \sup_{x\in\mathbb R} n_0(x,q)
    \norm{f}_\infty\nonumber\\
    &=& \norm{\vu{T}(\sup_{x\in\mathbb R}n_0(x,\cdot))}\ind{op}
    \norm{f}_\infty
    \label{equ:boundH00}
\end{eqnarray}
Then, for every $x,t,x',t'\in\mathbb R$,
\begin{align}
        \abs{G_{t,x}[f](p)-G_{t',x'}[f](p)}
        &\leq \abs{x-x'} + \abs{\vu{T}\int_{f(p)-v(p)t'}^{f(p)-v(p)t}\dd{\hat x}n_0(X_0(\hat x,p),p)}\nonumber\\
        &\leq \abs{x-x'} + \norm{\vu{T}}\ind{op}\sup_{(p,y)\in\mathbb R^2} \abs{v(p)(t-t')}        n_0(y,p)\nonumber\\
        &\leq \abs{x-x'}
        + \abs{t-t'}\,\norm{\vu{T}}\ind{op}\norm{v}_\infty^{n_0}.
        \label{equ:GtxG00}
\end{align}
Setting $x'=t'=0$, we find the desired finite bound for every $x,t\in\mathbb R$.

Second, we show that the map is a contraction, with contraction rate bounded by $\norm{\vu{T}(\sup_{x\in\mathbb R}n(x,\cdot))}\ind{op}<1$ (again the inequality follows from Assumption \eqref{assumptionA}), that is
\begin{equation}\label{equ:GHcontraction}
	\norm{G_{t,x}[f]-G_{t,x}[f']}_\infty \leq \norm{\vu{T}(\sup_{x\in\mathbb R}n_0(x,\cdot))}\ind{op}
        \norm{f-f'}_\infty.
\end{equation}
Indeed we have
\begin{eqnarray}
	\abs{G_{t,x}[f](p)-G_{t,x}[f'](p)}&\leq& \abs{\vu{T}\int_{f'(p)-v(p)t}^{f(p)-v(p)t}\dd{\hat x}n_0(X_0(\hat x,p),p)}\nonumber\\
    &\leq&
    \int_{\mathbb R} \dd{q}
    \abs{T(p,q)}
    \sup_{x\in\mathbb R}n_0(x,q)
    \abs{f(p)-f'(p)}
\end{eqnarray}
from which \eqref{equ:GHcontraction} follows.

Finally, the Banach fixed-point theorem implies the statement of the present theorem.
\end{proof}

With $\hat X$ given, we can then define the time-dependent occupation function
\begin{equation}\label{equ:defnt}
    n(t,x,p) = \hat n_0(\hat X(t,x,p)-v(p)t,p),\quad\mbox{with}\ 
    \hat n_0(\hat x,p) = n_0(X_0(\hat x,p),p).
\end{equation}
It is easy to see that for every $t,x,p$ it is bounded as (recall that $\norm{n}_\infty$ is with respect to $B(\mathbb R^3)$ because $n$ is a function on $\mathbb R^3$)
\begin{equation}\label{equ:ntbound1}
	\norm{n}_\infty \leq \norm{n_0}_\infty.
\end{equation}
Further, it is simple to show that $n(t)$ is in the same class as $n_0$ for all $t$ (see Assumption \eqref{assumptionA}), and in fact $n$, as a function of two variables in addition to the momentum, is in the corresponding class. Indeed,
\begin{equation}
    \norm{\vu{T}
    (\sup_{x\in\mathbb R}
    n(t,x,\cdot))}\ind{op}
    \leq
    \norm{\vu{T}
    (\sup_{(s,x)\in\mathbb R^2}
    n(s,x,\cdot))}\ind{op}
    \leq
    \norm{ \vu{T}
    (\sup_{x\in\mathbb R}
    n_0(x,\cdot))}\ind{op}
\end{equation}
which follows from
\begin{equation}
    \sup_{x\in\mathbb R}
    n(t,x,p)\leq
    \sup_{(s,x)\in\mathbb R^2}
    n(s,x,p)\leq
    \sup_{\hat x\in\mathbb R}
    \hat n_0(\hat x,p)
    \leq
    \sup_{x\in\mathbb R}
    n_0(x,p)\quad \forall\;t,p\in\mathbb R.
\end{equation}
That is, we have
\begin{equation}
    n\in B\ind{S}(\mathbb R^3),\quad
    n(t)\in B\ind{S}(\mathbb R^2)\ \forall\; t\in\mathbb R
        \label{equ:ntspace}
\end{equation}
and in particular: either Assumption \ref{assumptionS} holds and $\norm{\vu{T}n(t,x)}_\infty<1$, or $\norm{\vu{T}n(t,x)}_\infty<1/2$. The velocity function $v(p)$ is also bounded with respect to the measure induced by $n(t)$, as $\sup_{x,p}v(p)n_0(X_0(\hat X(t,x,p),p),p) \leq \sup_{x,p}v(p)n_0(x,p)$. Thus we have (as before, seeing $v(p)$ as a function of $t,x,p$ that is independent of $t,x$)
\begin{equation}\label{equ:vspacent}
	n\in B(\mathbb R^3,\abs{v}).
\end{equation}
In particular
\begin{equation}
    v\in B(\mathbb R,n(t,x))\ \forall\ t,x\in\mathbb R
\end{equation}
and
\begin{equation}\label{equ:vboundnt}
    \norm{v}_\infty^{n}\leq \norm{v}_\infty^{n_0}.
\end{equation}

The above means that we may dress with respect to $n(t,x)$ for any $t,x\in\mathbb R$ (see Definition \ref{def:dressing_general}), that $1\upd{dr}(t,x,p)$ is bounded as in Lemma \ref{lem:dressing_boundedness_of_1dr}, and that $v\upd{dr}(t,x,p)$ exists for all $t,x,p\in\mathbb R$.

\subsection{Continuity and differentiability}\label{ssectcontdiff}

We now establish the basic continuity and differentiability properties of $\hat X$ and $N$.  In order to discuss differentiability, we need to introduce some basic mathematical concepts. Recall that a function $f(x)$ (which may also depend on other variables) that is Lipschitz continuous is absolutely continuous, hence a.e.~differentiable. Further, one may extend its a.e.~derivative to a measurable function everywhere defined, which integrates to $f(x)$. This extension is a Radon-Nikodym derivative of the measure $\dd{f(x)}$ with respect to $\dd{x}$: the Radon-Nikodym derivative exists as absolute continuity of the function implies absolute continuity of the measure; it is uniquely defined on a dense subset; and we can extend it by setting it on the complement (a subset of measure zero) to a function that integrates to zero. Of course, this extension is not unique.

For our purposes, it will be convenient to consider an appropriate extension of the Radon-Nikodym derivative. We will denote it by $\frac{\check\dd{} f(x)}{\check\dd{} x}$, and refer to this simply as the {\em extended derivative}. More precisely, suppose that the function $\mathbb R\ni x\mapsto f(x)\in \mathbb R$ is absolutely continuous, and that its Radon-Nikodym derivative, defined on some dense subset $\Lambda\in\mathbb R$, is bounded $\sup_{x\in\Lambda}\abs{\frac{\dd{} f(x)}{\dd{} x}}<\infty$. Then we use the notation
\begin{equation}
	\frac{\check\dd{} f(x)}{\check\dd{} x}
\end{equation}
to mean {\em a choice of the extension to $\mathbb R$ of the Radon-Nikodym derivative of the measure $\dd{f(x)}$ with respect to $\dd{x}$, with
\begin{equation}\label{conditionintegral}
	f(x) - f(x') = 
	\int_{x'}^x \dd{y} \frac{\check\dd{} f(y)}{\check\dd{} y}\quad \forall\; x,x'\in \mathbb R,
\end{equation}
such that}
\begin{equation}\label{conditionbound}
	\sup_{x\in \mathbb R} \abs{\frac{\check\dd{} f(x)}{\check\dd{} x}}<\infty.
\end{equation}
This choice always exists. We extend the concept to partial derivatives, $\frac{\check\partial f(s,x)}{\check\partial x}$, in the natural fashion, requiring uniform boundedness on all variables.

We recall that the composition of Lipschitz functions is also Lipschitz, and that the a.e.~derivative of the composition of Lipschitz functions is obtained by the chain rule. Two basic results will be needed:
\begin{lemma}\label{lem:Lipbasic}
If $\mathbb R\ni x\mapsto f(x) \in\banachspace$ is Lipschitz, then
\begin{equation}
	\Big\{(x,p)\mapsto \frac{\check \partial f(x,p)}{\check\partial x}\Big\} \in B(\mathbb R^2).
\end{equation}
\end{lemma}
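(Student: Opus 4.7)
The plan is to produce a specific jointly Lebesgue-measurable, uniformly bounded representative of $\frac{\check\partial f(x,p)}{\check\partial x}$. Since $x\mapsto f(x)$ is Lipschitz into $\banachspace$, there is a constant $L\geq 0$ with $\abs{f(x,p)-f(x',p)} \leq L\abs{x-x'}$ for all $x,x',p\in\mathbb R$. In particular, for each $p$ the slice $f(\cdot,p)$ is Lipschitz, and by definition of $\banachspace$, for each $x$ the slice $f(x,\cdot)$ is Lebesgue measurable; so $f$ is Carath\'eodory. A standard approximation by step functions $f_m(x,p) := f(\lfloor mx\rfloor/m,p)$, which are jointly measurable and converge to $f(x,p)$ uniformly in $p$ as $m\to\infty$ by the Lipschitz bound, yields joint Lebesgue measurability of $f$ on $\mathbb R^2$.

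I would then define a concrete candidate via the forward difference quotient:
\[
g_n(x,p) := n\bigl(f(x+1/n,p) - f(x,p)\bigr),\qquad g(x,p) := \limsup_{n\to\infty} g_n(x,p).
\]
Each $g_n$ is jointly measurable and satisfies $\abs{g_n(x,p)}\leq L$, so $g$ is jointly measurable with $\abs{g(x,p)}\leq L$; in particular $g\in B(\mathbb R^2)$. To verify that $g$ qualifies as the extended derivative, fix any $p$: the slice $f(\cdot,p)$ is Lipschitz, hence absolutely continuous and differentiable for almost every $x$; at every such $x$, $g_n(x,p)\to \partial_x f(x,p)$, so $g(\cdot,p)$ agrees a.e.\ with the classical derivative. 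The fundamental theorem of calculus for absolutely continuous functions then gives
\[
f(x,p) - f(x',p) = \int_{x'}^x g(y,p)\,\dd{y}\quad \forall\; x,x'\in\mathbb R,
\]
which is precisely \eqref{conditionintegral}, while \eqref{conditionbound} follows from the uniform bound $\abs{g}\leq L$. One may therefore take $\frac{\check\partial f(x,p)}{\check\partial x} := g(x,p)$ as the chosen extension, completing the proof.

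The main obstacle I anticipate is not the existence of a bounded Radon--Nikodym derivative at each fixed $p$ --- that is classical from Lipschitz continuity of $f(\cdot,p)$ --- but rather producing a single choice that is measurable jointly in $(x,p)$. This is where the strength of the hypothesis (Lipschitz into $\banachspace$ in the sup norm, rather than merely pointwise-in-$p$ Lipschitz with $p$-dependent constants) becomes essential: the uniform Lipschitz modulus makes the step-function approximation converge uniformly in $p$, yielding joint measurability of $f$, and the $\limsup$ of difference quotients then inherits this joint measurability while still reducing slicewise to the classical a.e.\ derivative.
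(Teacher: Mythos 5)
Your proposal is correct. The paper's own proof is a two-line observation: wherever the ($p$-pointwise) derivative of the Lipschitz function exists it is bounded by the Lipschitz constant $C$, so one may pick an extended derivative that is uniformly bounded, and the claim follows. Your argument contains that same core idea (the uniform Lipschitz modulus controls the a.e.\ derivative, and the extension is then chosen within that bound), but you go further and explicitly construct a single representative $g=\limsup_n n\bigl(f(\cdot+1/n,\cdot)-f(\cdot,\cdot)\bigr)$ that is \emph{jointly} Lebesgue measurable on $\mathbb R^2$, after first establishing joint measurability of $f$ itself by step-function approximation. This is a genuine strengthening of the write-up: membership in $B(\mathbb R^2)$ requires Lebesgue measurability on $\mathbb R^2$, not merely measurability of each slice, and the paper's proof leaves that point implicit. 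Your verification that $g$ satisfies \eqref{conditionintegral} via the fundamental theorem of calculus for absolutely continuous slices, and \eqref{conditionbound} via $\abs{g}\leq L$, correctly matches the paper's definition of the extended derivative. The only cosmetic remark is that the $\limsup$ is a convenient but arbitrary tie-breaker on the null set where the difference quotients fail to converge; any jointly measurable selection bounded by $L$ would do, which is consistent with the paper's statement that the extension is not unique.
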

\begin{proof} By the Lipschitz property, say with Lipschitz constant $C>0$, wherever the derivative exists it satisfies
\begin{equation}
	\abs{\frac{ \partial f(x,p)}{\partial x} }\leq C.
\end{equation}
Then the extended derivative $\frac{\check \partial f(x,p)}{\check\partial x}$ is uniformly bounded, hence the result follows.
\end{proof}

\begin{lemma} \label{lem:LipT}
If $\mathbb R\ni x\mapsto f(x) \in\banachspace$ is Lipschitz, then $\mathbb R\ni x\mapsto \vu{T} f(x) \in\banachspace$ is Lipschitz and
\begin{equation}\label{equ:LipT}
	\frac{\check\partial\;\vu{T} f(x,p)}{\check\partial x} = \vu{T} \frac{\check \partial f(x,p)}{\check\partial x}\quad \forall\; x,p\in\mathbb R
\end{equation}
(recall that, in our convention, $\vu{T}$ acts on the momentum variable $p$ when it is explicit.)
\end{lemma}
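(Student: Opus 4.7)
The plan is to verify the two claims in turn: Lipschitz continuity of $\vu{T}f$ and the derivative identity \eqref{equ:LipT}. Both will follow from the operator bound \eqref{equ:condTop} together with an application of Fubini's theorem.

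First I would establish the Lipschitz property. If $C>0$ is a Lipschitz constant for $f$, then for every $x,x'\in\mathbb R$ and $p\in\mathbb R$,
\begin{equation*}
    \abs{\vu{T}f(x,p) - \vu{T}f(x',p)} \leq \int\dd{q}\abs{T(p,q)}\,\norm{f(x)-f(x')}_\infty \leq \norm{\vu{T}}\ind{op}\,C\,\abs{x-x'},
\end{equation*}
so $\vu{T}f$ is Lipschitz with constant $\norm{\vu{T}}\ind{op}\,C$. In particular $\vu{T}f:\mathbb R\to\banachspace$ is absolutely continuous, so the Radon--Nikodym derivative of $\dd{\vu{T}f(x,p)}$ with respect to $\dd x$ exists a.e.\ and is bounded.

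Next I would construct the candidate extended derivative $\vu{T}\frac{\check\partial f(x,p)}{\check\partial x}$ and show it satisfies the two required properties \eqref{conditionintegral}--\eqref{conditionbound}. By Lemma \ref{lem:Lipbasic}, the function $(y,q)\mapsto \frac{\check\partial f(y,q)}{\check\partial y}$ lies in $B(\mathbb R^2)$ with some bound $C'\leq C$. Hence the integrand $T(p,q)\,\frac{\check\partial f(y,q)}{\check\partial y}$ is bounded in modulus by $\abs{T(p,q)}\,C'$, which is integrable in $q$ and uniformly bounded in $y$ on any compact interval. This yields the uniform bound
\begin{equation*}
    \sup_{(x,p)\in\mathbb R^2}\abs{\vu{T}\tfrac{\check\partial f(x,p)}{\check\partial x}} \leq \norm{\vu{T}}\ind{op}\,C' <\infty,
\end{equation*}
so \eqref{conditionbound} holds. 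For \eqref{conditionintegral}, I would apply Fubini's theorem (justified by the bound just given together with the finiteness of the interval $[x',x]$):
\begin{equation*}
    \int_{x'}^x\dd{y}\,\vu{T}\tfrac{\check\partial f(y,p)}{\check\partial y} = \int\dd{q}\,T(p,q)\int_{x'}^x\dd{y}\,\tfrac{\check\partial f(y,q)}{\check\partial y} = \int\dd{q}\,T(p,q)(f(x,q)-f(x',q)) = \vu{T}f(x,p)-\vu{T}f(x',p),
\end{equation*}
where the middle equality is \eqref{conditionintegral} applied to $f$.

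Finally, I would conclude by invoking the defining flexibility of the extended derivative: since $\vu{T}\frac{\check\partial f(x,p)}{\check\partial x}$ agrees a.e.\ with the Radon--Nikodym derivative of $\dd{\vu{T}f(x)}$ (by the integral identity above) and is everywhere uniformly bounded, it is a valid choice for $\frac{\check\partial\,\vu{T}f(x,p)}{\check\partial x}$, giving \eqref{equ:LipT}. The main subtlety is really just the Fubini step, and the bound $\norm{\vu{T}}\ind{op}<\infty$ together with the Lipschitz bound on $\frac{\check\partial f}{\check\partial x}$ makes the integrand integrable, so no genuine obstacle arises.
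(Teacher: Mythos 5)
Your proposal is correct and follows essentially the same route as the paper's proof: the Lipschitz bound via $\norm{\vu{T}}\ind{op}$, uniform boundedness of $\vu{T}\frac{\check\partial f}{\check\partial x}$ from Lemma \ref{lem:Lipbasic}, the Fubini interchange to obtain the integral identity, and the conclusion that this bounded function is an admissible choice of extended derivative. No gaps.
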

\begin{proof} Clearly $\vu{T} f(x)$ is well defined for every $x\in\mathbb R$. The Lipschitz property of the function $x\mapsto f(x)$ means that there exists $C>0$ such that $\norm{f(x) - f(x')}_\infty \leq C\abs{x-x'}$ for all $x,x'\in\mathbb R$. Hence $\norm{\vu{T} f(x) - \vu{T} f(x')}_\infty \leq \norm{\vu{T}}\ind{op} C\abs{x-x'}$ for all $x,x'\in\mathbb R$ so that $x\mapsto \vu{T} f(x)$ is Lipschitz. Further, $\frac{\check \partial f(x,p)}{\check\partial x}$ is uniformly bounded on $\mathbb R^2$ by Lemma \ref{lem:Lipbasic}, hence so is $\vu{T}\frac{\check \partial f(x,p)}{\check\partial x}$.

Uniform boundedness of $\frac{\check \partial f(x,p)}{\check\partial x}$ allows us to use Fubini's theorem (first step) to get
\begin{equation}
	\int_{x'}^{x} \dd{y}\vu{T} \frac{\check \partial f(y,p)}{\check\partial y} =
	\vu{T} \int_{x'}^{x} \dd{y} \frac{\check \partial f(y,p)}{\check\partial y} = \vu{T}f(x,p) - \vu{T}f(x',p)
	= \int_{x'}^{x} \dd{y} \frac{\check \partial \; \vu{T} f(y,p)}{\check\partial y}
\end{equation}
for every $x,x',p\in\mathbb R$, where in the last step we used the fact that $x\mapsto \vu{T} f(x,p)$ is Lipschitz (hence $\vu{T} f(x)$ is absolutely continuous in $x$ for every $p$). As $\vu{T}\frac{\check \partial f(x,p)}{\check\partial x}$ is uniformly bounded, the above equation implies that we may choose the extended partial derivative of $\vu{T} f(x,p)$ as per \eqref{equ:LipT}.
\end{proof}

It turns out that we can evaluate the extended derivatives of $\hat X$ and $N$; and further that $\hat X(t,x,p)$ is strictly increasing in $x$, hence a homeomorphism of $\mathbb R$, and that $N(t,x,p)$ is non-decreasing in $x$. The following theorem, which establishes these, is at the heart of our main results: it will imply the existence of a solution to the GHD equation in potential form. It will also lead, by a simple recursive argument on the differential equations for $\hat X$ and $N$, to their differentiability properties for differentiable initial data. It is perhaps surprising that so little is required of the seed occupation function $n_0$ in order for this to be true.

Below, for $f\in B(\mathbb R,n(t,x))$, the expression $f\upd{dr}(t,x,p)$ is the dressing with respect to $n(t,x)$, that is $f\upd{dr}(t,x,p) = (1-\vu{T}n(t,x))^{-1}f(p)$.

\begin{theorem}[Continuity and a.e.~differentiability] \label{theo:XhatNbasic}
The functions $\hat X(t,x,p)$ and $N(t,x,p)$ vanish at the origin of space-time,
\begin{equation}\label{equ:NKvanish}
        \hat X(0,0,p)=0,\quad N(0,0,p)=0\quad \forall\;p\in\mathbb R.
\end{equation}
Further, $\hat X,N:\mathbb R^2\to \banachspace$ are Lipschitz continuous both with respect to the first ($t$) and second ($x$) arguments (uniformly in the other argument). In addition, for every $t,p\in\mathbb R$, $\hat X(t,\cdot,p)$ is ($t,p$-uniformly) bi-Lipschitz, hence strictly increasing and a homeomorphism of $\mathbb R$, and $N(t,\cdot,p)$ is non-decreasing. Finally, for every $t,x,p\in\mathbb R$,
    \begin{equation}\label{equ:xderivatives}
        \frac{\check\partial{N(t,x,p)}}{\check\partial{x}} = n(t,x,p)1\upd{dr}(t,x,p),\quad
        \frac{\check\partial{\hat X(t,x,p)}}{\check\partial{x}} = 1\upd{dr}(t,x,p)
    \end{equation}
and
    \begin{equation}\label{equ:tderivatives}
        \frac{\check\partial{N(t,x,p)}}{\check\partial{t}} = -n(t,x,p)v\upd{dr}(t,x,p),\quad
        \frac{\check\partial{\hat X(t,x,p)}}{\check\partial{t}} = -(v\upd{dr}(t,x,p) - v(p)).
    \end{equation}
\end{theorem}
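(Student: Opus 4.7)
The plan is to establish the four claims in a slightly different order than listed: first the vanishing at the origin, then joint Lipschitz continuity on $\mathbb R^2$, then the explicit extended-derivative formulas \eqref{equ:xderivatives}--\eqref{equ:tderivatives}, and finally to read off the bi-Lipschitz property of $\hat X(t,\cdot,p)$ and monotonicity of $N(t,\cdot,p)$ by integrating those derivatives.

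For \eqref{equ:NKvanish}, I would observe that $\hat N_0(0,p)=0$ by definition \eqref{equ:N0_def}, so the zero function in $\banachspace$ satisfies $G_{0,0}[0]=0$; the uniqueness part of Theorem \ref{theo:fixedpoint} forces $\hat X(0,0,\cdot)=0$, and then $N(0,0,p)=\hat N_0(0,p)=0$ by \eqref{equ:defN}. For joint Lipschitz continuity of $\hat X$, I would use the fixed-point identity together with the inhomogeneity bound \eqref{equ:GtxG00} and contraction bound \eqref{equ:GHcontraction}; writing
\begin{align*}
    \hat X(t,x)-\hat X(t',x') = G_{t,x}[\hat X(t,x)]-G_{t,x}[\hat X(t',x')]+\bigl(G_{t,x}-G_{t',x'}\bigr)[\hat X(t',x')],
\end{align*}
taking $\norm{\cdot}_\infty$ and rearranging yields
\begin{align*}
    \norm{\hat X(t,x)-\hat X(t',x')}_\infty \leq \frac{\abs{x-x'}+\abs{t-t'}\,\norm{\vu{T}}\ind{op}\norm{v}_\infty^{n_0}}{1-\norm{\vu{T}(\sup_{y\in\mathbb R}n_0(y,\cdot))}\ind{op}},
\end{align*}
which is finite by Assumptions \eqref{assumptionA} and \eqref{assumptionV}. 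Lipschitz continuity of $N$ then follows from the Lipschitz bound \eqref{equ:N0variation} on $\hat N_0$, precomposed with the Lipschitz map $(t,x,p)\mapsto \hat X(t,x,p)-v(p)t$.

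For the derivative formulas, I would exploit the explicit representation $\hat N_0(\hat x,p)=\int_0^{\hat x}\hat n_0(\hat y,p)\,\dd{\hat y}$ with uniformly bounded $\hat n_0$, whence the chain rule for compositions of Lipschitz maps identifies the a.e.\ $x$-derivative of $x\mapsto \hat N_0(\hat X(t,x,p)-v(p)t,p)$ with $n(t,x,p)\,\check\partial_x \hat X(t,x,p)$, and the latter provides a uniformly bounded extension on $\mathbb R^3$. Exchanging $\check\partial_x$ with $\vu{T}$ via Lemma \ref{lem:LipT} in the fixed-point equation \eqref{fp2proof} then produces
\begin{align*}
    \frac{\check\partial \hat X(t,x,p)}{\check\partial x} = 1 + \vu{T}\Bigl[n(t,x,\cdot)\,\tfrac{\check\partial \hat X(t,x,\cdot)}{\check\partial x}\Bigr](p).
\end{align*}
The left-hand side lies in $B(\mathbb R^3)$ by Lemma \ref{lem:Lipbasic}, and since $n(t,x)\in B\ind{S}(\mathbb R)$ by \eqref{equ:ntspace}, uniqueness in Definition \ref{def:dressing_general} identifies it with $1\upd{dr}(t,x,p)$. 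Exactly the same computation in $t$ yields the integral equation $W=\vu{T}n(t,x)(W-v)$ for $W=\check\partial_t \hat X(t,x,\cdot)$, equivalently $(1-\vu{T}n(t,x))(W+v)=v$; using $v\in B(\mathbb R,n(t,x))$ from \eqref{equ:vspacent} and dressing, its unique solution is $W=v-v\upd{dr}=-(v\upd{dr}-v)$. The identities for $\check\partial N$ then follow from one further application of the same chain rule to \eqref{equ:defN}.

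Integrating $\check\partial_x \hat X(t,y,p)=1\upd{dr}(t,y,p)$ and invoking Lemma \ref{lem:dressing_boundedness_of_1dr} (applicable uniformly in $(t,x)$ thanks to \eqref{equ:ntspace}) gives, for $x>x'$,
\begin{align*}
    R\bigl(\norm{\vu{T}n(t,x)}\ind{op}\bigr)(x-x') \leq \hat X(t,x,p)-\hat X(t,x',p) \leq \frac{x-x'}{1-\norm{\vu{T}n(t,x)}\ind{op}},
\end{align*}
with both constants uniform in $t,p$; this delivers the bi-Lipschitz property, strict monotonicity and, by continuity, surjectivity of $\hat X(t,\cdot,p)$ onto $\mathbb R$. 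Monotonicity of $N(t,\cdot,p)$ is immediate from $\check\partial_x N=n\,1\upd{dr}\geq 0$. The main obstacle I anticipate is the chain-rule step: both $\hat N_0(\cdot,p)$ and $x\mapsto \hat X(t,x,p)-v(p)t$ are only Lipschitz, so the composition has only an a.e.\ derivative which must be extended to a function uniformly bounded in $(t,x,p)$ in order to feed it into Lemma \ref{lem:LipT}. The explicit formula $n(t,x,p)\,\check\partial_x \hat X(t,x,p)$ provides the right extension, with uniformity inherited from $\norm{\hat n_0}_\infty\leq\norm{n_0}_\infty$ and from the uniform Lipschitz constant for $\hat X$ obtained in the second paragraph.
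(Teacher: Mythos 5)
Your proposal follows essentially the same route as the paper's own proof: fixed-point uniqueness at $(t,x)=(0,0)$ for \eqref{equ:NKvanish}; the decomposition $G_{t,x}[\hat X(t,x)]-G_{t,x}[\hat X(t',x')]+(G_{t,x}-G_{t',x'})[\hat X(t',x')]$ combined with \eqref{equ:GHcontraction} and \eqref{equ:GtxG00} for Lipschitz continuity; the Lipschitz chain rule together with Lemma \ref{lem:LipT} to convert the fixed-point equation into a linear integral equation for the extended derivatives, identified with the dressing via the uniqueness clause of Definition \ref{def:dressing_general}; and finally bi-Lipschitz/monotonicity read off from Lemma \ref{lem:dressing_boundedness_of_1dr} and \eqref{equ:ntspace}, exactly as in the paper. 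The one flaw is an algebra slip in the $t$-derivative step: from $W=\vu{T}n(t,x)(W-v)$ one gets $(1-\vu{T}n(t,x))W=-\vu{T}n(t,x)v$ and hence $(1-\vu{T}n(t,x))(v-W)=v$, not $(1-\vu{T}n(t,x))(W+v)=v$ as you wrote --- your displayed identity would yield $W=v\upd{dr}-v$, the wrong sign. Your stated conclusion $W=v-v\upd{dr}=-(v\upd{dr}-v)$ is nevertheless the correct one and is what the corrected rearrangement delivers, so the argument goes through once that line is fixed.
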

\begin{proof}
Eqs.~\eqref{equ:NKvanish} hold as $\hat X(0,0,p)=0$ is a solution to the fixed-point problem \eqref{fp2proof} (recall that $\hat N_0(0,p)=0$ by its definition \ref{equ:N0_def}), and this solution is unique by Theorem \ref{theo:fixedpoint}; and $N(0,0,p)=0$ then follows form its definition \eqref{equ:defN}.

For Lipschitz continuity, we show that
for every $t,x,t',x',p\in\mathbb R$,
\begin{equation}\label{equ:KtxLipschitz}
    \abs{\hat X(t,x,p)-\hat X(t',x',p)}
    \leq
    \abs{x-x'}\frac{1}{1-\norm{\vu{T}n_0}\ind{op}}
    + \abs{t-t'}\norm{v}_\infty^{n_0}\frac{\norm{\vu{T}}\ind{op}}{1-\norm{\vu{T}n_0}\ind{op}}.
\end{equation}
and
\begin{equation}\label{equ:NtxLipschitz}
    \abs{N(t,x,p)-N(t',x',p)}
    \leq
    \abs{x-x'}\frac{\norm{n_0}_\infty}{1-\norm{\vu{T}n_0}\ind{op}}
    + \abs{t-t'}\norm{v}_\infty^{n_0}\Bigg(\frac{\norm{\vu{T}}\ind{op}\norm{n_0}_\infty}{1-\norm{\vu{T}n_0}\ind{op}} + 1\Bigg)
\end{equation}
For this purpose, we evaluate from \eqref{equ:GHcontraction} and \eqref{equ:GtxG00},
\begin{align}
    & \abs{\hat X(t,x,p)-\hat X(t',x',p)} \nonumber\\
    &\quad \leq \abs{G_{t,x}[\hat X(t,x)](p)-G_{t,x}[\hat X(t',x')](p)}
    +\abs{G_{t',x'}[\hat X(t',x')](p) - G_{t,x}[\hat X(t',x')](p)}\nonumber\\
    &\quad \leq 
    \norm{\vu{T}n_0}\ind{op} \norm{\hat X(t,x)-\hat X(t',x')}_\infty
    +
    \abs{x-x'}
        + \abs{t-t'}\,\norm{\vu{T}}\ind{op}\norm{v}_\infty^{n_0}
\end{align}
Therefore, as $\norm{\vu{T}n_0}\ind{op}<1$, taking the supremum on $p$, we get that $\norm{\hat X(t,x)-\hat X(t',x')}_\infty$ is bounded by the right-hand side of \eqref{equ:KtxLipschitz}, thus inequality \eqref{equ:KtxLipschitz} follows. Along with $\hat X(0,0,p)=0$, this implies $\hat X(t,x)\in B(\mathbb R)$ for all $t,x\in\mathbb R^2$, and that $\hat X:\mathbb R^2\to \banachspace$ is Lipschitz continuous both with respect to the first ($t$) and second ($x$) arguments, uniformly in the other argument.

For $N(t,x,p)$, we use its definition \eqref{equ:defN} along with \eqref{equ:N0_def},
\begin{align}
    \abs{N(t,x,p)-N(t',x',p)} &\leq
    \sup_{x\in\mathbb R}n_0(x,p) \Big(\abs{\hat X(t,x,p)-\hat X(t',x',p)} + \abs{v(p)}\abs{t-t'}\Big) \nonumber\\
    &\leq
    \norm{n_0}_\infty \abs{\hat X(t,x,p)-\hat X(t',x',p)} + \norm{v}^{n_0}_\infty \abs{t-t'}
\end{align}
which, in combination with \eqref{equ:KtxLipschitz},  leads to \eqref{equ:NtxLipschitz}. Along with $N(0,0,p)=0$, this implies $N(t,x)\in B(\mathbb R)$ for all $t,x\in\mathbb R^2$, and that $N:\mathbb R^2\to \banachspace$ is Lipschitz continuous both with respect to the first ($t$) and second ($x$) arguments, uniformly in the other argument.

Lipschitz continuity guarantees that $\hat X(t,x,p)$ and $N(t,x,p)$ are absolutely continuous in $t$ and $x$, fixing the other variables. Hence, they are a.e.~differentiable, and we may construct the associated extended derivatives, as discussed above.

We will now show the differential equations \eqref{equ:xderivatives}, \eqref{equ:tderivatives}. Once these are shown, then Lemma \ref{lem:dressing_boundedness_of_1dr} along with \eqref{equ:ntspace} imply
\begin{align}
    \label{equ:Ktincrease}
        \frac{\hat X(t,x,p) - \hat X(t,y,p)}{x-y} \geq R(\norm{\vu{T}n_0}\ind{op})\quad \forall\ t,x,y,p\in\mathbb R
\end{align}
and hence $\hat X(t,\cdot,p)$ is $t,p$-uniformly bi-Lipschitz, therefore it is strictly increasing and a homeomorphism of $\mathbb R$ for all $t,p\in\mathbb R$. Likewise, Lemma \ref{lem:dressing_boundedness_of_1dr} along with \eqref{equ:ntspace} imply     \begin{align}
\label{equ:Ntincrease}
        N(t,x,p) - N(t,y,p)\geq 0\quad \forall\ t,x,y,p \in \mathbb R\ :\ x\geq y
\end{align}
and $N(t,\cdot,p)$ is non-decreasing. This will complete the proof of the theorem.

The Lispchitz continuity statements are that $\mathbb R \ni x\mapsto \hat X(t,x), N(t,x)\in B(\mathbb R)$ are Lipschitz for every $t$, and $\mathbb R \ni t\mapsto \hat X(t,x), N(t,x)\in B(\mathbb R)$ are Lipschitz for every $x$. Hence from Lemma \ref{lem:Lipbasic} their extended derivatives are uniformly bounded over the differentiated variable and $p\in\mathbb R$. In particular,  we have
\begin{equation}\label{equ:Kspaceproof}
	\Big\{(x,p)\mapsto \frac{\check\partial \hat X(t,x,p)}{\check\partial x}\Big\}\in B(\mathbb R^2)\ \forall\,t,\quad 
	\Big\{(t,p)\mapsto \frac{\check\partial \hat X(t,x,p)}{\check\partial t}\Big\}
	\in B(\mathbb R^2)\ \forall\,x.
\end{equation}
Note that for these extended derivatives in $x$ and $t$, which exist but are not unique, we assume only, for now, that they satisfy \eqref{equ:Kspaceproof} (this assumption can always be fulfilled as follows from Lemma \ref{lem:Lipbasic}); we will determine them below.

From Lemma \ref{lem:N0_uniformly}, $\mathbb R\ni \hat x\mapsto \hat N_0(\hat x)\in B(\mathbb R)$ is Lipschitz. Hence with \eqref{equ:N0_def} we have
\begin{equation}\label{equ:N0derivaproof}
    \frac{\check\partial{\hat N_0(\hat x,p)}}{\check\partial{\hat x}}
    = n_0(X_0(\hat x,p),p),\quad \forall\, x,p;
\end{equation}
we can indeed choose the extended derivative as such, as the right-hand side is bounded on $\mathbb R^2$. Thus by the chain rule for Lipschitz functions and using \eqref{equ:defnt} and the definition \eqref{equ:defN}, we have (here shown simultaneously for both $x$ and $t$ derivatives)
\begin{equation}\label{intermediateNX}
    \frac{\check\partial{N(t,x,p)}}{\check\partial{x},\ \check\partial{t}}
    =\frac{\check\partial{\hat N_0(\hat X(t,x,p) - v(p)t,p)}}{\check\partial{x},\ \check\partial{t}}
    = n(t,x,p) \frac{\check\partial{ (\hat X(t,x,p) - v(p)t)}}{\check\partial{x},\ \check\partial{t}},\quad\forall\,t,x,p.
\end{equation} 
Again this is a good choice of extended derivatives for $N$, as both derivatives are uniformly bounded in $t,x,p$ by using \eqref{equ:ntbound1}, \eqref{equ:Kspaceproof} and \eqref{equ:vspacent} on the right-hand side. Let us use the function $K(t,x,p) = \hat X(t,x,p)-v(p)t$, which is notationally more convenient here.
By the Lipschitz continuity statement for $N$, Lemma \ref{lem:LipT} and \eqref{intermediateNX}, we have
\begin{equation}
    \frac{\check\partial \;\vu{T} N(t,x,p)}{\check\partial{x},\ \check\partial{t}}
    = \vu{T}\,n(t,x,p) \frac{\check\partial{ K(t,x,p)}}{\check\partial{x},\ \check\partial{t}}
    ,\quad\forall\,t,x,p.
\end{equation}
With the fixed-point problem \eqref{fp2proof} we then have
\begin{equation}\label{equ:Kpartialx}
    \frac{\partial K(t,x,p)}{\partial x} 
    =
    1 + \vu{T} \,n(t,x,p)\frac{\check\partial K(t,x,p)}{ \check\partial x}\quad
    \forall\, t,p,\ \mbox{a.e.}\ x
\end{equation}
and
\begin{equation}\label{equ:Kpartialt}
    \frac{\partial K(t,x,p)}{\partial t} 
    =
    -v(p)
    +\vu{T}\, n(t,x,p)\frac{\check\partial K(t,x,p)}{ \check\partial t}\quad
    \forall\, x,p,\ \mbox{a.e.}\ t.
\end{equation}
We have shown that there are extended derivatives such that \eqref{equ:Kspaceproof} hold, and this is what we used in order to derive the above partial derivative equations. Can we choose extended derivatives such that, in addition to \eqref{equ:Kspaceproof}, the extended versions of \eqref{equ:Kpartialx} and \eqref{equ:Kpartialt} hold for every $x,t$? We now show that this is the case. Indeed, by Definition \ref{def:dressing_general}, these are equivalent to $\frac{\check\partial K(t,x,p)}{\check\partial x}  = 1\upd{dr}(t,x,p)$ and $\frac{\check\partial K(t,x,p)}{\check\partial t}  = -v\upd{dr}(t,x,p)$, which do satisfy \eqref{equ:Kspaceproof} (recall that $v\upd{dr}(t,x,p)-v(p)$ is uniformly bounded). Hence the second equations of \eqref{equ:xderivatives} and \eqref{equ:tderivatives} follow. The first equations follow from \eqref{intermediateNX}.
\end{proof}


We emphasise that the extended derivatives \eqref{equ:xderivatives} and \eqref{equ:tderivatives} are, in fact, all in $B(\mathbb R^3)$. We also note that the homeomorphism property of $\hat X(t,\cdot,p)$ proved in Theorem \ref{theo:XhatNbasic} implies a strengthening of \eqref{equ:ntbound1}:
\begin{equation}\label{equ:ntbound2}
	\norm{n(t)}_\infty = \norm{n}_\infty = \norm{n_0}_\infty
\end{equation}
for all $t\in\mathbb R$.

With the above we can now obtain the differentiability class of $n$ from the differentiability class of $n_0$ (see Subsection \ref{ssect:notations}), by a simple recursive argument using the differential equations \eqref{equ:xderivatives} and \eqref{equ:tderivatives}.
\begin{theorem}[Strong differentiability]\label{theo:differentiability} Let $r\in\mathbb N$ and define $\nu(x,p)= \abs{v(p)}+1$. If $\nu n_0 \in C^r\ind{p}(\mathbb R,\nu)$ (that is, both $n_0$ and $v n_0$ belong to $C^r\ind{p}(\mathbb R,\nu)\subset C^r\ind{p}(\mathbb R)$), then $\nu n\in C^r\ind{p}(\mathbb R^2)$ (that is, both $n$ and $v n$ belong to $C^r\ind{p}(\mathbb R^2)$). In particular, if $r\geq 1$, then $n\in C^{r-1}\ind{u}(\mathbb R)\subset C^{r-1}(\mathbb R\to \banachspace)$.
\end{theorem}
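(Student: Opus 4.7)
The plan is to proceed by induction on $r$, using as main ingredients Lemma~\ref{lem:dressing_differentiability} (differentiability of the dressing), Theorem~\ref{theo:XhatNbasic} (the explicit form of the extended derivatives of $\hat X$), Lemma~\ref{lem:k0_uniformly} (differentiability of $X_0$), and the multivariate chain rule (Fa\`a di Bruno) applied to the explicit representation $n(t,x,p)=n_0(X_0(\hat X(t,x,p)-v(p)t,p),p)$ from \eqref{equ:defnt}. The base case $r=0$ is immediate: it only asks for continuity of $n$ and $vn$ in $(t,x)$ pointwise in $p$, which follows from the Lipschitz continuity of $\hat X$ (Theorem~\ref{theo:XhatNbasic}), the Lipschitz continuity of $X_0$ (Lemma~\ref{lem:K0basic}), and the continuity of $n_0(\cdot,p)$; uniform boundedness is \eqref{equ:ntbound2} and \eqref{equ:vboundnt}.

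For the inductive step, assume the claim at level $r$ and suppose $\nu n_0\in C^{r+1}\ind{p}(\mathbb R,\nu)$. Then in particular $\nu n_0\in C^r\ind{p}(\mathbb R,\nu)$, so the induction hypothesis gives $\nu n\in C^r\ind{p}(\mathbb R^2)$, and hence both $n,vn\in C^r\ind{p}(\mathbb R^2)$. I apply Lemma~\ref{lem:dressing_differentiability} twice on $S=\mathbb R^2$: with $f=1$ (for which $nf=n\in C^r\ind{p}(\mathbb R^2)$) to obtain $1\upd{dr}\in C^r\ind{p}(\mathbb R^2)$, and with $f=v$ seen as constant in $(t,x)$ (for which $nf=vn\in C^r\ind{p}(\mathbb R^2)$) to obtain $v\upd{dr}-v\in C^r\ind{p}(\mathbb R^2)$. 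Combining \eqref{equ:xderivatives}, \eqref{equ:tderivatives} with the Lipschitz continuity of $\hat X$ from Theorem~\ref{theo:XhatNbasic}, the fundamental theorem of calculus promotes the extended partial derivatives of $\hat X$ to genuine ones, yielding $\hat X\in C^{r+1}\ind{p}(\mathbb R^2)$ with all its partial derivatives of order $1,\ldots,r+1$ uniformly bounded. Lemma~\ref{lem:k0_uniformly} applied to $n_0\in C^{r+1}\ind{p}(\mathbb R)\supset\nu n_0\in C^{r+1}\ind{p}(\mathbb R,\nu)$ gives $X_0\in C^{r+1}\ind{p}(\mathbb R)$; the chain rule applied to $n=n_0\circ(X_0\circ K)$ with $K=\hat X-vt$ then yields $C^{r+1}$ regularity of $n$ in $(t,x)$ pointwise in $p$.

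The remaining task is to verify the uniform boundedness of $\partial^a n$ and $v\,\partial^a n$ for $1\leq |a|\leq r+1$, which is where the weighted form of the hypothesis is essential. The structural observations I will use are: (i) all partial derivatives of $\hat X$ of order $\geq 1$ are uniformly bounded, since its first partials are $1\upd{dr}$ and $-(v\upd{dr}-v)$ (both bounded) and higher derivatives are obtained by further differentiating these bounded functions using Lemma~\ref{lem:dressing_differentiability}; (ii) consequently the only quantity in the chain rule that grows with $p$ is $\partial_t K=-v\upd{dr}$, which is bounded by $|v|+C\leq C'\nu$; (iii) the combined hypothesis provides simultaneously $|\partial_x^b n_0|\leq C\nu^{-(b+1)}$ and $|v\,\partial_x^b n_0|=|\partial_x^b(vn_0)|\leq C\nu^{-b}$ for $1\leq b\leq r+1$. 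In each Fa\`a di Bruno summand contributing to $\partial^a n$, the total $\nu$-growth from the $(\partial_t K)$-factors (bounded by $\nu^{a_t}$, where $a_t$ is the number of time-derivatives distributed into that partition) is balanced against the $\nu$-decay of the corresponding $\partial_x^b n_0$ factor.

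\textbf{The main obstacle} will be precisely this bookkeeping: a careful term-by-term accounting of the $\nu$-exponents in the Fa\`a di Bruno expansion of $\partial^a n$, and the parallel accounting for $v\,\partial^a n$ (where one factor of $v\leq\nu$ must be absorbed either into the $n_0$-derivative via the $vn_0$ bound, or distributed through the product). The delicate point is that the single-block partition, in which all derivatives go into one factor $\partial^a g$, is the hardest to control, and it is here that the stronger combined bound on $\nu n_0$, rather than on $n_0$ alone, becomes indispensable, ensuring that the conclusion reads $\nu n\in C^r\ind{p}(\mathbb R^2)$ and not just $n\in C^r\ind{p}(\mathbb R^2)$.
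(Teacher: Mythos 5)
Your strategy coincides with the paper's: a bootstrap that alternates between regularity of $\hat X$ (from the extended-derivative formulas \eqref{equ:xderivatives}--\eqref{equ:tderivatives} together with Lemma \ref{lem:dressing_differentiability} applied to $f=1$ and $f=v$) and regularity of $n$ (by composing $\hat n_0$ with $K=\hat X-vt$). The paper runs the induction internally, on the order $a\leq r$ of differentiability of $\hat X$ for fixed $r$, while you run it externally on $r$; this is cosmetic. Your base case, your use of the dressing lemma, and the promotion of extended derivatives to genuine ones all match the paper. (The paper handles the composition step by asserting $\nu\hat n_0\in C^r\ind{p}(\mathbb R,\nu)$ and invoking the rule \eqref{equ:Crcomposition}; you propose to do the Fa\`a di Bruno bookkeeping explicitly. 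The final sentence of the statement, which you do not address, follows from the inclusion $C^{r}\ind{p}\subset C^{r-1}\ind{u}$ recorded in Subsection \ref{ssect:notations}.)

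The one genuine issue is that the deferred bookkeeping, as you describe it, does not close. In a partition with $b$ blocks the outer factor is $\partial_x^b n_0=O(\nu^{-(b+1)})$, while the growth of the blocks is $\nu^{a_t}$: each single $t$-derivative of $K$ contributes $\partial_t K=-v\upd{dr}=O(\nu)$, and these factors can accumulate \emph{inside a single block}, e.g.\ through $\partial_{\hat x}^{m}X_0\cdot(\partial_t K)^{m}$, because the higher derivatives of $X_0$ are bounded but have no decay in $p$. After multiplying by the extra $\nu$ required by the conclusion $\nu n\in C^r\ind{p}(\mathbb R^2)$, such a summand is $O(\nu^{\,a_t-b})$, unbounded whenever $b<a_t$. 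Concretely, for $a=(2,0)$ the single-block term is $\partial_x n_0(u,p)\,\partial_t^2 u$ with $\partial_t^2 u\supset\partial_{\hat x}^2X_0\cdot(v\upd{dr})^2=O(\nu^2)$ but only $\abs{\partial_x n_0}=O(\nu^{-2})$ (saturated e.g.\ by $n_0(x,p)=\nu(p)^{-1}\Phi(x/\nu(p))$), so $\nu\,\partial_t^2 n$ is not controlled. You correctly flag the single-block partition as the delicate point, but the combined bound on $\nu n_0$ does not rescue it: it controls $\partial_x^b n_0$ by $\nu^{-(b+1)}$, not by $\nu^{-(a_t+1)}$. Be aware that the paper's own proof is compressed at exactly the same spot -- its claim $\nu\hat n_0\in C^r\ind{p}(\mathbb R,\nu)$ requires $\abs{\partial_{\hat x}^m\hat n_0}\leq C\nu^{-(m+1)}$, whereas the chain-rule term $\partial_x n_0\cdot\partial_{\hat x}^m X_0$ is a priori only $O(\nu^{-2})$. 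For $r\leq 1$ everything closes; for $r\geq 2$ your plan needs either the stronger decay hypothesis $\sup\abs{v(p)^a\partial_x^b n_0}<\infty$ for all $a,b$ (as in the summary's smoothness theorem) or a separate argument providing $p$-decay of the higher derivatives of $X_0$.
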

\begin{proof}
By Lemma \ref{lem:k0_uniformly}, $X_0\in C^{r+1}\ind{p}(\mathbb R)\subset C^{r}\ind{p}(\mathbb R)$, thus $\nu \hat n_0\in C^r\ind{p}(\mathbb R,\nu)$ (see Eq.~\eqref{equ:defnt}). We proceed by induction. By Theorem \ref{theo:XhatNbasic}, $\hat X\in C^0(\mathbb R^2\to\banachspace) = C^0\ind{u}(\mathbb R^2)\subset C^0\ind{p}(\mathbb R^2)$. Assume that $\hat X\in C^a\ind{p}(\mathbb R^2)$ for some $0\leq a\leq r$; this holds for $a=0$. Then by Eq.~\eqref{equ:defnt}, and the composition rule \eqref{equ:Crcomposition}, $\nu n\in C^a\ind{p}(\mathbb R^2)$. If $a=r$, the induction concludes here and the Lemma follows. Otherwise, from Lemma \ref{lem:dressing_differentiability}, with $f(t,x,p) = 1$ and $f(t,x,p) = v(p)$, we find that $1\upd{dr}$ and $v\upd{dr}-v$ belong to $C^a\ind{p}(\mathbb R^2)$. By Definition \ref{def:dressing_general}, they simultaneously belong to $B(\mathbb R^2\times \mathbb R)$. Hence by Theorem \ref{theo:XhatNbasic}, $\hat X\in C^{a+1}\ind{p}(\mathbb R^2)$. This completes the induction.
\end{proof}

\subsection{Initial condition}\label{ssectinitcond}

Finally, the last point to prove is that the initial condition $n(0,x,p)$ is nothing else but the seed occupation function $n_0(x,p)$. This gives $n_0$ its full meaning beyond simply an auxiliary function determining the fixed-point problem.
\begin{theorem}[Initial condition] \label{theo:initconditionK0} We have $\hat X(0,x,p) = \hat X_0(x,p)$, thus $n(0,x,p) = n_0(x,p)$, for all $x,p\in\mathbb R$.
\end{theorem}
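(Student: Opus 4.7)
The plan is to verify by direct substitution that $\hat X_0(x,p)$ solves the $t=0$ instance of the fixed-point equation \eqref{fp2proof}, and then invoke uniqueness from Theorem \ref{theo:fixedpoint}. The consequence for $n(0,x,p)$ will then follow from the definition \eqref{equ:defnt} together with the fact that $X_0(\cdot,p)$ is the inverse of $\hat X_0(\cdot,p)$, a property established in Lemma \ref{lem:K0basic}.

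Concretely, setting $t=0$ in \eqref{fp2proof} reduces the problem to showing
\begin{equation*}
    \hat X_0(x,p) - x = \vu{T}\,\hat N_0(\hat X_0(x,p),p)\quad \forall\;x,p\in\mathbb R.
\end{equation*}
On the right-hand side, I would apply the change of variable $\hat y = \hat X_0(y,p)$ in the definition \eqref{equ:N0_def} of $\hat N_0$; this is legitimate because $\hat X_0(\cdot,p)$ is strictly increasing and a bijection with $\dd{\hat X_0(y,p)} = 1_0\upd{dr}(y,p)\dd{y}$ from \eqref{equ:measureXhat0}, and $X_0(\hat X_0(y,p),p) = y$. This gives
\begin{equation*}
    \hat N_0(\hat X_0(x,p),p) = \int_0^x \dd{y}\, n_0(y,p)\,1_0\upd{dr}(y,p).
\end{equation*}
Substituting and using Fubini's theorem (justified since $n_0$ is bounded, $1_0\upd{dr}$ is uniformly bounded by Lemma \ref{lem:dressing_boundedness_of_1dr}, and $T(p,\cdot)\in L^1$), the desired identity becomes
\begin{equation*}
    \int_0^x \dd{y}\bigl(1_0\upd{dr}(y,p) - 1\bigr) = \int_0^x \dd{y}\, \vu{T}\bigl(n_0(y)\,1_0\upd{dr}(y)\bigr)(p).
\end{equation*}
This holds pointwise in $y$ by the very definition of the dressing, namely $1_0\upd{dr}(y,p) = 1 + \vu{T}\,n_0(y)\,1_0\upd{dr}(y,p)$ (Definition \ref{def:dressing_general}), so it holds after integrating in $y$.

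Having shown that $\hat X_0$ solves the fixed-point equation at $t=0$, Theorem \ref{theo:fixedpoint} yields $\hat X(0,x,p) = \hat X_0(x,p)$. Then from the definition \eqref{equ:defnt} of $n$,
\begin{equation*}
    n(0,x,p) = n_0(X_0(\hat X(0,x,p),p),p) = n_0(X_0(\hat X_0(x,p),p),p) = n_0(x,p),
\end{equation*}
using again that $X_0(\cdot,p)$ is the inverse of $\hat X_0(\cdot,p)$. There is no real obstacle here; the only small point to handle carefully is the change of variable in $\hat N_0$, which relies on the monotonicity and bi-Lipschitz property of $\hat X_0(\cdot,p)$ already secured in Lemma \ref{lem:K0basic}. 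The argument is otherwise a clean consistency check of the definitions of $\hat X_0$, $\hat N_0$, and the dressing.
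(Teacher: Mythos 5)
Your proof is correct, and it takes a noticeably more direct route than the paper's. The paper introduces an auxiliary nonlinear initial value problem $\frac{\check\partial f}{\check\partial x} = 1\upd{dr}[f]$, $f(0,p)=0$, observes that both $\hat X(0,\cdot,\cdot)$ (via Theorem \ref{theo:XhatNbasic}) and $\hat X_0$ (via its definition \eqref{equ:K0definition}) solve it, and then proves uniqueness of the IVP by showing that \emph{any} Lipschitz solution must satisfy the $t=0$ fixed-point equation — an argument that requires the extended-derivative calculus, the chain rule for Lipschitz maps, and Lemma \ref{lem:LipT} applied to an arbitrary solution $f$. You bypass the IVP entirely: you verify by hand that $\hat X_0$ itself satisfies $G_{0,x}[\hat X_0(x,\cdot)] = \hat X_0(x,\cdot)$, using the change of variables $\hat y = \hat X_0(y,p)$ (legitimate by the bi-Lipschitz monotonicity of Lemma \ref{lem:K0basic} and the measure identity \eqref{equ:measureXhat0}), Fubini, and the defining dressing identity $1_0\upd{dr} = 1 + \vu{T}n_0\,1_0\upd{dr}$; uniqueness then comes straight from Theorem \ref{theo:fixedpoint}, since $\hat X_0(x,\cdot)\in\banachspace$ by Lemma \ref{lem:K0basic}. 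Your computation is in fact the specialization of the paper's general argument to $f=\hat X_0$, so the two proofs share the same core identities, but yours is more elementary in that it avoids the machinery needed to handle an arbitrary IVP solution, while the paper's version buys the extra information that the $t=0$ slice is characterized by that IVP. Both are valid; the concluding step $n(0,x,p)=n_0(X_0(\hat X_0(x,p),p),p)=n_0(x,p)$ is identical.
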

\begin{proof}
For any function $f(\cdot,p)$ that is absolutely continuous for every $p\in\mathbb R$, we define $\tilde n_f(x,p) = n_0(X_0(f(x,p),p),p)$ (see Eq.~\eqref{equ:defnt}). We have $\sup_{x\in\mathbb R}\tilde n_f(x,p)\leq \sup_{x\in\mathbb R}n_0(x,p)$ and therefore $\norm{\vu{T}(\sup_{x\in\mathbb R}n_f(x,\cdot)}\ind{op}\leq \norm{\vu{T}(\sup_{x\in\mathbb R}n_0(x,\cdot)}\ind{op}$, so that $\tilde n_f\in B\ind{S}(\mathbb R^2)$, thus $n_f$ satisfy the same assumption \ref{assumptionA} as $n_0$. We define $1\upd{dr}[f](x,p)$ as the dressing of 1 with respect to $\tilde n_f(x)$. Note that $1\upd{dr}[\hat X_0](x,p) = 1_0\upd{dr}(x,p)$ (see \eqref{equ:K0definition}), and $1\upd{dr}[\hat X(t)] = 1\upd{dr}(t,x,p)$. We show that the (highly non-linear!) initial value problem
\begin{equation}\label{equ:toshowinit}
	\frac{\check\partial{f(x,p)}}{\check\partial{x}} = 1\upd{dr}[f](x,p),\quad  f(0,p) = 0
\end{equation}
has a unique solution. By Theorem \ref{theo:XhatNbasic}, $\hat X(0,x,p)$ solves this, and by the definition \eqref{equ:K0definition}, $\hat X_0(x,p)$ also does, whence their equality.

Integrating \eqref{equ:toshowinit} and using Lemma \ref{lem:dressing_boundedness_of_1dr} we find that $\mathbb R \ni x\mapsto f(x)\in \banachspace$, and that this function is Lipschitz continuous. Hence from \eqref{equ:N0bounded} the same properties hold for $x\mapsto \tilde N_f(x)$ defined as $\tilde N_f(x,p) = \hat N_0(f(x,p),p)$. Thus by Lemmas \ref{lem:Lipbasic} and \ref{lem:LipT}, definition \eqref{equ:N0_def}, and the chain rule,
\begin{equation}
	\frac{\check\partial \;\vu{T}\tilde N_f(x,p)}{\check\partial{x}} =
	\vu{T}\, \tilde n_f(x,p)1\upd{dr}[f](x,p) ,\quad \forall\,x,p.
\end{equation}
Consider $L(x,p) := f(x,p) - (x+\vu{T}\tilde N_f(x,p))$; by the above this is Lipschitz continuous  $\mathbb R \ni x\mapsto  L(x)\in \banachspace$, and we have
\begin{equation}
	\frac{\check\partial{L(x,p)}}{\check\partial{x}} =
	1\upd{dr}[f](x,p)
	-
	(1 + \vu{T} \tilde{n}_f(x) 1\upd{dr}[f](x,p)).
\end{equation}
By Definition \ref{def:dressing_general}, the right-hand side vanishes. As $L(0,p)=0$, then $L(x,p)=0$. Therefore, $f(x,p) = x+\vu{T}\tilde N_f(x,p) = G_{0,x}[f(x,\cdot)](p)$ (see definition \eqref{equ:mapG}) for all $x,p\in\mathbb R$. By Theorem \ref{theo:fixedpoint}, $f$ is unique.
\end{proof}

\section{Existence and uniqueness of solutions to the GHD equations}\label{sec:existenceuniqueness}

We now show the existence and uniqueness of the solution to the GHD equation, both in its weak form when not assuming differentiability of the initial condition, and in its strong form when assuming it. The results of the previous section give us the framework to obtain the existence; what remains to be done is to connect them with the GHD equation more explicitly, and show uniqueness. We divide this section into two parts: we first discuss the weak forms of the GHD equation, where we introduce its ``potential form" and how it connects with its more standard weak statement; and we then impose differentiability and obtain the strong form.

We recall from Subsection \ref{ssect:reviewsetup} that we assume given a scattering shift $T(p,q)$ with \eqref{equ:condTop}, and a velocity function $v(p)$. We are looking for the existence and uniqueness of the GHD equation given $T$ and $v$. In this section, all additional assumptions on the other functions involved will be stated explicitly when needed.

We first note that a natural space in which to look for an occupation function solving the GHD equation is that of $n$ defined in \eqref{equ:defnt}: this is the space $B\ind{S}(\mathbb R^3)\cap B(\mathbb R^3,\abs{v})$ (see Eqs.~\eqref{equ:ntspace} and \eqref{equ:vspacent}), a subset of $B\ind{T}(\mathbb R^3)\cap B(\mathbb R^3,\abs{v})$. Recall that the latter is the space of non-negative real functions with $\norm{\vu{T}(\sup_{(t,x)\in\mathbb R^3}n(t,x,\cdot))}\ind{op}<1$ and $\sup_{(t,x,p)\in\mathbb R^3}\abs{v(p)}\,n(t,x,p)<\infty$, while the former has a possibly more stringent first inequality, see \eqref{equ:defBS}.

Second, we note that the relation between an occupation function $n(t,x,p)$ and its corresponding particle density $\rho\ind{p}(t,x,p)$ given in  \eqref{rhos1dr} is a bijection (onto its image) and gives a non-negative function $\rho\ind{p}(t,x,p)\geq 0\;\forall t,x,p$, and that the effective velocity defined by \eqref{veffvdr} is the unique solution to the integral equation \eqref{equ:intro_GHD_conservation_velocity}. Thus, with this, it is indeed sufficient to discuss the GHD equation in terms of the occupation function; statements in term of the particle density will follow. We formalise this in the following lemma.
\begin{lemma} The map $B\ind{S}(\mathbb R^3)\cap B(\mathbb R^3,\abs{v})\ni n\mapsto \rho\ind{p}$ given by
\begin{equation}\label{equ:rhopnproof}
	2\pi \rho\ind{p}(t,x,p) = n(t,x,p) 1\upd{dr}(t,x,p),\quad
\end{equation}
is injective (hence bijective onto its image). The result satisfies $\rho\ind{p}(t,x,p)\geq 0\;\forall t,x,p\in\mathbb{R}$. Further, the function
\begin{equation}\label{equ:veffnproof}
	v\upd{eff}(t,x,p) = \frac{v\upd{dr}(t,x,p)}{1\upd{dr}(t,x,p)}
\end{equation}
is the unique element of $B(\mathbb R^3,n1\upd{dr})$ which solves \eqref{equ:intro_GHD_conservation_velocity}.
\end{lemma}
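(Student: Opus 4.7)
The strategy is to treat the three claims in order: injectivity, positivity, and then existence/uniqueness of $v\upd{eff}$. Throughout I would work pointwise in $(t,x)$ so that the relation $n(t,x)\in B\ind{S}(\mathbb R)\subset B\ind{T}(\mathbb R)$ gives access to the dressing theory of Subsection~4.3 at each space-time point, and I would invoke \eqref{equ:vspacent} to guarantee $v\in B(\mathbb R,n(t,x))$ for every $(t,x)$.

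First, for injectivity, I would observe that the state density is determined by $\rho\ind{p}$ alone through the affine integral relation $2\pi\rho\ind{s} = 1+\int\varphi\,\rho\ind{p}$ in \eqref{equ:summaryrhos}, which needs no reference to $n$. Combining this with \eqref{equ:rhos1dr} and the strictly positive lower bound on $1\upd{dr}$ from Lemma~\ref{lem:dressing_boundedness_of_1dr} (available since $n(t,x)\in B\ind{S}(\mathbb R)$), the function $\rho\ind{s}$ is everywhere positive, so $n = \rho\ind{p}/\rho\ind{s}$ is uniquely recovered. Non-negativity of $\rho\ind{p}$ is then immediate: $n\geq 0$ by $B\ind{S}\subset B_+$, and $1\upd{dr}>0$ by the same lemma.

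Next, for the identification $v\upd{eff} = v\upd{dr}/1\upd{dr}$, I would verify by direct substitution that it solves \eqref{equ:intro_GHD_conservation_velocity}. Using $\varphi=2\pi T$ and $2\pi\rho\ind{p}=n\,1\upd{dr}$, the right-hand side becomes
\begin{equation*}
v(p)+\int\dd q\,T(p,q)n(q)1\upd{dr}(q)\frac{v\upd{dr}(q)}{1\upd{dr}(q)}-\frac{v\upd{dr}(p)}{1\upd{dr}(p)}\int\dd q\,T(p,q)n(q)1\upd{dr}(q),
\end{equation*}
which collapses to $v\upd{dr}(p)/1\upd{dr}(p)$ upon using the defining dressing equations $v\upd{dr}=v+\vu Tn\,v\upd{dr}$ and $1\upd{dr}=1+\vu Tn\,1\upd{dr}$. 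Membership in $B(\mathbb R^3,n\,1\upd{dr})$ follows because the $1\upd{dr}$ factors cancel, leaving $\sup_{t,x,p} n|v\upd{dr}|$, which is finite by the bound \eqref{equ:bounddressingn} applied to $v\in B(\mathbb R^3,n)$.

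Finally, for uniqueness, I would set $d=w-v\upd{eff}$ for any other solution $w\in B(\mathbb R^3,n\,1\upd{dr})$ and introduce $D=d\,1\upd{dr}$. Subtracting the two integral equations gives, at every fixed $(t,x)$, the identity $D=\vu Tn\,D$, i.e.~$(1-\vu Tn(t,x))D(t,x)=0$. Since $1\upd{dr}$ is uniformly bounded above (again Lemma~\ref{lem:dressing_boundedness_of_1dr}), $d\in B(\mathbb R^3,n\,1\upd{dr})$ forces $D\in B(\mathbb R^3,n)$, hence $D(t,x)\in B(\mathbb R,n(t,x))$. Definition~\ref{def:dressing_general} asserts that $(1-\vu Tn(t,x))$ is invertible on $B(\mathbb R,n(t,x))$, so $D(t,x)=0$; the strict positivity of $1\upd{dr}$ then yields $d=0$, completing uniqueness. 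The only subtle step is precisely this last one: checking that the ``right'' space $B(\mathbb R^3,n\,1\upd{dr})$ in the statement is exactly tuned so that $D$ lands in the space where the dressing inverse is known to be injective, which is why I would be careful with the weight manipulations rather than with any contractive estimate.
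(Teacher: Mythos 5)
Your proposal is correct and follows essentially the same route as the paper: injectivity and non-negativity via the recovery of $\rho\ind{s}=1\upd{dr}/(2\pi)$ and its strict positivity from Lemma~\ref{lem:dressing_boundedness_of_1dr}, verification of \eqref{equ:veffnproof} by direct substitution using the defining dressing identities, and uniqueness reduced to the uniqueness clause of Definition~\ref{def:dressing_general} on $B(\mathbb R,n)$. The only (immaterial) difference is that you subtract two solutions and show the weighted difference $D=d\,1\upd{dr}$ satisfies $(1-\vu{T}n)D=0$, whereas the paper substitutes $w=v\upd{eff}1\upd{dr}$ directly and identifies $w=v\upd{dr}$.
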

\begin{proof} As $\rho\ind{p}(t,x,\cdot)$ is bounded, then $\int_{\mathbb R}\dd{q} T(p,q)\rho\ind{p}(t,x,q)$ exists and is finite, and, using \eqref{equ:rhopnproof} and Definition \ref{def:dressing_general}, the function
\begin{equation}
	\rho\ind{s}(t,x,p) = \frac1{2\pi}
	+\int_{\mathbb R}\dd{q} T(p,q)\rho\ind{p}(t,x,q)
\end{equation}
satisfies $2\pi\rho\ind{s}(t,x,p) = 1\upd{dr}(t,x,p)$. By Lemma \ref{lem:dressing_boundedness_of_1dr}, we have $2\pi\rho\ind{s}(t,x,p)>0$ for all $t,x,p$ (uniformly), and therefore $\rho\ind{p}(t,x,p)\geq 0$. Thus, with \eqref{equ:rhopnproof} we find $n = \rho\ind{p}/\rho\ind{s}$, which is the required inverse.

Finally, in one direction, as $n\in B(\mathbb R^3,\abs{v})$ we have $v\in B(\mathbb R^3,n)$ and hence by Definition \ref{def:dressing_general}, $v\upd{dr} \in B(\mathbb R^3,n)$ thus from the definition \eqref{equ:veffnproof}, $v\upd{eff} \in B(\mathbb R^3,n1\upd{dr})$. Putting \eqref{equ:veffnproof} in \eqref{equ:intro_GHD_conservation_velocity} we use
\begin{align}
v(p) + \int \dd{q}\varphi(p,q)\rho\ind{p}(t,x,q)v\upd{eff}(t,x,q) &= v(p) + \int \dd{q}T(p,q)n(t,x,q)v\upd{dr}(t,x,q)\nonumber\\
&= v\upd{dr}(t,x,p)\label{proofwww1}
\end{align}
and
\begin{align}
&\Big(1+ \int \dd{q}\varphi(p,q)\rho\ind{p}(t,x,q)\Big)v\upd{eff}(t,x,p)\nonumber\\
&\hspace{2cm}= \Big(1 + \int \dd{q}T(p,q)n(t,x,q)1\upd{dr}(t,x,q)\Big)v\upd{eff}(t,x,p)\nonumber\\
&\hspace{2cm}= 1\upd{dr}(t,x,p) v\upd{eff}(t,x,p) \nonumber\\
&\hspace{2cm}= v\upd{dr}(t,x,p)\label{proofwww2}
\end{align}
and get equality, showing that \eqref{equ:veffnproof} is a solution to \eqref{equ:intro_GHD_conservation_velocity}. In the opposite direction, suppose $v\upd{eff} \in B(\mathbb R^3,n1\upd{dr})$ solves \eqref{equ:intro_GHD_conservation_velocity}. We again consider the two quantities on the left-hand sides of \eqref{proofwww1} and \eqref{proofwww2}, which must be equated. Writing $v\upd{eff} = w/1\upd{dr}$ we have $w\in B(\mathbb R^3,n)$, and by the calculations above, we get
\begin{equation}
	v(p) + \int \dd{q}T(p,q)n(t,x,q)w(t,x,q)
	= w(t,x,p).
\end{equation}
By Definition \ref{def:dressing_general}, we conclude that $w(t,x,p) = v\upd{dr}(t,x,p)$.
\end{proof}

\subsection{Weak formulations of the GHD equation}\label{ssectweak}

We now express broadly three different ways of formulating the GHD equation ``weakly" -- without having to assume differentiability of the particle density and its current. The goal is two-fold: first to characterise solutions for initial conditions that only satisfy appropriate weak properties of boundedness, and second to obtain formulations which are potentially more amenable to proofs from microscopic models (we discuss this aspect in the conclusion). For such weak initial conditions, one can still impose constraints of various strengths on the dynamics itself: one may look at families of functions on space-time, with the same given initial condition, that are more or less large, all ``morally" satisfying the GHD equation. We will require the family to be small enough in order to have uniqueness; thus we will need strong enough constraints on the dynamics. We will then consider how these different ways of expressing weak solutions are related to each other.

Here, it is sufficient to take $n\in B\ind{T}(\mathbb R^3)\cap B(\mathbb R^3,\abs{v})\supset B\ind{S}(\mathbb R^3)\cap B(\mathbb R^3,\abs{v})$, which by Definition \ref{def:dressing_general} guarantees that $1\upd{dr}(t,x,p)$ and $v\upd{dr}(t,x,p)$ exist and that $1\upd{dr}(t,x,p)$ and $n(t,x,p)v\upd{dr}(t,x,p)$ are uniformly bounded. But in fact, all definitions and the lemma below are very general: the momentum variable $p$ plays no direct role -- it can be seen as a fixed parameter -- and, fixing this parameter, we only need to use uniform boundedness in space-time of the density $q(t,x) = n(t,x,p)1\upd{dr}(t,x,p)$ and current $j(t,x) = n(t,x,p)v\upd{dr}(t,x,p)$. That is, one could reformulate the definitions as weak formulations of a generic continuity equation $\partial_t q(t,x) + \partial_x j(t,x)=0$ for uniformly bounded $q,j\in B(\mathbb R^2)$, and the lemma as the equivalence of these formulations. We keep the special notation adapted to GHD as it makes the specialisation to this case clearer.

Perhaps the most convenient weak formulation of the GHD equation is its potential form, which we now define, both in its ``complete" and ``partial" variants; the complete one is that with the strongest constraint, for which existence and uniqueness will follow.

\begin{definition}[Potential form of the GHD equation]\label{def:potential} We say that a function $n\in B\ind{T}(\mathbb R^3)\cap B(\mathbb R^3,\abs{v})$ satisfies the GHD equation in complete (resp.~partial) potential form, if there exists a potential, that is a function $\mathbb R^3 \ni(t,x,p)\mapsto N(t,x,p)\in\mathbb R$ with the properties that for every $p\in\mathbb R$, $N(t,x,p)$ is absolutely continuous in $x\in\mathbb R$ uniformly for $t\in\mathbb R$, and is absolutely continuous in $t\in\mathbb R$ uniformly for $x\in\mathbb R$; with $N(0,0,p)=0$; such that for every $p\in\mathbb R$,
    \begin{equation}\label{equ:Ndifferentialequations}
        \frac{\partial{N(t,x,p)}}{\partial{x}} = n(t,x,p)1\upd{dr}(t,x,p),\quad
        \frac{\partial{N(t,x,p)}}{\partial{t}} = -n(t,x,p)v\upd{dr}(t,x,p).
    \end{equation}
The first equation is for every (resp.~a.e.) $t\in\mathbb R$ and a.e.~$x\in\mathbb R$, and the second, for every (resp.~a.e.) $x\in\mathbb R$ and a.e.~$t\in\mathbb R$.
\end{definition}
Note that in both cases the potential is unique, which can be seen by integration. The only difference between the complete and partial variant is if the $x$-derivative gives the stated result for every $t$, or a.e.~$t$, and vice versa for the $t$ derivative. Clearly, if the potential is twice continuously differentiable in $t,x$, then the equality of mixed derivatives gives the GHD equation Eq.~\eqref{equ:intro_GHD_conservation}. 

Another very natural weak formulation of the GHD equation is that which represents the continuity relation in terms of integrals instead of derivatives. Formally, this is obtained by integrating the GHD equation over a region of space-time; the result, obtained simply by performing one integral against one derivative in each term, is that the integration over the boundary of the region of the perpendicular current is zero (Stokes theorem). We do not assume differentiability, but the resulting vanishing of the contour integral can still be written, and is a common way of expressing the continuity equation. Here, it is sufficient to concentrate on rectangular contours (from which large classes of contours can be obtained). We will refer to this as the ``integral form".
\begin{definition}[Integral form of the GHD equation] \label{def:integral}
We say that a function $n\in B\ind{T}(\mathbb R^3)\cap B(\mathbb R^3,\abs{v})$ satisfies the GHD equation in complete (resp.~partial) integral form, if
\begin{equation}\label{equ:integralform}\begin{aligned}
	&\int_{x_1}^{x_2} \dd{x}
	\Bigg(
	n(t_2,x,p)1\upd{dr}(t_2,x,p)-
	n(t_1,x,p)1\upd{dr}(t_1,x,p)
	\Bigg)\\ &\qquad
	+
	\int_{t_1}^{t_2}\dd{t}
	\Bigg(
	n(t,x_2,p)v\upd{dr}(t,x_2,p)-
	n(t,x_1,p)v\upd{dr}(t,x_1,p)
	\Bigg) = 0
	\end{aligned}
\end{equation}
for every $p\in\mathbb R$ and every (resp.~a.e.) $(x_1,x_2,t_1,t_2)\in\mathbb R^4$.
\end{definition}

Finally, a perhaps more standard weak formulation is that which involves integrations against rapidly decreasing (Schwartz) functions (see e.g.~\cite{rudin}). This we will refer to as the ``weak form" of the GHD equation; it does not have a ``complete" and ``partial" variant; or more precisely, effectively it only have the latter one.
\begin{definition}[Weak form of the GHD equation] \label{def:weak}
We say that a function $n\in B\ind{T}(\mathbb R^3)\cap B(\mathbb R^3,\abs{v})$ satisfies the GHD equation in weak form, if for every rapidly decreasing function $f:\mathbb R^2\to\mathbb R$ and every $p\in\mathbb R$, we have
\begin{equation}\label{equ:weakform}
\int_{\mathbb R^2} \dd{t}\dd{x}\Big(
\partial_t f(t,x)\, n(t,x,p)1\upd{dr}(t,x,p)
+
\partial_x f(t,x)\, n(t,x,p)v\upd{dr}(t,x,p)\Big)
=0.
\end{equation}
\end{definition}

Clearly any complete variant imply its associated partial variant. Also, because the density and current are uniformly bounded, it is relatively elementary to show that the above three partial variants are in fact equivalent. We express this in the following lemma (this can be seen as a ``weak version" of the Poincar\'e lemma, which stipulates the existence and uniqueness of a potential for a conservation law). The proof is provided in Appendix \ref{appweak}.

\begin{lemma}[Equivalence of weak statements]\label{lem:equivalenceweak}
Let $n\in B\ind{T}(\mathbb R^3)\cap B(\mathbb R^3,\abs{v})$. Then concerning the GHD equation for $n$: complete potential form $\Leftrightarrow$ complete integral form; and partial potential form $\Leftrightarrow$ partial integral form $\Leftrightarrow$ weak form.
\end{lemma}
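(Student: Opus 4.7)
The plan is to fix $p\in\mathbb R$ throughout and abbreviate $q(t,x) := n(t,x,p)\,1\upd{dr}(t,x,p)$ and $j(t,x) := n(t,x,p)\,v\upd{dr}(t,x,p)$. Under the hypothesis $n \in B\ind{T}(\mathbb R^3)\cap B(\mathbb R^3,\abs{v})$, Definition \ref{def:dressing_general} together with the bounds \eqref{equ:bounddressingnormal} and \eqref{equ:bounddressingnff} give $q,j \in B(\mathbb R^2)$. The momentum variable now plays the role of a mute parameter and the three formulations reduce to three standard weak statements of a scalar continuity equation $\partial_t q + \partial_x j = 0$ with uniformly bounded $q,j$.

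For the complete equivalence, the direction complete potential $\Rightarrow$ complete integral is immediate from the absolute continuity of $N$ in each variable (uniformly in the other): the fundamental theorem of calculus gives $\int_{x_1}^{x_2}q(t,y)\dd{y} = N(t,x_2,p)-N(t,x_1,p)$ for every $t,x_1,x_2$, together with the analogous identity in $t$ for $-j$; substituted into \eqref{equ:integralform}, the four $N$-values at the corners telescope to zero. For the converse, I would set
\[
    N(t,x,p) := \int_0^x q(t,y)\dd{y} - \int_0^t j(s,0)\dd{s}.
\]
Lebesgue differentiation on the first integral gives $\partial_x N = q$ for every $t$ and a.e.\ $x$; and substituting the complete integral identity at $(x_1,x_2,t_1,t_2)=(0,x,0,t)$ yields the alternative representation $N(t,x,p) = \int_0^x q(0,y)\dd{y} - \int_0^t j(s,x)\dd{s}$, whence $\partial_t N = -j$ for every $x$ and a.e.\ $t$. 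The condition $N(0,0,p)=0$ is automatic, and uniform absolute continuity in each variable follows from the uniform bounds on $q,j$.

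The partial potential $\Leftrightarrow$ partial integral equivalence is then proved by the same template with a.e.\ quantifiers inserted. From the partial potential form, the telescoping argument produces the integral identity for a.e.\ $(t_1,t_2)$ (independently of $x_1,x_2$) and, by symmetry, for a.e.\ $(x_1,x_2)$, whence a.e.\ in $\mathbb R^4$. From the partial integral form, Fubini provides a full-measure set of base points $x_0\in\mathbb R$ for which the identity holds with $x_1=x_0$ for a.e.\ $(x_2,t_1,t_2)$, and similarly a good $t_0$; one then constructs $N$ by the two integrals above with these base points, adjusted by a constant to enforce $N(0,0,p)=0$. For the weak form equivalence, I would use mollification: letting $\eta_\epsilon$ be a smooth compactly supported approximate identity on $\mathbb R^2$, the weak form is equivalent to $\partial_t q + \partial_x j = 0$ in $\mathcal D'(\mathbb R^2)$, which is equivalent to the classical identity $\partial_t q_\epsilon + \partial_x j_\epsilon = 0$ for every $\epsilon>0$. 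For smooth $q_\epsilon,j_\epsilon$, Stokes's theorem yields the integral identity on every rectangle; passing $\epsilon\to 0$ using $q_\epsilon\to q$ and $j_\epsilon\to j$ in $L^1\ind{loc}$, together with Fubini, gives the partial integral form a.e. The reverse implication is obtained by testing against tensor-product Schwartz functions approximating indicators of rectangles, with dominated convergence applicable throughout owing to the $L^\infty$ bounds.

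The main obstacle I foresee is the bookkeeping of the a.e.\ quantifiers in the partial case: one must invoke Fubini carefully to ensure that the null set in $\mathbb R^4$ where the partial integral form may fail does not obstruct the selection of base points for the potential, nor the identification of the derivatives of $N$ with $q$ and $-j$ on a set of full measure in $\mathbb R^2$. The verification that $N$ satisfies the uniform absolute continuity required in Definition \ref{def:potential} is then a direct consequence of the uniform $L^\infty$ bounds $\norm{q}_\infty,\norm{j}_\infty<\infty$, inherited from $n \in B\ind{T}(\mathbb R^3)\cap B(\mathbb R^3,\abs{v})$.
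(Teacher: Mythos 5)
Your reduction to a scalar continuity equation with bounded density $q = n1\upd{dr}$ and current $j = nv\upd{dr}$, and your treatment of the complete variants, are correct and essentially identical to the paper's: the telescoping argument for complete potential $\Rightarrow$ complete integral, and the explicit potential $N(t,x,p)=\int_0^x q(t,y)\dd{y}-\int_0^t j(s,0)\dd{s}$ for the converse (the paper routes this through an intermediate ``global conservation form'' with primitives $\check Q,\check J$, but the construction is the same). Your mollification argument for weak $\Rightarrow$ partial integral is also fine and is a legitimate alternative to the paper's tent-function approximation of the rectangle indicator.

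The gap is in the partial case, and it is exactly the obstacle you flag without resolving. Definition \ref{def:potential} requires, even in its partial variant, that $N(t,x,p)$ be absolutely continuous in $t$ for \emph{every} $x$ (uniformly) and in $x$ for \emph{every} $t$. For your candidate $N(t,x,p)=\int_{x_0}^{x}q(t,y)\dd{y}-\int_{t_0}^{t}j(s,x_0)\dd{s}$, the map $t\mapsto\int_{x_0}^{x}q(t,y)\dd{y}$ is a priori only bounded and measurable; the partial integral identity shows it agrees with a Lipschitz function of $t$ only on a full-measure set of times that may depend on $x$ and on the base points, so your $N$ need not even be continuous in $t$ at the exceptional times, and redefining it there threatens absolute continuity in $x$ at those times. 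The missing idea --- which is the actual technical content of the paper's Appendix \ref{appweak}, steps I and IV --- is to show first that the integral identity makes $t\mapsto\int_{x_1}^{x_2}q(t,y)\dd{y}$ \emph{uniformly} Lipschitz (with constant $2\norm{j}_\infty$ independent of $x_1,x_2$) on a dense set of times, pass to its unique continuous extension $\check Q$, verify that the conservation identity and the Lipschitz-in-$x$ property survive the extension, and only then assemble $N$ from $\check Q$ and $\check J$. Relatedly, your route ``partial integral $\Rightarrow$ weak'' by testing against tensor products approximating indicators is under-specified; the clean way to close the loop is partial potential $\Rightarrow$ weak by integrating by parts against $\partial_t f$ and $\partial_x f$ using \eqref{equ:Ndifferentialequations} (the paper's step II), which you do not spell out.
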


\begin{remark}
In fact, another natural weak formulation involves integrating against a test function in space only: for every rapidly decreasing function $f:\mathbb R\to\mathbb R$ and every $p\in\mathbb R$, one asks for the function $\mathbb R\ni t\mapsto \int \dd{x} f(x)n(t,x,p)1\upd{dr}(t,x,p)$ to be absolutely continuous, with
\begin{equation}\label{equ:weakform2}
	\frac{\partial }{\partial t} \int \dd{x} f(x)n(t,x,p)1\upd{dr}(t,x,p) = \int \dd{x} \partial_x f(x)\,n(t,x,p)v\upd{dr}(t,x,p)
\end{equation}
for a.e.~$t\in\mathbb R$. The above three weak formulations in their partial variants are also equivalent to this one, and there is similarly a complete variant (we omit the proof).
\end{remark}

\subsection{Existence and uniqueness theorems}

With the fixed-point problem of Subsection \ref{ssectfixedpoint}, and the weak formulations of the GHD equation in Subsection \ref{ssectweak}, we now have all ingredients in order to prove our main theorems: the existence and uniqueness of the solution to the GHD equation in its weak formulations (Theorem \ref{theo:potentialform}), and the existence and uniqueness in strong, differentiable form, with controlled differentiability properties for all times (Theorem \ref{theo:differentiableform}).

For the weak formulations, thanks to Lemma \ref{lem:equivalenceweak}, for showing existence it is sufficient to choose one of the complete variants, as existence for it will imply existence for all other weak formulations (complete and partial variants). We choose the potential form. For uniqueness, we only have results for the complete variants. Clearly the solutions in partial variants cannot be strictly unique as constraints on the dynamics are only ``almost everywhere". One may hope to have ``almost unique" solutions, however we leave this for future investigations. In any case, Theorem \ref{theo:potentialform} establishes that the complete variants are constraining enough on the dynamics to give, for any weak initial condition with appropriate bounds, existence and uniqueness of the GHD equation.

When controlling differentiability, we use the spaces introduced in \eqref{equ:defCp} and \eqref{equ:defCu}: these ask for pointwise, resp.~uniform, differentiability, along with uniform bounds on all derivatives of order 1 and more.

Here and below the dressing of $f\in B(\mathbb R,n(t,x))$ is with respect to $n(t,x)$ (Definition \ref{def:dressing_general}). 

\begin{theorem}[Existence and uniqueness, potential form]\label{theo:potentialform}
Assume \eqref{assumptionA}, \eqref{assumptionV}. There exists an essentially unique $n\in B\ind{S}(\mathbb R^3)\cap B(\mathbb R^3,\abs{v})$ such that $n(0,x,p) = n_0(x,p)$ which satisfies the GHD equation in complete potential form (Definition \ref{def:potential}). Further, let $\hat N_0,\,X_0$ be defined as in Subsection \ref{ssectseed}. Then the function $\hat X(t,x,p) = x + \vu{T} N(t,x,p)$, see Eq.~\eqref{XhatN}, is the unique solution to the fixed-point problem \eqref{fp2proof}, and the function $n$ is given by \eqref{equ:defnt} for a.e.~$t\in\mathbb R$ and all $x,p\in\mathbb R$.
\end{theorem}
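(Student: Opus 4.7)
The existence half essentially assembles previously proven facts. I would set $\hat X(t, x, p)$ equal to the fixed point produced by Theorem~\ref{theo:fixedpoint}, define $N$ by \eqref{equ:defN} and $n$ by \eqref{equ:defnt}, and verify the membership $n \in B\ind{S}(\mathbb R^3) \cap B(\mathbb R^3, \abs{v})$ from \eqref{equ:ntspace}--\eqref{equ:vspacent}. The initial-value identity $n(0, x, p) = n_0(x, p)$ is the content of Theorem~\ref{theo:initconditionK0}. The factorisation $\hat X = x + \vu{T} N$ is \eqref{XhatN}, and the potential-form requirements---Lipschitz continuity in each variable, the normalisation $N(0, 0, p) = 0$, and the extended-derivative equations \eqref{equ:Ndifferentialequations}---are exactly what Theorem~\ref{theo:XhatNbasic} delivers.

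For uniqueness, the strategy is to reduce the problem to the uniqueness of the fixed point in \eqref{fp2proof}. Given another candidate $n' \in B\ind{S}(\mathbb R^3) \cap B(\mathbb R^3, \abs{v})$ satisfying the complete potential form with potential $N'$, I define $\hat X'(t, x, p) := x + \vu{T} N'(t, x, p)$ and aim to show that $\hat X'$ solves \eqref{fp2proof}; Theorem~\ref{theo:fixedpoint} then forces $\hat X' = \hat X$. From $\hat X' = \hat X$ and the shared normalisation $N(0, 0, p) = N'(0, 0, p) = 0$, integrating \eqref{equ:Ndifferentialequations} in $x$ recovers $N' = N$, and then $n' = n$ a.e., because their common value of the product $n \cdot 1\upd{dr}$ (the $x$-derivative of the shared potential) determines $1\upd{dr}$ through the dressing identity $1\upd{dr} = 1 + \vu{T}(n \cdot 1\upd{dr})$ and hence determines $n$ as well.

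To show that $\hat X'$ satisfies \eqref{fp2proof}, the key identity to establish is $N'(t, x, p) = \hat N_0(\hat X'(t, x, p) - v(p) t, p)$. At $t = 0$ this follows by integrating $\partial_x N'(0, \cdot, p) = n_0(\cdot, p)\cdot 1_0\upd{dr}(\cdot, p)$ and performing the change of variable $y \mapsto \hat X_0(y, p)$ using \eqref{equ:K0definition}; the result matches the definition \eqref{equ:N0_def} of $\hat N_0$, giving $N'(0, x, p) = \hat N_0(\hat X_0(x, p), p)$. For $t \neq 0$, I pass to the free coordinate: the uniform bi-Lipschitz bound on $\hat X'(t, \cdot, p)$---which follows from \eqref{equ:Ndifferentialequations} and Lemma~\ref{lem:dressing_boundedness_of_1dr} exactly as in \eqref{equ:Ktincrease}---provides a Lipschitz inverse $X'(t, \cdot, p)$, and I set $\tilde N(t, \hat x, p) := N'(t, X'(t, \hat x, p), p)$. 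A chain-rule computation through the extended derivatives of Lemmas~\ref{lem:Lipbasic} and~\ref{lem:LipT}, combined with the implicit-function identity $\partial_t X' = (v\upd{dr} - v)/1\upd{dr}$ (dressings taken with respect to $n'$), collapses \eqref{equ:Ndifferentialequations} to the free transport equation $\partial_t \tilde N + v(p)\, \partial_{\hat x} \tilde N = 0$ almost everywhere. Uniqueness for this linear transport equation at Lipschitz regularity (by integration along the characteristics $\hat x - v(p) t = \mathrm{const}$) then gives $\tilde N(t, \hat x, p) = \hat N_0(\hat x - v(p) t, p)$, which on reverting to the original variables closes the argument.

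The main obstacle I anticipate is the chain-rule computation producing the free transport equation for $\tilde N$. Because $N'$ and $\hat X'$ are only absolutely continuous, every differentiation must be performed through extended derivatives, and one must check that the composite $\tilde N$ remains absolutely continuous in each variable (this is where the uniform bi-Lipschitz bounds on $\hat X'(t, \cdot, p)$ become essential) and that a.e.\ equalities of the derivatives are preserved under the composition with $X'$. Once this bookkeeping is handled cleanly, the remaining steps---solving the free transport equation and invoking Theorem~\ref{theo:fixedpoint}---are routine.
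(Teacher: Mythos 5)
Your proposal is correct and follows essentially the same route as the paper: existence by assembling Theorems \ref{theo:fixedpoint}, \ref{theo:XhatNbasic} and \ref{theo:initconditionK0}, and uniqueness by defining $\hat X' = x + \vu{T}N'$, passing to the free coordinate, showing the height field there reduces to $\hat N_0(\hat x - v(p)t,p)$, and invoking uniqueness of the fixed point. The only cosmetic difference is that the paper builds the shift $v(p)t$ into the definition of $\tilde N$ so that its extended $t$-derivative vanishes identically (sidestepping the measure-zero-characteristics issue you flag as your main obstacle), rather than solving a free transport equation a.e.
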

\begin{proof}
Note that as both \eqref{assumptionA} and \eqref{assumptionV} are assumed to hold, we may indeed construct $\hat N_0$, $\hat X_0$ and $X_0$ as in Subsection \ref{ssectseed}, which we will use below.

{\bf Existence.}

Theorems \ref{theo:XhatNbasic} and \ref{theo:initconditionK0}, and Eqs.~\eqref{equ:ntspace}, \eqref{equ:vspacent}, show that $N$ and $n$ exist with the properties stated.

{\bf Uniqueness.}

Assume that there exists a potential $N$ as per Definition \ref{def:potential}.

{\em Lipschitz continuity of $N$.} By \eqref{equ:bounddressingn} we have the inequalities
\begin{equation}
	\abs{n(t,x,p)1\upd{dr}(t,x,p)}\leq\frac{\norm{n(t,x)}_\infty}{1-\norm{\vu{T}n(t,x)}\ind{op}}\leq\frac{\norm{n}_\infty}{1-\norm{\vu{T}n}\ind{op}},
\end{equation}
and
\begin{equation}
	\abs{n(t,x,p)v\upd{dr}(t,x,p)}\leq\frac{\norm{v}_\infty^{n(t,x)}}{1-\norm{\vu{T}n(t,x)}\ind{op}}\leq\frac{\norm{v}_\infty^{n}}{1-\norm{\vu{T}n}\ind{op}}.
\end{equation}
Thus integrating the first and second equation of \eqref{equ:Ndifferentialequations} over $x$ and $t$, respectively, we have that $(x,t)\mapsto N(t,x)\in\banachspace$ is Lipschitz continuous in $t$ uniformly for $x\in\mathbb R$, and in $x$ uniformly for $t\in\mathbb R$ (for functions of two variables $(t,x)$, we will refer to this simply as ``Lipschitz continuous").

{\em Definition of $\hat X$ and its differential equations.} Let us define $\hat X(t,x,p)$ by \eqref{XhatN}; note that $\hat X(0,0,p)=0$. Then by Lemma \ref{lem:LipT}, $(x,t)\mapsto \hat X(t,x)\in B(\mathbb R)$ is Lipschitz continuous, and for all $t,x,p$,
\begin{equation}
	\frac{\check\partial \hat X(t,x,p)}{\check\partial x}
	=
	1 + \vu T n(t,x)1\upd{dr}(t,x,p),\quad
	\frac{\check\partial \hat X(t,x,p)}{\check\partial t}
	=
	- \vu T n(t,x)v\upd{dr}(t,x,p).
\end{equation}
From Definition \ref{def:dressing_general}, we conclude that
\begin{equation}\label{equ:Kdifferentialequations}
        \frac{\check\partial{\hat X(t,x,p)}}{\check\partial{x}} = 1\upd{dr}(t,x,p),\quad
        \frac{\check\partial{\hat X(t,x,p)}}{\check\partial{t}} = -(v\upd{dr}(t,x,p) - v(p)).
\end{equation}
Therefore by \eqref{equ:K0definition} and the initial value $\hat X(0,0,p)=0$ as well as $n(0,x,p) = n_0(x,p)$, we find that $\hat X(0,x,p) = \hat X_0(x,p)$ for all $x,p$.

{\em Lipschitz continuity of $X$.} By integrating \eqref{equ:Kdifferentialequations},
\begin{equation}
	\hat X(t,x,p)-\hat X(s,y,p) = \int_y^x \dd{x'}1\upd{dr}(s,x',p)
	-\int_s^t \dd{t'} (v\upd{dr}(t',x,p) - v(p))
\end{equation}
and therefore by Lemma \ref{lem:dressing_boundedness_of_1dr} and the bound \eqref{equ:bounddressingnff}, for every $t,s,x,y,p$,
\begin{equation}\label{equ:ineqKproof}
	\abs{\hat X(t,x,p)-\hat X(s,y,p)}
	+
	\abs{t-s}
	\frac{\norm{\vu{T}}\ind{op}\norm{v}_\infty^{n}}{1-\norm{\vu{T}n}\ind{op}} \geq \abs{x-y} R(\norm{\vu Tn}\ind{op}).
\end{equation}
Hence taking $t=s$, we find that $\mathbb R \ni x\mapsto \hat X(t,x)\in\banachspace$ is bi-Lipschitz for every $t$ (uniformly). Thus $x\mapsto  \hat X(t,x,p)$ is strictly increasing and a homeomorphism of $\mathbb R$. Define $\hat x\mapsto X(t,\hat x,p)$ as its inverse, for every $t,\hat x,p$. Then $\mathbb R\ni \hat x\mapsto X(t, \hat x)\in\banachspace$ is bi-Lipschtiz uniformly in $t$. Also, using $\hat X(t,X(t,\hat x,p),p) = \hat x$ and \eqref{equ:ineqKproof} for $x = X(t,\hat x,p),\,y = X(s,\hat x,p)$, we have
\begin{equation}
	\abs{X(t,\hat x,p) - X(s,\hat x,p)}\,R(\norm{\vu Tn}\ind{op})\leq \abs{t-s}
	\frac{\norm{\vu{T}}\ind{op}\norm{v}_\infty^{n}}{1-\norm{\vu{T}n}\ind{op}}
\end{equation}
and as $R(\norm{\vu Tn}\ind{op})>0$ we find that $\mathbb R\ni t\mapsto X(t, \hat x)\in\banachspace$ is Lipschitz uniformly in $\hat x$.

{\em The function $\tilde N_0$.} Using the chain rule (for Lipschitz functions) to evaluate the left-hand side of $\frac{\check\partial{ \hat X(t,X(t,\hat x+v(p)t,p))}}{\check\partial{t}} = v(p)$ with \eqref{equ:Kdifferentialequations}, we obtain
\begin{equation}
	v\upd{dr}(t,x,p) = \frac{\check\partial{X(t,\hat x+v(p)t,p)}}{\check\partial{t}}
	1\upd{dr}(t,x,p)
\end{equation}
where $x = X(t,\hat x+v(p)t,p)$. Let us define $\tilde N(t,\hat x,p) = N(t,X(t,\hat x+v(p)t,p),p)$. By the chain rule again along with \eqref{equ:Ndifferentialequations},
\begin{align}
	\frac{\check\partial{\tilde N(t,\hat x,p)}}{\check\partial{t}}
	&=
	-n(t,x,p)v\upd{dr}(t,x,p)
	+
	\frac{\check\partial{X(t,\hat x+v(p)t,p)}}{\check\partial{t}}
	n(t,x,p)1\upd{dr}(t,x,p)
	\nonumber\\
	&=0.
\end{align}
Hence $\tilde N(t,\hat x,p) = \tilde N(0,\hat x,p)$ for every $t,\hat x,p$, and we define $\tilde N_0(\hat x,p) = \tilde N(0,\hat x,p)$.

{\em The fixed-point problem and uniqueness.} We note that, using the definition of $\tilde N$ as well as $n(0,x,p) = n_0(x,p)$ (assumption of the theorem),
\begin{equation}
    \tilde N_0(\hat x,p) = N(0,X(0,\hat x,p),p)
    = \int_0^{X(0,\hat x,p)} \dd{y} n_0(y,p)1\upd{dr}_0(y,p)
\end{equation}
and changing variable (by the homeomorphism established above) to $y = X(0,\hat y,p) = X_0(\hat y,p)$  (thanks to the equality $\hat X(0,x,p) = \hat X_0(x,p)$ established above), we obtain, from the change of measure \eqref{equ:measureXhat0},
\begin{equation}
    \tilde N_0(\hat x,p) 
    = \int_0^{\hat x} \dd{\hat y} n_0(X_0(\hat y,p),p)
    = \hat N_0(\hat x,p)
\end{equation}
where we used the definition \eqref{equ:N0_def}. Hence $N(t,x,p) = \hat N_0(\hat X(t,x,p)-v(p)t,p)$, and by the definition \eqref{XhatN} we obtain the fixed-point equation
\begin{equation}
    \hat X(t,x,p) = x + \vu{T}\hat N_0(\hat X(t,x,p)-v(p)t,p).
\end{equation}
By Theorem \ref{theo:fixedpoint}, $\hat X$ is unique. Further, by Theorem \ref{theo:XhatNbasic}
\begin{equation}
	\frac{\check\partial N(t,x,p)}{\check\partial x} \Big/
	\frac{\check\partial \hat X(t,x,p)}{\check\partial x}
	= n_0(X_0(\hat X(t,x,p)-v(p)t,p),p)
\end{equation}
hence from the first equations in \eqref{equ:Ndifferentialequations} and \eqref{equ:Kdifferentialequations}, we have $n(t,x,p) = n_0(X_0(\hat X(t,x,p)-v(p)t,p),p)$ for a.e.~$t\in\mathbb R$ and all $x,p\in\mathbb R$. This shows essential uniqueness of $n$, and the last sentence of the theorem.
\end{proof}

\begin{theorem}[Existence and uniqueness, differentiable form] \label{theo:differentiableform}
Assume \eqref{assumptionA}, \eqref{assumptionV}. Let $r\in\mathbb N=\{0,1,2,\ldots\}$ and define $\mathbb R^2\ni (x,p)\mapsto \nu(x,p) = \abs{v(p)}+1$ (which is thus independent of $x$). If $n_0,v n_0\in C^r\ind{p}(\mathbb R,\nu)$ (see Eq.~\eqref{equ:defCp}) then the unique $n\in B\ind{S}(\mathbb R^3)\cap B(\mathbb R^3,\abs{v})$ that solves the GHD equation in potential form satisfies $n,vn\in C\ind{p}^r(\mathbb R^2)$. In particular, if $r\geq 1$, there exists a unique $n\in B\ind{S}(\mathbb R^3)\cap B(\mathbb R^3,\abs{v})\cap C\ind{p}^1(\mathbb R^2)$ such that $n(0,x,p) = n_0(x,p)$, which solves the differential equation
\begin{equation}\label{equ:ghdproof}
        \frac{\partial}{\partial t}\big( n(t,x,p)1\upd{dr}(t,x,p)\big)
        +
        \frac{\partial}{\partial x} \big(n(t,x,p)v\upd{dr}(t,x,p)\big)
        =0,
\end{equation}
and the solution is $p$-uniformly continuously $r-1$ times differentiable, $n \in C^{r-1}(\mathbb R\to \banachspace)$.
\end{theorem}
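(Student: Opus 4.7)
The plan is to combine Theorem \ref{theo:potentialform} (existence and uniqueness in potential form) with Theorem \ref{theo:differentiability} (propagation of differentiability from the seed), and then promote the potential-form relations \eqref{equ:Ndifferentialequations} to the classical conservation law \eqref{equ:ghdproof} under the stronger hypothesis $r\geq 1$. First I would note that the hypothesis $n_0,\,vn_0\in C^r\ind{p}(\mathbb R,\nu)$ is exactly the condition $\nu n_0\in C^r\ind{p}(\mathbb R,\nu)$, so Theorem \ref{theo:potentialform} supplies a unique $n\in B\ind{S}(\mathbb R^3)\cap B(\mathbb R^3,\abs{v})$ with $n(0,x,p)=n_0(x,p)$ satisfying the complete potential form, and Theorem \ref{theo:differentiability} then immediately gives $\nu n\in C^r\ind{p}(\mathbb R^2)$, that is $n,vn\in C^r\ind{p}(\mathbb R^2)$. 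The $p$-uniform statement $n\in C^{r-1}(\mathbb R\to\banachspace)$ for $r\geq 1$ is the second conclusion of Theorem \ref{theo:differentiability}.

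Given $r\geq 1$, I would then upgrade \eqref{equ:Ndifferentialequations} to \eqref{equ:ghdproof} by applying Lemma \ref{lem:dressing_differentiability} (case 1) twice: once with the constant function $f=1$ (so $nf=n\in C^r\ind{p}(\mathbb R^2)$) to get $1\upd{dr}-1\in C^r\ind{p}(\mathbb R^2)$, and once with $f=v$ (so $nf=vn\in C^r\ind{p}(\mathbb R^2)$) to get $v\upd{dr}-v\in C^r\ind{p}(\mathbb R^2)$. Combining with Lemma \ref{lem:dressing_boundedness_of_1dr}, the bound \eqref{equ:bounddressingnff}, and the splitting $nv\upd{dr}=nv+n(v\upd{dr}-v)$, both $n\,1\upd{dr}$ and $n\,v\upd{dr}$ lie in $C^r\ind{p}(\mathbb R^2)$ and in particular are in $C^1\ind{p}(\mathbb R^2)$. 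Hence the extended derivatives in \eqref{equ:Ndifferentialequations} coincide with classical partial derivatives for every $t,x,p$, and equating mixed partials $\partial_t\partial_x N=\partial_x\partial_t N$ (Clairaut for $r\geq 2$; for $r=1$ directly from continuity and the explicit form of both sides) yields \eqref{equ:ghdproof} pointwise.

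For uniqueness among classical solutions, suppose $n'\in B\ind{S}(\mathbb R^3)\cap B(\mathbb R^3,\abs{v})\cap C^1\ind{p}(\mathbb R^2)$ also satisfies \eqref{equ:ghdproof} with $n'(0,x,p)=n_0(x,p)$. I would build a potential $N'$ for $n'$ explicitly. Letting $\tilde 1,\tilde v$ denote the dressings of $1$ and $v$ with respect to $n'$, set
\begin{equation}
    N'(t,x,p)=\int_0^x n'(t,y,p)\tilde 1(t,y,p)\dd{y}-\int_0^t n'(s,0,p)\tilde v(s,0,p)\dd{s}.
\end{equation}
The $x$-derivative reproduces the first line of \eqref{equ:Ndifferentialequations}; for the $t$-derivative one brings $\partial_t$ inside the $x$-integral and uses \eqref{equ:ghdproof} to evaluate it, leaving the second line. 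Uniform boundedness of both integrands (via Lemma \ref{lem:dressing_boundedness_of_1dr}, the bound \eqref{equ:bounddressingnff}, and $n'\in B(\mathbb R^3,\abs{v})$) gives the Lipschitz/absolute-continuity conditions of Definition \ref{def:potential} with $N'(0,0,p)=0$. Thus $n'$ satisfies the GHD equation in complete potential form with the same seed $n_0$, and the uniqueness clause of Theorem \ref{theo:potentialform} forces $n'=n$. The main obstacle I expect is the bookkeeping around the velocity: since $v$ is only bounded against $n$ (that is, $n\in B(\mathbb R^3,\abs{v})$, not $v\in\banachspace$), the splitting $v\upd{dr}=v+(v\upd{dr}-v)$ and the interplay of the $B(\cdot,\nu)$ seminorms with the $C^r\ind{p}$ spaces must be tracked carefully every time Lemma \ref{lem:dressing_differentiability} is invoked with $f=v$.
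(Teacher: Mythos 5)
Your proposal is correct and follows essentially the same route as the paper: the first statement is obtained from Theorems \ref{theo:potentialform} and \ref{theo:differentiability}, and the conservation law \eqref{equ:ghdproof} is shown equivalent to the complete potential form once Lemma \ref{lem:dressing_differentiability} guarantees that $n1\upd{dr}$ and $nv\upd{dr}$ are continuously differentiable. The only difference is that the paper invokes the Poincar\'e lemma for this equivalence as a black box, whereas you unpack it by hand (equality of mixed partials in one direction, explicit construction of the potential $N'$ in the other), which is a legitimate and if anything more careful rendering of the same argument.
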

\begin{proof} The first statement follows from Theorems \ref{theo:potentialform} and \ref{theo:differentiability}. For the second statement, by Theorem \ref{theo:differentiability} and Lemma \ref{lem:dressing_differentiability}, the quantities differentiated in \eqref{equ:ghdproof} are indeed continuously differentiable, and by the Poincar\'e lemma, \eqref{equ:ghdproof} is equivalent to the potential form. Thus existence and uniqueness follow from the first statement, and uniform continuous $(r-1)$-differentiability follows from Theorem \ref{theo:differentiability}.
\end{proof}

Recall how requiring $r$ times pointwise differentiability with uniformly bounded derivatives lead to $r-1$ times uniform differentiability. In particular, the case $r=\infty$ gives {\em smoothness of the result at all times}, from smoothness of the initial condition (along with uniformly bounded derivatives and uniform bounds involving the velocity function, as stated in the theorem). 

\begin{remark}
    Attempting to extend the proof of uniqueness to the partial potential form, one encounters two problems. The first is that the initial condition at $t=0$ might not be in the set where the first equation of \eqref{equ:Ndifferentialequations} holds. Naturally, we must impose that $t=0$ is in that set, otherwise the concept of initial condition is not meaningful. The second problem is deeper. As the dense set of $(x,t)$ where \eqref{equ:Ndifferentialequations} hold may depend on $p$, problems arise when applying the operator $\vu{T}$. The fact that this set may depend on $p$ can be traced back to the passage from the weak form to the partial integral form, where we do not necessarily have almost everywhere differentiability for a Lipschitz function taking values in $\banachspace$, because the latter space is not reflexive.
\end{remark}

\section{Consequences of the construction}\label{sec:consequences}

\subsection{Measure change and the characteristic function}

We now have a clear measure interpretation of the change of coordinate from $x$ to $\hat x = \hat X(t,x,p)$, thanks to absolute continuity and the specific results of Theorem \ref{theo:XhatNbasic},
\begin{equation}
	\dd \hat X(t,x,p) = 1\upd{dr}(t,x,p) \dd x\qquad \mbox{($t,p$ fixed)};
\end{equation}
and at any point $(t,x)$ where $n(t,x,p)=0$, $\hat X(t,x,p)$ is independent of $t$ (has zero extended $t$-derivative).

We denote the inverse of $\hat X(t,\cdot,p)$, which exists and is unique by Theorem \ref{theo:XhatNbasic}, as $X(t,\cdot,p)$:
\begin{equation}
	X(t,\hat x,p)\quad : \quad
	\hat X(t,X(t,\hat x,p),p) = \hat x.
\end{equation}
As $\hat X(t,\cdot,p)$ is $t,p$-uniformly bi-Lipschitz, so is its inverse, wherefore $X(t,\cdot,p)$ is absolutely continuous. Clearly, then, by the differentiation of the composition of Lipschitz functions, we have
\begin{equation}
	\frac{\check\partial{X(t,\hat x,p)}}{\check\partial{\hat x}} = \frac1{1\upd{dr}(t,X(t,\hat x,p),p)}
\end{equation}
which is finite and strictly positive (by Lemma \ref{lem:dressing_boundedness_of_1dr}). We may also define the characteristic function~\cite{BenGHD}
\begin{equation}\label{equ:u}
    u(t,x,p) = X(0,\hat X(t,x,p)-v(p)t,p),
\end{equation}
which is interpreted as the initial point of the trajectory of a quasi-particle with momentum $p$ ending at $x$ at time $t$, i.e.\ it satisfies \eqref{equ:u_eq}.
\begin{corol}[of Theorem \ref{theo:XhatNbasic}] The function $u(t,x,p)$ defined in \eqref{equ:u} is absolutely continuous in $x$ and $t$, hence a.e.~differentiable, and is a characteristic function for the GHD equation:
\begin{equation}\label{equ:u_eq}
	\frac{\check\partial{u(t,x,p)}}{\check\partial{t}}
	+ v\upd{eff}(t,x,p) \frac{\check\partial{u(t,x,p)}}{\check\partial{x}}
	= 0,\qquad u(0,x,p) = x
\end{equation}
where
\begin{equation}
	v\upd{eff}(t,x,p) = \frac{v\upd{dr}(t,x,p)}{1\upd{dr}(t,x,p)}.
\end{equation}
\end{corol}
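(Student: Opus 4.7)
The plan is to realize $u$ as a composition of Lipschitz functions already constructed in Theorem \ref{theo:XhatNbasic} and the discussion following it, then apply the chain rule within the extended-derivative calculus of Section \ref{ssectcontdiff}, and finally observe algebraic cancellation.

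First I would verify the initial condition. By Theorem \ref{theo:initconditionK0}, $\hat X(0,\cdot,p)=\hat X_0(\cdot,p)$, hence $X(0,\cdot,p)=X_0(\cdot,p)$ by uniqueness of the inverse. Plugging $t=0$ into \eqref{equ:u} gives $u(0,x,p)=X_0(\hat X_0(x,p),p)=x$. For absolute continuity, set $g(t,x,p):=\hat X(t,x,p)-v(p)t$. By Theorem \ref{theo:XhatNbasic}, $\hat X$ is Lipschitz in $t$ (uniformly in $x$) and in $x$ (uniformly in $t$) for each fixed $p$, so the same holds for $g$. By Lemma \ref{lem:dressing_boundedness_of_1dr}, $1\upd{dr}_0(\cdot,p)$ is uniformly bounded below by $R(\norm{\vu T n_0}\ind{op})>0$, so $X(0,\cdot,p)=X_0(\cdot,p)$ is Lipschitz with bounded extended derivative $1/1\upd{dr}_0(X_0(\cdot,p),p)$ (cf.~\eqref{equ:measureXhat0}). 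Composing two Lipschitz maps yields a function Lipschitz in each of $t$ and $x$, hence $u(t,\cdot,p)$ and $u(\cdot,x,p)$ are absolutely continuous and a.e.\ differentiable.

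Next I would apply the chain rule for Lipschitz compositions to compute the extended derivatives. Using the formulas from Theorem \ref{theo:XhatNbasic} for the extended derivatives of $\hat X$, evaluating the extended derivative of $X_0(\cdot,p)$ at the argument $g(t,x,p)$ (so that $X_0(g(t,x,p),p)=u(t,x,p)$), one obtains
\begin{align*}
\frac{\check\partial u(t,x,p)}{\check\partial t} &= \frac{1}{1\upd{dr}_0(u(t,x,p),p)}\left(\frac{\check\partial \hat X(t,x,p)}{\check\partial t}-v(p)\right) = -\frac{v\upd{dr}(t,x,p)}{1\upd{dr}_0(u(t,x,p),p)},\\
\frac{\check\partial u(t,x,p)}{\check\partial x} &= \frac{1}{1\upd{dr}_0(u(t,x,p),p)}\,\frac{\check\partial \hat X(t,x,p)}{\check\partial x} = \frac{1\upd{dr}(t,x,p)}{1\upd{dr}_0(u(t,x,p),p)}.
\end{align*}
Substituting into \eqref{equ:u_eq} and using $v\upd{eff}=v\upd{dr}/1\upd{dr}$, the two terms on the left-hand side read $-v\upd{dr}/1\upd{dr}_0(u,p)$ and $+v\upd{dr}/1\upd{dr}_0(u,p)$, so they cancel.

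The main obstacle is the rigorous handling of the chain rule in the extended-derivative framework: one must verify that compatible choices of extended derivatives exist so that the identity holds for \emph{every} $(t,x,p)$, not merely a.e., as is needed to place the formulas in $B(\mathbb R^3)$ per \eqref{conditionbound}. This is available here because (i) every factor in the composition is Lipschitz in the relevant variable with uniformly bounded difference quotients (so null sets lift correctly through the compositions), and (ii) by Lemma \ref{lem:dressing_boundedness_of_1dr} the denominator $1\upd{dr}_0(u(t,x,p),p)$ is uniformly bounded away from zero, so the candidate right-hand sides above are themselves uniformly bounded on $\mathbb R^3$ and integrate back to $u$ via \eqref{conditionintegral}, qualifying as the required extended partial derivatives.
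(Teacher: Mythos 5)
Your proposal is correct and follows essentially the same route the paper intends: the corollary is left without an explicit proof there, but the surrounding text (bi-Lipschitz invertibility of $\hat X(t,\cdot,p)$, the formula for $\check\partial X/\check\partial\hat x$, and the extended derivatives of $\hat X$ from Theorem \ref{theo:XhatNbasic}) supplies exactly the composition-of-Lipschitz-functions and chain-rule argument you spell out, with the same cancellation $-v\upd{dr}/1\upd{dr}_0(u,p)+v\upd{eff}\cdot 1\upd{dr}/1\upd{dr}_0(u,p)=0$. Your handling of the extended-derivative bookkeeping (uniform boundedness plus integration back to $u$) is consistent with how the paper uses this device elsewhere.
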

In particular, this means that the characteristic curves for any given mode $p$ are non-crossing -- recall that, as per the differential equation and initial condition above, $u(t,x,p)$ represents the position at time 0 such that the characteristic curve is passes by the position $x$ at time $t$; as $u(t,x,p)$ is uniquely defined, the curves are non-crossing.

\subsection{Conserved quantities and entropy}
The GHD equation implies a huge number of conserved quantities. We will now show that for suitable functions $f$ or $g$ the following quantities are conserved during time:
\begin{align}
    \vu{S}_p[f](t) &= \tfrac{1}{2\pi}\int_{-\infty}^\infty\dd{x}f(n(t,x,p))1\upd{dr}(t,x,p)\\
    \vu{S}[g](t) &= \tfrac{1}{2\pi}\int_{-\infty}^\infty\dd{x}\dd{p}g(n(t,x,p),p)1\upd{dr}(t,x,p).
\end{align}
\begin{theorem}[Conserved entropies]\label{theo:conserved_entropies}
Assume the assumptions of Theorem \ref{theo:potentialform} and consider either $\vu{S}_p[f](0)$ exists for a function $f: \mathbb{R} \to \mathbb{R}$ and fixed $p$ or $\vu{S}[g](0)$ exists for a function $g: \mathbb{R}^2 \to \mathbb{R}$. Then $\vu{S}_p[f](t) = \vu{S}_p[f](0)$ or $\vu{S}[g](t) = \vu{S}[g](0)$.
\end{theorem}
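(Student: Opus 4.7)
The plan is to change variables from the real coordinate $x$ to the free coordinate $\hat x = \hat X(t,x,p)$ at fixed $t,p$. Under this map the measure $1\upd{dr}(t,x,p)\,\dd{x}$ becomes the flat Lebesgue measure $\dd{\hat x}$, and by \eqref{equ:defnt} the composition $n(t,X(t,\hat x,p),p)$ reduces to $\hat n_0(\hat x - v(p)t, p)$, which depends on $t$ only through a simple translation. Conservation will then follow from translation invariance of the Lebesgue measure in $\hat x$.

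I would carry this out as follows. Fix $p \in \mathbb R$ and consider $\vu{S}_p[f](t)$. By Theorem \ref{theo:XhatNbasic}, $\hat X(t,\cdot,p)$ is a (uniformly in $t,p$) bi-Lipschitz homeomorphism of $\mathbb R$ with extended $x$-derivative $1\upd{dr}(t,x,p)$ strictly positive. Hence the standard change-of-variable formula for absolutely continuous strictly monotone bijections gives
\begin{equation*}
    \int_{\mathbb R} \dd{x}\, h(x)\, 1\upd{dr}(t,x,p) = \int_{\mathbb R} \dd{\hat x}\, h(X(t,\hat x,p))
\end{equation*}
for any Borel $h$ (whenever either side is defined). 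Taking $h(x) = f(n(t,x,p))$ and using $\hat X(t,X(t,\hat x,p),p)=\hat x$ together with \eqref{equ:defnt}, the integrand on the right becomes $f(\hat n_0(\hat x - v(p)t, p))$. Translation invariance in $\hat x$ then yields $2\pi\vu{S}_p[f](t) = \int \dd{\hat x}\, f(\hat n_0(\hat x,p))$, independent of $t$. Applying the same change of variable at $t=0$, where $\hat X(0,\cdot,p)=\hat X_0(\cdot,p)$ by Theorem \ref{theo:initconditionK0}, identifies this value with $2\pi\vu{S}_p[f](0)$, establishing conservation.

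For the double integral $\vu{S}[g]$ I would run the same argument pointwise in $p$ and then apply Fubini--Tonelli to integrate over $p$. Finiteness of $\vu{S}[g](0)$ (understood as an absolutely convergent integral, or with $g\geq 0$) transports under the bi-Lipschitz change of variable to integrability of $|g(\hat n_0(\cdot, p),p)|$ on $\mathbb R^2$; translation in $\hat x$ at each fixed $p$ then shows $|g(\hat n_0(\cdot - v(p)t, p),p)|$ is integrable for every $t$, legitimising the exchange of integrals and giving $\vu{S}[g](t)=\vu{S}[g](0)$.

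The only non-routine ingredient is the change-of-variable formula for the merely absolutely continuous (not $C^1$) map $\hat X(t,\cdot,p)$; this is however covered by the classical theorem for absolutely continuous strictly monotone bijections of $\mathbb R$, and applies directly thanks to the explicit extended-derivative formula $\check\partial_x \hat X(t,x,p) = 1\upd{dr}(t,x,p)$ and the uniform bi-Lipschitz bounds supplied by Theorem \ref{theo:XhatNbasic}. I therefore expect no deeper obstacle.
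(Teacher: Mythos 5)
Your proposal is correct and follows essentially the same route as the paper: substitute $n(t,x,p)=\hat n_0(\hat X(t,x,p)-v(p)t,p)$ and $1\upd{dr}(t,x,p)=\check\partial_x\hat X(t,x,p)$, change variables via the bi-Lipschitz homeomorphism $\hat X(t,\cdot,p)$, and invoke translation invariance of the Lebesgue measure in $\hat x$ to eliminate the $t$-dependence. Your treatment is somewhat more careful than the paper's (which states the computation in two lines and says the $\vu{S}[g]$ case is ``similar''), in that you explicitly justify the change-of-variable formula for the absolutely continuous monotone bijection and the Fubini step for the double integral, but the underlying argument is identical.
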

\begin{proof}
This follows directly from 
\begin{align}
    \vu{S}_p[f](t) &= \tfrac{1}{2\pi}\int_{-\infty}^\infty\dd{x}f(n_0(X_0(\hat{X}(t,x,p)-v(p)t,p),p)\frac{\check\partial{\hat X(t,x,p)}}{\check\partial{x}}\\
    &= \tfrac{1}{2\pi}\int_{-\infty}^\infty\dd{\hat{x}}f(n_0(X_0(\hat{x},p),p),
\end{align}
which is independent of time and thus $\vu{S}_p[f](t) = \vu{S}_p[f](0)$. The proof for $\vu{S}[g](t)$ is similar.
\end{proof}
These quantities can be viewed as generalized entropies. Indeed the cases $g(n,p) = -n\log n$, $g(n,p) = \log n$, $g(n,p) = -n\log n - (1-n)\log(1-n)$ and $g(n,p) = -n\log n + (1+n)\log(1+n)$ correspond to the entropy of a system of classical particles, radiation, quantum fermions and quantum bosons respectively. In fact, hydrodynamic equations like the GHD equation are known to preserve entropy, however as GHD is agnostic to the particle statistic it has to conserve all of them. For instance it is possible to construct a classical integrable model with the same scattering shift as, for instance, the quantum Lieb-Liniger model~\cite{PhysRevLett.132.251602}.  

Entropies in general are only emergent conserved quantities, i.e. they are not conserved anymore beyond the hydrodynamic approximation. However, in the special case $f(n) = n$ they correspond to the actual microscopic conserved charges of the integrable model, i.e. for:
\begin{align}
    \vu{Q}_p(t) &= \int_{-\infty}^\infty\dd{x}\rho\ind{p}(t,x,p)\\
    \vu{Q}[h](t) &= \int_{-\infty}^\infty\dd{x}\dd{p}\rho\ind{p}(t,x,p)h(p)
\end{align}
we have:
\begin{theorem}[Conserved charges]\label{theo:conserved_charges}
Assume the assumptions of Theorem \ref{theo:potentialform} we have $\vu{Q}_p(t) = \vu{Q}_p(0)$. Alternatively, for any function $h:\mathbb{R} \to \mathbb{R}$ such that $\vu{Q}[h](0)$ exists we have $\vu{Q}[h](t) = \vu{Q}[h](0)$
\end{theorem}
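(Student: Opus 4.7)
The key observation is that $\vu{Q}_p(t)$ coincides with the entropy functional $\vu{S}_p[f](t)$ for the identity $f(n)=n$: indeed, by the dressing identity $2\pi\rho\ind{p}(t,x,p)=n(t,x,p)\,1\upd{dr}(t,x,p)$ from \eqref{rhos1dr} and Theorem \ref{theo:potentialform}, we have
\[
\vu{Q}_p(t) \;=\; \tfrac{1}{2\pi}\int_{-\infty}^\infty \dd{x}\, n(t,x,p)\,1\upd{dr}(t,x,p) \;=\; \vu{S}_p[\mathrm{id}](t).
\]
Since the integrand is non-negative (by Lemma \ref{lem:dressing_boundedness_of_1dr} and $n\ge 0$), existence is automatic in $[0,+\infty]$, and Theorem \ref{theo:conserved_entropies} immediately gives $\vu{Q}_p(t)=\vu{Q}_p(0)$ for every $p\in\mathbb{R}$. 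This settles the first claim without further work.

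For $\vu{Q}[h](t)$, I would recycle the change-of-variables trick used in the proof of Theorem \ref{theo:conserved_entropies}. By Theorem \ref{theo:XhatNbasic}, the map $\hat X(t,\cdot,p)$ is a bi-Lipschitz homeomorphism of $\mathbb{R}$ with extended derivative $\check\partial_x \hat X(t,x,p)=1\upd{dr}(t,x,p)$; and by Theorem \ref{theo:potentialform}, $n(t,x,p)=n_0(X_0(\hat X(t,x,p)-v(p)t,p),p)$. Substituting $\hat x=\hat X(t,x,p)-v(p)t$ at fixed $(t,p)$ in the integral for $\vu{Q}_p(t)$ yields the manifestly $t$-independent expression
\[
\vu{Q}_p(t) \;=\; \tfrac{1}{2\pi}\int_{-\infty}^\infty \dd{\hat x}\, n_0(X_0(\hat x,p),p).
\]
Then integrating this against $h(p)$ in $p$ gives the $t$-independent formula for $\vu{Q}[h](t)$, whence $\vu{Q}[h](t)=\vu{Q}[h](0)$ as desired.

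The only delicate point, and what I anticipate as the main obstacle, is justifying the interchange of the $p$- and $x$-integrals in $\vu{Q}[h](t)=\int \dd{p}\, h(p)\int \dd{x}\,\rho\ind{p}(t,x,p)$, since $h$ is allowed to change sign. When $h\ge 0$ the interchange is immediate by Tonelli. In general, interpreting existence of $\vu{Q}[h](0)$ as absolute integrability of $(x,p)\mapsto h(p)\rho\ind{p}(0,x,p)$ on $\mathbb{R}^2$, Tonelli yields that $p\mapsto |h(p)|\vu{Q}_p(0)$ is integrable; combined with the pointwise identity $\vu{Q}_p(t)=\vu{Q}_p(0)$ already established, this gives integrability of $(x,p)\mapsto |h(p)|\rho\ind{p}(t,x,p)$ on $\mathbb{R}^2$ for every $t$, legitimising Fubini in both directions and closing the argument. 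The rest of the proof is a direct corollary of the construction in Section \ref{sec:fixedpoint}.
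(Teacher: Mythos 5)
Your proposal is correct and follows essentially the same route as the paper, which simply invokes Theorem \ref{theo:conserved_entropies} with $f(n)=n$ for the first claim and $g(n,p)=nh(p)$ for the second. Your additional care in justifying the interchange of the $x$- and $p$-integrals via Tonelli/Fubini is a detail the paper leaves implicit, but it does not change the argument.
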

\begin{proof}
Both follow directly from Theorem \ref{theo:conserved_entropies} by setting either $f(n) = n$ or $g(n,p) = nh(p)$.
\end{proof}

\section{Conclusion}\label{sec:conclu}
In this work we establish the first general proof of existence, uniqueness and regularity of global solutions to the GHD equation, for models with integrable scattering shift. It applies to a large class of initial states whose occupation function is smaller than a certain bound. In some quantum mechanical models, including the paradigmatic and experimentally relevant Lieb-Liniger model, this bound is satisfied for any physical initial state we are aware of, including local thermal states and zero-entropy states with non-differentiable occupation functions. Contrary to generic hyperbolic conservation equations, which typically develop entropy-increasing shocks in finite time, our work establishes that no such shocks appear in the GHD equations of those models. Consequently, entropy is conserved for all times.

The proof is based on a novel technique, outlined in our companion paper~\cite{hübner2024newquadraturegeneralizedhydrodynamics}, where the GHD equation is mapped onto a functional fixed-point problem, with space and time appearing as external parameters. For the models considered in this paper, this fixed-point problem is contracting in the Banach space of bounded functions, implying existence and uniqueness of the fixed point. In other models, particularly those with non-integrable scattering shift, the fixed-point equation still holds but is not necessarily contracting in this Banach space. However, this does not imply non-uniqueness of solutions. The simplest example of this is the hard rods model, discussed at the end of Section \ref{ssectheightfielddiscussion}. Other techniques to establishing existence and uniqueness might be necessary, for instance by considering a different Banach space or by using some other variant of the fixed-point equation (see for instance the version in \cite{hübner2024generalizedhydrodynamicsintegrablequantum} Appendix A, which was used numerically to overcome the non-convergence of the fixed-point iteration). It would be interesting to investigate these questions to other integrable models, in particular soliton gas models, which have non-integrable scattering shifts~\cite{el2011kinetic}. In this context, it is known that special initial states (finite component states) give rise to finite dimensional hyperbolic conservation laws in which shock formation is absent~\cite{Pavlov2012}. Therefore, it is natural to expect the same for general initial states, however, it is still an open question.

Apart from the bound on the occupation function, the proof requires, if the bare velocity $v(p)$ grows in amplitude with the momentum, that the density decays sufficiently fast as $p\to \pm \infty$. In a nutshell, this condition is needed to ensure that the effective velocity is finite. It would be interesting to study in the case where the decay is slower, whether the effective velocity could indeed diverge and what the consequences would be.

Having established the GHD equation in a weak form may be useful for proofs of the emergence of the GHD equation in specific models. Indeed, weak forms are easier to establish; for instance, the integral form \eqref{def:integral} only requires the proof that currents tend to their averages within local maximal entropy states sufficiently uniformly in space and time, and from Lemma \ref{lem:equivalenceweak} this implies the potential form for which we have shown existence and uniqueness.

Furthermore, our proof only regards the central equation of the GHD framework, the Euler equation. We believe that is should be possible to apply similar ideas to other aspects, for instance the evolution of correlation functions and the higher-order corrections to GHD. Particularly interesting and experimentally relevant is the GHD equation in the presence of an external trapping potential~\cite{10.21468/SciPostPhys.2.2.014}. In this case turbulent-like behavior has been observed numerically~\cite{PhysRevResearch.6.023083}, which suggest the formation of higher order discontinuities. However, our fixed-point problem approach for now cannot be applied to this case.

\backmatter
\bmhead{Acknowledgments} The authors thank T. Bonnemain and E.V. Ferapontov for discussions. BD was supported by the Engineering and Physical Sciences Research Council (EPSRC) under grant EP/W010194/1. FH acknowledges funding from the faculty of Natural, Mathematical \& Engineering
Sciences at King’s College London. 

\bmhead{Data availability} The manuscript has no associated data.

\begin{appendices}

\section{Equivalence between weak formulations}
\label{appweak}

We provide the proof of Lemma \ref{lem:equivalenceweak}. It will be convenient to introduce yet another, intermediate weak form. By integrating the potential form, it is relatively simple to re-state the potential form as a global conservation form.
\begin{definition}[Global conservation form]\label{def:global} We say that a function $n\in B\ind{T}(\mathbb R^3)\cap B(\mathbb R^3,\abs{v})$ satisfies the GHD equation in partial global conservation form, if for every $x_1,x_2,p\in\mathbb R$ the function $t\mapsto Q(t,x_1,x_2,p):= \int_{x_1}^{x_2} \dd{x} n(t,x,p)1\upd{dr}(t,x,p)$ is absolutely continuous on a dense subset $\tau(x_1,x_2,p)\subset\mathbb R$ uniformly in $x_1,x_2$, and for every $t_1,t_2,p\in\mathbb R$, the function $x\mapsto J(t_1,t_2,x,p)= \int_{t_1}^{t_2} \dd{t} n(t,x,p)v\upd{dr}(t,x,p)$ is absolutely continuous on a dense subset $\chi(t_1,t_2,p)\subset\mathbb R$ uniformly in $t_1,t_2$, and, denoting $\check Q(t,x_1,x_2,p)$ and $\check J(t_1,t_2,x,p)$ their unique continuous extensions,
\begin{equation}\label{equ:integralformQJ}
    \check Q(t_2,x_1,x_2,p)-\check Q(t_1,x_1,x_2,p)
    +\check J(t_1,t_2,x_2,p) - \check J(t_1,t_2,x_1,p)=0
\end{equation}
for every $p\in\mathbb R$ and every $(x_1,x_2,t_1,t_2)\in\mathbb R^4$. It is in complete conservation form if $\tau(x_1,x_2,p)= \chi(t_1,t_2,p)=\mathbb R$.
\end{definition}
Note the GHD equation in integral form manifestly follows from the global conservation form in their complete variants. The former also follows from the latter in their partial variants because the set $\{(x_1,x_2,t_1,t_2):x_1,x_2\in\mathbb R,\,t_1,t_2\in\tau(x_1,x_2,p)\}$ is dense in $\mathbb R^4$, and likewise for the set $\{(x_1,x_2,t_1,t_2):t_1,t_2\in\mathbb R,\,x_1,x_2\in\chi(x_1,x_2,p)\}$, thus so is their intersection; and on this intersection \eqref{equ:integralformQJ} gives \eqref{equ:integralform}. Note also that \eqref{equ:integralformQJ} in fact follows from the statement \eqref{equ:integralform} along with the absolutely continuity statements of the global conservation form. We will show
\begin{lemma}[Equivalence of weak statements]\label{lem:equivalenceweakapp}
Let $n\in B\ind{T}(\mathbb R^3)\cap B(\mathbb R^3,\abs{v})$. Then concerning the GHD equation for $n$: complete potential form $\Leftrightarrow$ complete integral form $\Leftrightarrow$ complete global conservation form; and partial potential form $\Leftrightarrow$ partial integral form $\Leftrightarrow$ partial global conservation form $\Leftrightarrow$ weak form.
\end{lemma}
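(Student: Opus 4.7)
The plan is to fix $p\in\mathbb R$ throughout and to recognise the lemma as a collection of equivalent weak formulations of the scalar conservation law $\partial_t q + \partial_x j = 0$ on $\mathbb R^2$, where $q(t,x) := n(t,x,p)1\upd{dr}(t,x,p)$ and $j(t,x) := n(t,x,p)v\upd{dr}(t,x,p)$; by $n\in B\ind{T}(\mathbb R^3)\cap B(\mathbb R^3,\abs{v})$ and Definition \ref{def:dressing_general} these satisfy $q,j\in L^\infty(\mathbb R^2)$, which is the only property of $q,j$ that will be used. The proof splits into three blocks: (i) a triangle of equivalences among the three complete variants; (ii) the analogous triangle among the three partial variants; (iii) equivalence between the partial integral form and the weak form.

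For block (i), complete potential $\Rightarrow$ complete integral follows by integrating \eqref{equ:Ndifferentialequations} using absolute continuity of $N$ in each variable: one obtains $N(t,x_2,p)-N(t,x_1,p)=\int_{x_1}^{x_2} q(t,y)\,\dd{y}$ for every $(t,x_1,x_2)$ and $N(t_2,x,p)-N(t_1,x,p)=-\int_{t_1}^{t_2} j(s,x)\,\dd{s}$ for every $(x,t_1,t_2)$, which telescopes around the boundary of any rectangle to yield \eqref{equ:integralform}. Conversely, for complete integral $\Rightarrow$ complete potential, one introduces the path-integrated potential
\begin{equation*}
    N(t,x,p):= \int_0^x q(0,y)\,\dd{y} - \int_0^t j(s,x)\,\dd{s};
\end{equation*}
applying \eqref{equ:integralform} with $(t_1,t_2,x_1,x_2)=(0,t,0,x)$ rewrites the same $N$ equivalently as $\int_0^x q(t,y)\,\dd{y} - \int_0^t j(s,0)\,\dd{s}$ for every $(t,x)$, and the two expressions give Lipschitz continuity of $N$ in each variable (with constants $\norm{q}_\infty$ and $\norm{j}_\infty$ uniformly in the other variable), differentiability a.e.\ with the correct densities, and $N(0,0,p)=0$. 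Finally, complete integral $\Leftrightarrow$ complete global conservation: in the forward direction, the integral form gives $Q(t_2,x_1,x_2,p)-Q(t_1,x_1,x_2,p)=-\int_{t_1}^{t_2}[j(s,x_2,p)-j(s,x_1,p)]\,\dd{s}$, which is Lipschitz in $t_1,t_2$ uniformly in $x_1,x_2$ by $j\in L^\infty$, so $Q$ is AC on all of $\mathbb R$ (hence $\tau(x_1,x_2,p)=\mathbb R$ and $\check Q=Q$), and symmetrically for $J$; the backward direction is tautological since $\check Q=Q$, $\check J=J$ reduces \eqref{equ:integralformQJ} to \eqref{equ:integralform}.

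For block (ii) the same three arguments apply verbatim, with the restriction that the integral form and the conservation identity hold only for almost every choice of endpoints, so that the two expressions for $N$ coincide on a full-measure subset only and the derivative identities hold a.e.\ in the remaining variable, matching the partial variants. For block (iii), partial integral $\Rightarrow$ weak uses the potential furnished by (ii): for any Schwartz $f$, Fubini and one-dimensional integration by parts (justified because $N$ is Lipschitz in the integrated variable with at most linear growth and $f$ is of rapid decay, so boundary terms vanish) yield
\begin{equation*}
    \iint \partial_t f\cdot q\,\dd{t}\,\dd{x} = \iint \partial_t f\cdot\partial_x N\,\dd{t}\,\dd{x} = -\iint \partial_x\partial_t f\cdot N\,\dd{t}\,\dd{x},
\end{equation*}
and symmetrically $\iint \partial_x f\cdot j\,\dd{t}\,\dd{x} = \iint \partial_t\partial_x f\cdot N\,\dd{t}\,\dd{x}$; the sum vanishes by equality of mixed partials of $f$. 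Conversely, weak $\Rightarrow$ partial integral is obtained by applying \eqref{equ:weakform} to a product of mollified indicators $f_\epsilon(t,x):=\phi_\epsilon(t)\psi_\epsilon(x)$ with $\phi_\epsilon:=\mathbf{1}_{[t_1,t_2]}\ast\eta_\epsilon$, $\psi_\epsilon:=\mathbf{1}_{[x_1,x_2]}\ast\eta_\epsilon$ and a standard mollifier $\eta_\epsilon$; Fubini together with dominated convergence (using $q,j\in L^\infty$) passes to the limit $\epsilon\to 0$ and recovers \eqref{equ:integralform} at common Lebesgue points of the one-dimensional sections of $q$ and $j$, which form a set of full measure in the quadruple $(t_1,t_2,x_1,x_2)$.

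The principal obstacle will be the careful book-keeping of the almost-everywhere qualifiers, particularly in block (iii) where the weak form only delivers the integral form at Lebesgue points: one must verify that the resulting full-measure set of quadruples is compatible with the definition of the partial integral form (it is, directly, since the partial form already only requires \eqref{equ:integralform} on a full-measure set). A secondary subtlety, handled in block (i), is that the upgrade from the complete integral form to the complete global conservation form requires using the integral form itself in combination with $j\in L^\infty$ to establish Lipschitz continuity of $Q$ in $t$ on all of $\mathbb R$, and not merely boundedness; without this step, $t\mapsto Q(t,x_1,x_2,p)$ is only measurable and bounded.
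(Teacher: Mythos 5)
Your overall architecture matches the paper's (a potential $\leftrightarrow$ integral/global-conservation triangle plus a weak-form bridge), and your complete-variant arguments in block (i) are sound; your weak $\Rightarrow$ partial integral step via mollified indicators is a legitimate alternative to the paper's single trapezoid-shaped test function. However, there is a genuine gap in block (ii), precisely where you claim the arguments transfer ``verbatim''. In the partial integral form, \eqref{equ:integralform} is only known for \emph{almost every} quadruple $(x_1,x_2,t_1,t_2)\in\mathbb R^4$. Your construction of the potential, $N(t,x,p)=\int_0^x q(0,y)\dd{y}-\int_0^t j(s,x)\dd{s}$, and the verification that it equals $\int_0^x q(t,y)\dd{y}-\int_0^t j(s,0)\dd{s}$, invoke \eqref{equ:integralform} along the slice $\{t_1=0,\,x_1=0\}$, which has measure zero in $\mathbb R^4$; the a.e.\ statement gives no information there, so the two expressions for $N$ need not agree on any set, and the whole construction of the potential (hence also the forward half of your block (iii), which relies on it) collapses. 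The same issue infects partial integral $\Rightarrow$ partial global conservation: the identity $Q(t_2,\cdot)-Q(t_1,\cdot)=-\int_{t_1}^{t_2}[j(s,x_2,p)-j(s,x_1,p)]\dd{s}$ is only available for a.e.\ $(t_1,t_2)$ and a.e.\ $(x_1,x_2)$, so $t\mapsto Q(t,x_1,x_2,p)$ is only known to coincide with a Lipschitz function on a full-measure set of times, and only for a.e.\ pair $(x_1,x_2)$.

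This is not mere book-keeping: it is the reason the paper introduces the global conservation form with its dense subsets $\tau,\chi$ and continuous extensions $\check Q,\check J$ at all. The paper's Part IV closes the gap by (a) comparing \eqref{equ:integralform} at two nearby admissible values $t_1,t_1'$ to deduce uniform Lipschitz continuity of $Q$ on a dense set $\tau_1(x_1,x_2,t_2,p)$, (b) taking the union over $t_2$ and extending from a.e.\ $(x_1,x_2)$ to all $(x_1,x_2)$ by the uniformity of the Lipschitz constant and continuity of $Q$ in $x_1,x_2$, and only then (c) building the potential from the everywhere-defined extensions $\check Q,\check J$ (as $N=\check Q(0,0,x,p)+\check J(0,t,x,p)$), for which the identity \eqref{equ:integralformQJ} holds for \emph{every} quadruple by continuity. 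You correctly identify the a.e.\ qualifiers as the principal obstacle but locate it in block (iii), where your Lebesgue-point argument in fact works; the real difficulty sits in block (ii), and your proposal as written does not resolve it.
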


In order to lighten the notation, we use $n1\upd{dr}(t,x,p) := n(t,x,p)1\upd{dr}(t,x,p)$ and $nv\upd{dr}(t,x,p) := n(t,x,p)v\upd{dr}(t,x,p)$. We recall that the results are in fact very general: we only need to use uniform boundedness of the density $n1\upd{dr}$ and current $nv\upd{dr}$ of the conservation law: $n1\upd{dr},\,nv\upd{dr}\in B(\mathbb R^3)$. Further, $p$ only arises as a parameter, so we fix it to some real number throughout; results are valid for every $p\in\mathbb R$.

\begin{proof} Clearly, all partial variants follow from the complete variants. For simplicity we consider only partial variants and omit the term ``partial"; the proofs of equivalence of complete variants is obtained from the statement just below Definition \ref{def:global}, and by straightforward adjustments of parts I and IV of this proof.

{\bf I.} Let us show potential form $\Leftrightarrow$ global conservation form. In one direction, given the potential form, we define $\check Q(t,x_1,x_2,p) = N(t,x_2,p)-N(t,x_1,p)$ for every $x_1,x_2$, and $\check J(t_1,t_2,x,p) = N(t_2,x,p)-N(t_1,x,p)$ for every $t_1,t_2$. By the conditions of the potential form, the former is absolutely continuous in $t$ uniformly in $x_1,x_2$, and the latter in $x$ uniformly in $t_1,t_2$. Further, using \eqref{equ:Ndifferentialequations}, for a.e.~$t$ we have $\check Q(t,x_1,x_2,p) = \int_{x_1}^{x_2} \dd{x} n1\upd{dr}(t,x,p)$, and for a.e.~$x$, $\check J(t_1,t_2,x,p) = \int_{t_1}^{t_2} \dd{t} nv\upd{dr}(t,x,p)$. Finally, we may write
\begin{equation}\begin{aligned}
	&\big(N(t_2,x_2,p)-N(t_2,x_1,p)\big)
	-
	\big(N(t_1,x_2,p)-N(t_1,x_1,p)\big)\\
	&\hspace{2cm}
	-
	\big(N(t_2,x_2,p)-N(t_1,x_2,p)\big)
	+
	\big(N(t_2,x_1,p)-N(t_1,x_1,p)\big)
	=0
	\end{aligned}
\end{equation}
for all $x_1,x_2,t_1,t_2$. Thus we have the global conservation form \eqref{equ:integralformQJ}. In the other direction, given the global conservation form, we define $N(t,x,p) = \check Q(0,0,x,p) + \check J(0,t,x,p) = \check J(0,t,0,p) + \check Q(t,0,x,p)$. By using either of these expressions, this satisfies either of the equations in \eqref{equ:Ndifferentialequations}, thus we get the potential form.

{\bf II.} We now show potential form $\Rightarrow$ weak form. Assume the potential form and let $f:\mathbb R^2\to\mathbb R$ be a rapidly decreasing function. Then
\begin{align}
    &\int \dd{x}\dd{t} \partial_t f(t,x)n1\upd{dr}(t,x,p)
    =\int \dd{x}\dd{t} \partial_t f(t,x)\partial_x N(t,x,p)\nonumber\\
    &=-\int \dd{x}\dd{t} \partial_x \partial_t f(t,x) N(t,x,p)
    =\int \dd{x}\dd{t} \partial_x f(t,x) \partial_t N(t,x,p)\nonumber\\
    &=-\int \dd{x}\dd{t} \partial_x f(t,x)
    nv\upd{dr}(t,x,p)
\end{align}
where in the first, third and fourth equality we used a.e.~differentiability of  $N(t,x,p)$ and the derivatives being given by \eqref{equ:Ndifferentialequations}, and in the second equality, the bound $\abs{N(t,x,p)}\leq \abs{x} \norm{n1\upd{dr}}_\infty$ coming from integrating the first equation in \eqref{equ:Ndifferentialequations}.

{\bf III.} We now show weak form $\Rightarrow$ integral form.  Assume the weak form. Let $(x_1,x_2,t_1,t_2)\in\mathbb R^4$ with $x_1<x_2$ and $t_1<t_2$. There is a sequence of rapidly decreasing functions $f_j$ such that
\begin{equation}
	f_j \to f: (t,x)\mapsto \left\{\begin{array}{ll}
	t-t_1 & \big(t\in [t_1,t_1+\epsilon],\,x\in[x_1+(t-t_1),x_2-(t-t_1)]\big)\\
	t_2-t & \big(t\in [t_2-\epsilon,t_2],\,x\in[x_1+(t_2-t),x_2-(t_2-t)]\big)\\
	x-x_1 & \big(x\in [x_1,x_1+\epsilon],\,t\in [t_1+(x-x_1),t_2-(x-x_1)]\big)\\
	x_2-x & \big(x\in[x_2-\epsilon,x_2],\,t\in [t_1+(x_2-x),t_2-(x_2-x)]\big)\\
	0 & \mbox{(otherwise)}
	\end{array}\right.
\end{equation}
uniformly. Note that $f$ is continuous, and has piecewise constant derivatives supported on a tubular neighbourhood of width $\epsilon$ lying inside, and touching the boundary of, the rectangle defined by the opposite corners $(t_1,x_1)$ and $(t_2,x_2)$. By the dominated convergence theorem, the limit on $j$ of the integral \eqref{equ:weakform} can be evaluated by taking the limit on the integrand. Thus we use this $f$ in \eqref{equ:weakform}, where the integral is therefore supported on the tubular neighbourhood. We evaluate $\epsilon^{-1}$ times the result in the limit $\epsilon\to0$. The ``corner" regions of integration $\{(t_a+s,x_b+y):(x,y)\in [-\epsilon,\epsilon]^2,a,b\in[1,2]\}$ give a contribution $O(\epsilon)$ and can be neglected. We are left with four pieces: the four edges of the rectangle. The smallest-time spatial edge gives
\begin{equation}
	\epsilon^{-1}\int_{t_1}^{t_1+\epsilon} \dd{t}
    \int_{x_1+\epsilon}^{x_2-\epsilon} \dd{x} n1\upd{dr}(t,x,p)
	=
	\epsilon^{-1}\int_{t_1}^{t_1+\epsilon} \dd{t} \Bigg(
    \int_{x_1}^{x_2} \dd{x} n1\upd{dr}(t,x,p)
    \Bigg)
	+ O(\epsilon).
\end{equation}
The result inside the $t$ integral exists for every $x_1,x_2,t$ and is bounded on $t\in\mathbb R$. Thus the result of $\lim_{\epsilon\to0^+}$ exists for every $x_1,x_2$ and a.e.~$t_1$ and gives $\int_{x_1}^{x_2} \dd{x}n1\upd{dr}(t_1,x,p)$ (because the function $t_1\mapsto \int_{0}^{t_1} \dd{t} \int_{x_1}^{x_2}\dd{x} n1\upd{dr}(t,x,p)$ is absolutely continuous with that derivative). Summing similar results for the other edges, this gives \eqref{equ:integralform} for a.e.~$(x_1,x_2,t_1,t_2),\,x_1<x_2,\,t_1<t_2$.

{\bf IV.} Finally, we show integral form $\Rightarrow$ global conservation form. We first establish that for every $x_1,x_2\in\mathbb R,\,x_1<x_2$, there is a dense subset $\tau(x_1,x_2,p)\subset\mathbb R$ such that $\tau(x_1,x_2,p)\ni t\mapsto \int_{x_1}^{x_2} \dd{x}n1\upd{dr}(t,x,p)\in\mathbb R$ is absolutely continuous. By the above result, for a.e. $(x_1,x_2,t_2)\in\mathbb R^3,\,x_1<x_2$, \eqref{equ:integralform} holds for a.e.~$t_1\in\mathbb R,\,t_1<t_2$. Call the former set $\Omega$, and the latter set $\tau_1(x_1,x_2,t_2,p)$. As it is dense, for every $\epsilon>0$ there is $t_1,t_1'\in \tau_1(x_1,x_2,t_2,p)$ with $\abs{t_1'-t_1}<\epsilon$. The difference of the left-hand side of \eqref{equ:integralform} for $(x_1,x_2,t_1',t_2)$ and for $(x_1,x_2,t_1,t_2)$ is
\begin{equation}
    \int_{x_1}^{x_2} \dd{x}n1\upd{dr}(t_1,x,p) -
    \int_{x_1}^{x_2} \dd{x}n1\upd{dr}(t_1',x,p) + O(\epsilon)
\end{equation}
by boundedness of the integrands in the $t$ integrals, uniformly for $(x_1,x_2,t_2)\in\Omega$. As this difference is zero, this establishes Lipschitz continuity, hence absolute continuity, of $\int_{x_1}^{x_2} \dd{x}n1\upd{dr}(t_1,x,p)$ in $t_1\in \tau_1(x_1,x_2,t_2,p)$, uniformly. As the result does not depend on $t_2$, we may take $\tau(x_1,x_2,p) = \cup_{t_2\in\mathbb R: (x_1,x_2,t_2)\in\Omega} \tau_1(x_1,x_2,t_2,p)$ which is dense in $\mathbb R$, and we conclude that absolute continuity in $t_1$ holds for a.e.~$x_1<x_2$ uniformly. By continuity in $x_1,x_2$ and uniformity it holds for all $x_1<x_2$.

Similar arguments show that for every $t_1,t_2\in\mathbb R,\,t_1<t_2$, there is a dense subset $\chi(t_1,t_2,p)\in\mathbb R$ such that $\chi(t_1,t_2,p)\ni x\mapsto \int_{t_1}^{t_2} \dd{t}nv\upd{dr}(t,x,p)\in\mathbb R$ is absolutely continuous. Other orderings $x_1>x_2$ and $t_1>t_2$ are obtained by changing the signs of the integrals.
\end{proof}

\end{appendices}

\bibliography{refs.bib}

\end{document}